\newtheorem{theorem}{Theorem}
\newtheorem{remark}[theorem]{Remark}
\newtheorem{definition}[theorem]{Definition}
\newtheorem{proposition}[theorem]{Proposition}
\newtheorem{lemma}[theorem]{Lemma}
\newcommand{\mc}{\mathcal}
\newcommand{\PP}{\mathcal{P}}
\newcommand{\QQ}{\mathcal{Q}}
\newcommand{\R}{\mathbb{R}}
\newcommand{\N}{\mathbb{N}}
\newcommand{\E}{\mathbb{E}}
\newcommand{\Q}{\mathbb{Q}}
\newcommand{\sign}{\mathrm{sign}}
\newcommand\norm[1]{\left\lVert#1\right\rVert}
\newcommand{\eps}{{\varepsilon}}
\newcommand{\set}[1]{\mathcal{#1}}
\newcommand{\Aen}{\set{A}_\eps^{(n)}}
\newcommand{\Ben}{\set{B}_\eps^{(n)}}
\newcommand{\Sen}{\set{S}_\eps^{(n)}}
\definecolor{darkgreen}{rgb}{0.0,0.6,0}
\newcommand\const[3][6]{%
    \edef\temporary{round(#3}%
    \expandafter\FPeval\csname#2\expandafter\endcsname
        \expandafter{\temporary:#1)}%
        \pstVerb{/#2 \csname#2\endcsname\space def}%
}
\newcommand{\FigMMSEThird}[2]{

\const{Pa}{0.5 * (#1 - 2*#2 - (#1*(#1- 4*#2))^0.5)}
\const{Pb}{0.5 * (#1 - 2*#2 + (#1*(#1- 4*#2))^0.5)}

\const{MMSEzero}{#1 * #2 / (#1 + #2)}

\const{MaxOrd}{\MMSEzero * 1.1}
\const{MaxAbs}{#1 * 1.1}

\const{MinOrd}{-0.1 * \MaxOrd}
\const{MinAbs}{-0.1 * \MaxAbs}

\const{XUNIT}{1/\MaxAbs * 7}
\const{YUNIT}{1/\MaxOrd * 6}

\const{AbsLin}{#1- #2}
\const{OrdLin}{#2/#1 * (#1 - #2)}

\def\MMSEsa{ #1 sqrt x sqrt neg add dup mul #2 mul #1 sqrt x sqrt neg add dup mul #2 add div}
\def\MMSElin{#1 x neg add  #2 neg add #2 mul #1 div}


\const{MMSEPa}{#2 /(2*#1) * (#1 + (#1*(#1-4*#2))^0.5)}
\const{MMSEPb}{#2 /(2*#1) * (#1- (#1*(#1-4*#2))^0.5)}

\const{MinOrd}{-0.07}
\const{OrdConstant}{0.75}

\begin{figure}[!h]
\begin{center}
\psset{xunit= 60 cm,yunit= 500  cm}				
\begin{pspicture}(-0.02,-0.0004)(0.17,0.0105)			
\fileplot[linecolor=darkgreen]{data/DataTwoPoint_#1_#2.dat}
\fileplot[linecolor=red]{data/Data_DPC_lin_#1_#2.dat}
\fileplot[linecolor=orange]{data/Data_ICw1_#1_#2.dat}
\psline{->}(0,0)(0,\MaxOrd)
\psline{->}(\MinAbs,0)(\MaxAbs,0)
\psplot[plotpoints=200,linecolor=blue]{0}{#1}{ \MMSEsa}
\psplot[plotpoints=200,linecolor=brown]{0}{\AbsLin}{\MMSElin}

\rput[r](-0.2,\MinOrd){$0$}
\rput[u](\MaxAbs,\MinOrd){$P$}
\rput[u](#1,\MinOrd){$Q$}

\rput[u](#1,\MinOrd){$Q$}

\rput[l](\Pa,\MinOrd){$P_1$}
\rput[u](\Pb,\MinOrd){$P_2$}
\psline[linestyle=dashed](0,0.0067)(0.0369,0)
\psdots(0.026135,0.00195)


\rput[l](0.052,0.008){$Q=#1$}
\rput[l](0.052,0.007){$N=#2$}

\rput[r](-0.003,0.009){$\mathsf{MMSE}$}
\rput[r](-0.002,-0.0007){$0$}
\rput[r](0.007,-0.0008){$P_1$}
\rput[r](0.082,-0.0008){$P_2$}
\rput[r](0.105,-0.0007){$P$}

\psline[linestyle=dotted](\Pa,0)(\Pa,\OrdLin)
\psline[linestyle=dotted](\Pb,0)(\Pb,\OrdLin)
\psline[linestyle=dotted](0,\MMSEPa)(\AbsLin,\MMSEPa)
\psline[linestyle=dotted](0,\MMSEPb)(\AbsLin,\MMSEPb)

\psdots(\AbsLin,0)(#1,0)(\Pa,0)(\Pb,0)
\psdot(0,\MMSEzero)
\psdot(0,\MMSEPa)
\psdot(0,\MMSEPb)

\psframe(0.087,0.0087)(0.172,0.0033)
\psline[linecolor=blue](0.090,0.008)(0.095,0.008)
\rput[l](0.097,0.008){$\mathsf{MMSE}_{\ell}(P)$ in  \eqref{eq:MMSElinear}}
\psline[linecolor=brown](0.090,0.007)(0.095,0.007)
\rput[l](0.097,0.007){$ \frac{N\cdot(Q-N-P)}{Q} $ in \eqref{eq:SolutionContinuous}}
\psline[linecolor=darkgreen](0.090,0.006)(0.095,0.006)
\rput[l](0.097,0.006){$(P_{\mathsf{two}}, \mathsf{MMSE}_{\mathsf{two}})$ in \eqref{eq:Power_W}, \eqref{eq:MMSE_W} }
\psline[linecolor=red](0.090,0.005)(0.095,0.005)
\rput[l](0.097,0.005){$\mathsf{MMSE}_{\mathsf{lin+dpc}}(P)$ in \eqref{eq:MMSE_lin_DPC}}
\psline[linecolor=orange](0.090,0.004)(0.095,0.004)
\rput[l](0.097,0.004){$\mathsf{MMSE}_{\mathsf{coord}}(P) $ in \eqref{eq:MMSE_coord}}


\end{pspicture}
\caption{Comparison of the proposed estimation costs $\mathsf{MMSE}_{\mathsf{G}}(P) $ in \eqref{eq:SolutionContinuous}, and $\mathsf{MMSE}_{\mathsf{coord}}(P) $ in \eqref{eq:MMSE_coord}, with Witsenhausen two point strategy $(P_{\mathsf{two}}, \mathsf{MMSE}_{\mathsf{two}})$ in \eqref{eq:Power_W}, \eqref{eq:MMSE_W}, and Grover and Sahai's combination of the linear scheme with DPC scheme $\mathsf{MMSE}_{\mathsf{lin+dpc}}(P)$ in \eqref{eq:MMSE_lin_DPC}. }\label{fig:OptimalMMSE_#1_#2_bis}
\end{center}
\end{figure}}
\begin{document}
%
\title{Power-Estimation Trade-off of
Vector-valued Witsenhausen Counterexample 
with Causal Decoder
\thanks{
The authors gratefully acknowledge the financial support of SRV ENSEA for visits at KTH in Stockholm in 2017 and 2019, and at ETIS in Cergy in 2018. This research has been conducted as part of the Labex MME-DII (ANR11-LBX-0023-01). Part of the research has been supported by Swedish Research Council (VR) under grant 2020-03884. This article was presented in part at the 2018 IEEE Allerton conference \cite{MO18ocdv}, at the 2019 IEEE Information Theory Workshop (ITW) \cite{OM19ccwc}, and at the 2021 IEEE International Symposium on Information Theory (ISIT) \cite{MO21crve}.

Maël Le Treust is with ERMINE team of IRISA UMR 6074, 35000 Rennes, France (e-mail: mael.le-treust@cnrs.fr).

Tobias Oechtering is with Division of Information Science and Engineering, KTH, Stockholm, Sweden (e-mail: oech@kth.se).
}
}
%
%
%

\author{\IEEEauthorblockN{Ma\"{e}l Le~Treust,~\IEEEmembership{Member,~IEEE, }}
\and
\IEEEauthorblockN{Tobias J. Oechtering,~\IEEEmembership{Senior Member,~IEEE}
}}

%
%

\markboth{Journal of \LaTeX\ Class Files,~Vol.~14, No.~8, August~2015}%
{Shell \MakeLowercase{\textit{et al.}}: Bare Demo of IEEEtran.cls for IEEE Journals}
%



\maketitle


\begin{abstract}
The vector-valued extension of the famous Witsenhausen counterexample setup is studied where the encoder, i.e. the first decision maker, non-causally knows and encodes the i.i.d. state sequence and the decoder, i.e. the second decision maker, causally estimates the interim state. The coding scheme is transferred from the finite alphabet coordination problem for which it is proved to be optimal. The extension to the Gaussian setup is based on a non-standard weak typicality approach and requires a careful average estimation error analysis since the interim state is estimated by the decoder. We provide a single-letter expression that characterizes the optimal trade-off between the Witsenhausen power cost and estimation cost. The two auxiliary random variables improve the communication with the decoder, while performing the dual role of the channel input, which also controls the state of the system. Interestingly, we show that a pair of discrete and continuous auxiliary random variables, outperforms both Witsenhausen two point strategy and the best affine policies. The optimal choice of random variables remains unknown.
\end{abstract}



\section{Introduction}
Distributed decision-making systems arise in engineering problems where decentralized agents choose actions based on locally available information as to minimize a common cost function. The information at each agent is either locally observed or received from other agents. Since the 
process of sharing information comes with a cost, agents usually do not have access to the whole information available at all agents. The design of optimal decision strategies for such distributed stochastic networks with non-classical information structures is a long-standing difficult problem. The famous counterexample of Witsenhausen introduced in 1968 in \cite{W68acis} showed that non-linear strategies can outperform the best linear strategy. Until today, the setup serves as important study object to develop a better understanding on the impact of the information structure on the optimal decision strategy design problem \cite{YB2013sncs}.  This has significantly helped to highlight the fundamental difficulty that actions serve two purposes, a control purpose affecting the system state and a communication purpose providing information to other agents. More generally, this problem is also referred to a team decision problem for which the existence of the 
optimal solution has been studied in \cite{GYBL15oteo}.


Although Witsenhausen refuted with his simple two-point counterexample the assertion that a linear policy would be optimal in such a Gaussian setting, the optimal non-linear policy remains unknown. Many researcher have approached the optimization problem with various methods. In the last decade for instance it has been approached with numerical optimization methods \cite{TT17alsa},\cite{KGOS11iscc}, where the latter is based on an iterative joint source-channel coding approach. An asymptotically optimal approximation technique has been presented in \cite{SYL17fmaa}. Analytically,
using results from optimal transport theory, it has been shown in \cite{WV11wcav} 
that the optimal decision strategy is a strictly increasing unbounded piece-wise real analytic function with a real analytic left inverse. More necessary conditions have been derived in \cite{MH15ofam}, by analyzing an equivalent optimization problem on the space of square-integrable quantile functions. In \cite{KC15aoat}, a framework to find the optimal joint distribution for certain stochastic control problems with non-classical information structure has been presented. It is argued that non-classical information structure leads to non-convex problems which makes them difficult to solve. It is therefore proposed to solve a convex relaxation problem with the same objective but larger convex feasible region, that is also a solution to the non-convex problem.  The data-processing inequality plays a critical role in the construction of 
such relaxations convexifying the problem, therewith generalizing the approach by Bansal and Basar \cite{BB87stwn}. For the Witsenhausen problem, this relaxation approach results in a lower bound.

Another approach 
is to consider a multi-letter version of the problem. In a series of works, \cite{GS10wcaa,GPS13aost,GWS15ieat} to mention a few, Grover et al. studied the setup where the decision makers have access to a sequence of observations enabling block-coding strategies. This allowed them to transfer advanced coding techniques \cite{GP80cfcw,KSC08sa,SS09iewr,SCCK05ccas} to the vector-valued Witsenhausen counterexample problem. In one of their  last works \cite{GWS15ieat}, which also provides a good literature overview, they extended the concept of {\it dual} (role of) control to {\it triple} roles by adding an explicit communication task to the problem, highlighting the fundamental tension among the tasks. Most of the results were derived for finite alphabet setup where the concept of strong typicality provides the Markov Lemma, see \cite[Lemma~12.1]{EK11nit}. A rigorous extension of the coding scheme to the Gaussian case has been done by Grover and Sahai in \cite{GS10wcaa}. In \cite{GPS13aost}, approximately lattice-based optimal solutions were obtained for the finite-length vector case. Recently, improved asymptotic bounds have been found in \cite{MBS17ipbf}, using a new vector quantization scheme. Choudhuri and Mitra characterized the optimal power-distortion trade-off for the vector-valued Wistenhausen problem in \cite{CM12owct}, which relies on the coding scheme by Grover et al. in \cite{GS10wcaa} combining linear coding and Costa's dirty-paper-coding \cite{Cwodp}.
Much less work has been done considering the vector-valued Witsenhausen problem with causal processing although several coding techniques have been extended to the causal case, e.g. causal state communication in \cite{CKM13csc} or estimation with a helper in \cite{CSW13ewah}.

In \cite{CZ2011cuic}, Cuff and Zhao considered empirical coordination for a \emph{cascade of controllers} that act on its observed signals where the empirical coordination criterion is a probabilistic statement on the statistics of the joint sequences. In particular, they point out that given a reward function, then the optimal average reward can be found by optimizing over the coordination set. 
An extension to more abstract alphabets has been done in \cite{R13epts}, introducing a new definition of typical sequences and deriving properties using the Glivenko-Cantelli Theorem. Originally, the problem of coordination between agents was introduced in \cite{GHN06ouoc} by Gossner et al., for a two-player team game with asymmetric information. The authors also discussed the case of noisy observations, which relates to the case of noisy communication channels. In \cite{CPC10cc}, the concept has been generalized and the notion of coordination capacity (region) has been introduced, which can be used to characterize the joint behavior of distributed nodes, given communication constraints.
In particular, results for simple multi-source settings considering empirical and strong coordination have been obtained. In \cite{CS2011hcnf}, Cuff and Schieler investigated the case where the action of terminal one has also to be coordinated with the state and terminal two's action. Interestingly, the achievability proof relies on a hybrid coding strategy which can be seen as the multi-letter extension of the best Witsenhausen counterexample decision strategy. 


In \cite{MT14cbcs,MT15ecwt,MT15ctec,LLW2015cisd} and \cite{LLW2015cpia}, empirical coordination capacity results for two terminal settings with side information have been derived, by considering state-dependent channels as well as causal and non-causal encoding and decoding. In \cite{MT15ecwt}, necessary and sufficient conditions for the non-causal encoding and decoding case for a {\it cascade setting}  have been presented, 
which includes the results of the lossless decoding case with correlated source and state presented in \cite{MT14cbcs}. Optimal results have been obtained for the perfect channel case with two-sided side information and the case with independent source and channel. Further, optimal results have been presented for causal/non-causal encoding and decoding including sketches of the achievability and converse proofs, while full proofs were provided in \cite{MT15ctec} and \cite{MT17jeco}. 
In contrast, the authors of \cite{LLW2015cisd} characterize optimal conditions for a setting where both terminals (a.k.a. agents) provide a channel input whose output, that also depends on the system state, is observed by terminal two only. Next, they consider the special case without any channel input from terminal two for which they characterize, in \cite[Theorem 3]{LLW2015cisd}, the optimal solution for non-causal encoder and causal decoder.



An improved understanding of the fundamental distributed decision making problem is of great value due to its wide applications. For instance, Larrousse et al. applied in \cite{LAL2014icit,LLB18cidn} the coordination approach to a two agents distributed power allocation problem  where only one agent is knowledgable about a state and informs the second agent through its actions. The idea of (state) communication through actions is usually known as {\it dual control} where control actions have a second purpose. In \cite{ADLL15icit}, coordination in a two-agent setting with common average payoff function where each agent can control only one variable is considered. Such payoff function includes the Witsenhausen cost function as special case. The authors assumed standard Borel spaces to justify the transfer of  coding results, from finite alphabets to continuous alphabets.

In this work, we study fundamental bounds of a vector-valued Witsenhausen counterexample setup with a non-causal control strategy at the first decision maker and a causal control strategy at the second decision maker. We characterize the fundamental trade-off between the achievable power cost and estimation cost. The achievablity and converse follow from the corresponding empirical coordination coding scheme \cite{MT15ecwt}. The coding scheme in this work is based on the concept of weak typicality with an extension to avoid the use of the Markov Lemma, as done in \cite{VOS18hiwp}. Moreover, we extend average estimation error analysis, as done in \cite{W78trdf}, to deal with an estimate of an interim state and not an i.i.d. source. Since we use weak typicality, the result readily applies to continuous alphabets considering Gaussian distributed memoryless source and noise, as in the Witsenhausen counterexample setup. We next study the power-estimation cost trade-off of control schemes considering different choices of auxiliary random variables. In particular, we consider the class of auxiliary random variables that are jointly Gaussian and characterize the optimal estimation cost for given power. Restricted to Gaussian distributions, the optimal control policy is {\it state  contraction}, which is memoryless and linear. We next derive the estimation cost considering a hybrid case with a discrete and a Gaussian distributed auxiliary random variable, where the discrete random variable encapsulates  the sign of the intermediate state. In the following, numerical discussion we show that this hybrid strategy outperforms the best linear policy,  
likewise Witsenhausen two-point control strategy. The fact that the Witsenhausen two-point strategy and its hybrid coding strategies outperform the best linear strategy in the Gaussian case considering causal decoder implies that the observation from Witsenhausen counterexample extends to block-coding schemes, which might be an interesting observation considering other source-channel coding problems.

The model is presented in Section~\ref{sec:model}, the characterization of the optimal trade-off between power and estimation costs is stated in Section~\ref{sec:MainResult}. In  Section~\ref{sec:ParticularInterest}, we propose two control schemes and we compare their performances to the results from the literature. Our control scheme that involves a discrete auxiliary random variable and a continuous auxiliary random variable outperforms both Witsenhausen two point strategy, and the best affine policies. The conclusion is stated in Section~\ref{sec:conclusion} and the proofs of the results are in Appendices~\ref{sec:achiev}-\ref{sec:ProofPropDPC}.

\IEEEpeerreviewmaketitle

\begin{figure}[!ht]
\begin{center}
\psset{xunit=1cm,yunit=1cm}
\begin{pspicture}(0.5,-0.5)(12,2)
\psframe(2,0)(3,1)

\pscircle(4.2,0.5){0.2}
\rput(4.2,0.5){$+$}
\pscircle(5.6,0.5){0.2}
\rput(5.6,0.5){$+$}
\psframe(6.8,0)(7.8,1)

\psdots[linewidth=2pt](1,1.5)(5.6,1.5)
\rput[u](2.4,1.8){$X_0^{n}\sim \mathcal{N}(0,Q\mathbb{I})$}
\psline[linewidth=1pt]{->}(1,1.5)(1,0.5)(2,0.5)
\psline[linewidth=1pt]{->}(1,1.5)(4.2,1.5)(4.2,0.7)
\psline[linewidth=1pt]{->}(3,0.5)(4,0.5)
\psline[linewidth=1pt]{->}(4.4,0.5)(5.4,0.5)
\psline[linewidth=1pt]{->}(4.9,0.5)(4.9,-0.5)(8.8,-0.5)
\psline[linewidth=1pt]{->}(5.6,1.5)(5.6,0.7)
\psline[linewidth=1pt]{->}(5.8,0.5)(6.8,0.5)
\psline[linewidth=1pt]{->}(7.8,0.5)(8.8,0.5)

\rput[u](1.5,0.8){$X_0^n$}
\rput[u](3.5,0.8){$U_1^n$}
\rput[u](4.9,0.8){$X_1^{n}$}
\rput[u](6.3,0.8){$Y_1^{t}$}
\rput[u](6,1.8){$Z_1^{n}\sim \mathcal{N}(0,N\mathbb{I})$}
\rput[u](8.3,0.8){$U_{2,t}$}
\rput[u](8.3,-0.2){$X_{1,t}$}
\rput(2.5,0.5){$C_1$}
\rput(7.3,0.5){$C_2$}

\end{pspicture}
\caption{The state and the channel noise are drawn according to the i.i.d. Gaussian distributions $X_0^{n}\sim \mathcal{N}(0,Q\mathbb{I})$ and $Z_1^{n}\sim \mathcal{N}(0,N\mathbb{I})$.}\label{fig:model}
\end{center}
\end{figure}
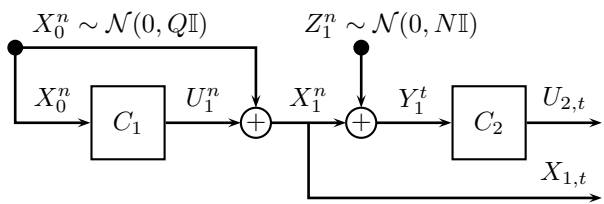


\section{System Model}\label{sec:model}

Throughout this paper, calligraphic fonts, e.g. $\mc{X}_1$, stand for sets, capital letters, e.g. $X_1$, denote random variables while lowercase letters, e.g. $x_1 \in\mc{X}_1$ denote realizations. Sequences of length $n\in\N^{\star} = \N \setminus\{0\}$ of random variables and realizations are denoted respectively by $X_1^n = (X_{1,1}, \ldots, X_{1,t},\ldots, X_{1,n})$ and $x_1^n = (x_{1,1},\ldots, x_{1,t},\ldots,x_{1,n})$, for all $t\in\{1,\ldots,n\}$.

We consider the vector-valued Witsenhausen setup depicted in Fig.~\ref{fig:model}. The notations $\mc{X}_0$, $\mc{U}_1$, $\mc{X}_1$, $\mc{Y}_1$, $\mc{U}_2$ stand for the sets of states, channel inputs, interim states, channel outputs, receiver outputs, that are all equal to the real line $\R$. For $n\in \N^{\star}$, the $n$-time Cartesian product of sets is denoted by $\mc{X}_0^n$. The sequences of states and channel noises are drawn independently according to the i.i.d. Gaussian distributions $X_0^{n}\sim \mathcal{N}(0,Q\mathbb{I})$ and $Z_1^{n}\sim \mathcal{N}(0,N\mathbb{I})$ with $\min(Q,N)>0$, where $\mathbb{I}\in\mathbb{R}^{n\times n}$ denotes the identity matrix. We denote by $X_1$ the \emph{interim state} and $Y_1$ the output of the noisy channel, defined by
\begin{align}
X_1 =& X_0 + U_1 \qquad\qquad\qquad\;\quad\quad  \text{ with }\;X_0 \sim\; \mc{N}(0,Q),\label{eq:Gaussian1} \\
Y_1 =& X_1 + Z_1=X_0 + U_1+ Z_1 \quad \text{ with }\;Z_1 \sim\; \mc{N}(0,N).\label{eq:Gaussian3}
\end{align}

We denote by $\PP_{X_0} = \mc{N}(0,Q)$ the Gaussian probability distribution of the state random variable $X_0$, and we denote by $\PP_{X_1Y_1|X_0U_1}$ the conditional probability distribution corresponding to \eqref{eq:Gaussian1} and \eqref{eq:Gaussian3}.

\begin{definition}\label{def:Code}
For $n\in\N^{\star}$, a ``control design'' with non-causal encoder and causal decoder is a tuple of stochastic functions $c=(f,\{g_t\}_{t\in\{1,\ldots,n\}})$ defined by
\begin{align}
f : \mc{X}_0^n  \longrightarrow   \mc{U}_1^n,\quad
g_t :  \mc{Y}_1^{t}  \longrightarrow \mc{U}_2 ,   \;\forall t \in \{1, \ldots,n\}, \label{eq:Code}
\end{align}
which induces a distribution over the sequences of symbols given by
\begin{align}
&\bigg(\prod_{t=1}^n \PP_{X_{0,t}}\bigg)f_{U_1^n|X_0^n}\bigg(\prod_{i=t}^n \PP_{X_{1,t}Y_{1,t}|X_{0,t}U_{1,t}} \bigg)
\bigg( \prod_{t=1}^n g_{U_{2,t}|Y_1^{t}}\bigg).\label{eq:Distribution}
\end{align}

We denote by $\mc{C}_{\mathsf{d}}(n)$ the set of control designs with non-causal encoder and causal decoder.
\end{definition}

The Witsenhausen counterexample in \cite{W68acis}, investigates the trade-off between two cost functions, a power cost for the channel input $U_1$, and a decoder estimation cost of the interim state $X_1$. We evaluate these two costs by considering their respective averages over the sequences of symbols.

\begin{definition}
We define the $n$-stage costs associated with control design $c\in \mc{C}_{\mathsf{d}}(n)$ by
\begin{align}
\gamma^n_{\mathsf{p}}(c) =&
\begin{cases}
 \E\Big[\frac{1}{n}\sum_{t=1}^n {U_{1,t}}^2\Big] &\text{if it exists,}\\
+\infty &\text{otherwise,}
\end{cases}\\
\gamma^n_{\mathsf{s}}(c) =&\begin{cases}
 \E\Big[ \frac{1}{n}\sum_{t=1}^n (X_{1,t} - U_{2,t})^2\Big] &\text{if it exists,}\\
+\infty &\text{otherwise.}
\end{cases}
\end{align}

The pair of costs $(P,S)\in\R^2$ is achievable if for all $\varepsilon>0$, there exists $\bar{n}\in\N^{\star}$ such that for all $n\geq \bar{n}$, there exists a control design $c\in\mc{C}_{\mathsf{d}}(n)$ such that
\begin{align}
\Big| P - \gamma^n_{\mathsf{p}}(c) \Big| + 
\Big| S - \gamma^n_{\mathsf{s}}(c) \Big| \leq \varepsilon.\label{eq:EqualConstraint}
\end{align}
\end{definition}

The goal is to characterize the set of achievable pair of costs $(P,S)\in\R^2$, which we call the Witsenhausen costs.

\section{Coding Result}\label{sec:MainResult}

In this section, we extend the coordination coding result of \cite[Theorem 4]{MT17jeco} to the case where state and channel noise are Gaussian random variables, other random variables are real-valued, and we consider the power and estimation cost functions of the vector-valued Witsenhausen counterexample. We provide a characterization of the  achievable pairs of costs $(P,S)$.

\begin{theorem}\label{theo:MainResult}
The pair of Witsenhausen costs $(P,S)$ is achievable if and only if  there exists a joint probability distribution that decomposes according to 
\begin{align}
&\PP_{X_0} \QQ_{U_1W_1W_2|X_0} \PP_{X_1Y_1|X_0U_1} \QQ_{U_2|W_2Y_1},\label{eq:TargetDistribution0}
\end{align}
 such that 
 \begin{align}
& I(W_1;Y_1|W_2) - I(W_1,W_2;X_0)\geq 0,\label{eq:TheoIC}\\
&P = \E_{\QQ}\big[U_{1}^2\big],\qquad
S = \E_{\QQ}\big[(X_{1} - U_{2})^2\big],\label{eq:TheoCost}
\end{align}
where $W_1$ and $W_2$ are discrete or continuous auxiliary random variables.
\end{theorem}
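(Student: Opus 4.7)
The plan is to prove the two directions separately, adapting the finite-alphabet coordination coding result \cite[Theorem 4]{MT17jeco} to the Gaussian alphabet and to an estimation cost at the causal decoder.

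\textbf{Achievability.} Given the target distribution in \eqref{eq:TargetDistribution0}, I would build a two-layer Gelfand--Pinsker-type scheme. Draw $2^{n R_2}$ codewords $W_2^n$ i.i.d.\ from $\QQ_{W_2}$ and, for each, $2^{n R_1}$ codewords $W_1^n$ i.i.d.\ from $\QQ_{W_1|W_2}$, organised into bins of respective sizes $2^{n\tilde R_2}$ and $2^{n\tilde R_1}$. Having non-causal access to $X_0^n$, the encoder selects a pair $(W_1^n,W_2^n)$ jointly weakly typical with $X_0^n$---feasible with high probability when $\tilde R_1+\tilde R_2 \geq I(W_1,W_2;X_0)+\delta$---and produces $U_1^n$ symbol by symbol through the test channel $\QQ_{U_1|X_0 W_1 W_2}$. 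The causal decoder recovers $(W_1^n,W_2^n)$ jointly from $Y_1^n$ with vanishing per-symbol delay, which is possible when $(R_1+R_2)-(\tilde R_1+\tilde R_2) \leq I(W_1,W_2;Y_1)-\delta$; eliminating the bin rates yields \eqref{eq:TheoIC}. The per-symbol action $U_{2,t}$ is then drawn according to $\QQ_{U_2|W_{2,t}Y_{1,t}}$. Because the Gaussian alphabet precludes the Markov Lemma, the covering and packing lemmas must be re-established via the information-density/weak-typicality framework of \cite{VOS18hiwp}.

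\textbf{Converse.} Assume $(P,S)$ is achieved by a sequence $c^{(n)}\in\mc{C}_{\mathsf{d}}(n)$. Introduce a uniform time-sharing index $T\in\{1,\dots,n\}$ independent of everything else and, in the style of the Gelfand--Pinsker converse, define $W_{2,t}:=Y_1^{t-1}$ and $W_{1,t}:=(X_{0,t+1}^n,Y_1^{t-1})$. A Csisz\'ar-sum manipulation, combined with the causality constraint $U_{2,t}=g_t(Y_1^t)$, gives
\begin{align*}
0\leq\sum_{t=1}^n\Bigl[I(W_{1,t};Y_{1,t}\mid W_{2,t})-I(W_{1,t},W_{2,t};X_{0,t})\Bigr].
\end{align*}
Setting $W_1:=(W_{1,T},T)$, $W_2:=(W_{2,T},T)$ single-letterises this into \eqref{eq:TheoIC}; causality guarantees that $U_{2,T}$ is a function of $(W_{2,T},Y_{1,T},T)$, so the induced law factors as in \eqref{eq:TargetDistribution0}. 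The cost identities \eqref{eq:TheoCost} follow by averaging over $T$ and extracting a convergent subsequence of joint distributions via Prokhorov's theorem, using the power bound $\gamma^n_{\mathsf{p}}(c^{(n)})\to P$ to guarantee tightness and uniform integrability of the second moments.

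The hardest step will be controlling the estimation cost \eqref{eq:TheoCost} in the Gaussian setting without strong typicality. The interim state $X_{1,t}=X_{0,t}+U_{1,t}$ is unbounded and not i.i.d.\ conditional on the codebook, so the atypical contribution to the expected squared error must be absorbed by a uniform-integrability argument coupling the power constraint on $U_1^n$, the Gaussian tails of $X_0^n$ and $Z_1^n$, and the decoder's measurable selection of $U_{2,t}$---an adaptation of Wyner's truncation trick \cite{W78trdf} to the non-i.i.d.\ interim state that lies at the heart of both the achievability cost analysis and the converse limit passage.
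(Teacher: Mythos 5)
Your converse is essentially the paper's argument: Csisz\'ar sum identity, identification $W_{2,t}=Y_1^{t-1}$ and $W_{1,t}=(X_{0,t+1}^n,Y_1^{t-1})$ (a harmless enlargement of the paper's $W_{1,t}=X_{0,t+1}^n$), a uniform time-sharing index $T$, and verification of the two Markov chains forced by causal decoding; your closing paragraph also correctly identifies the need to extend Wyner's truncation argument to the non-i.i.d.\ interim state $X_1=X_0+U_1$. (The Prokhorov/subsequence step is more machinery than the paper uses -- the cost identities $\gamma^n_{\mathsf{p}}(c)=\E[U_{1,T}^2]$ and $\gamma^n_{\mathsf{s}}(c)=\E[(X_{1,T}-U_{2,T})^2]$ hold exactly for each $n$ -- but it is not wrong.)

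The genuine gap is in the achievability. Your decoder ``recovers $(W_1^n,W_2^n)$ jointly from $Y_1^n$'' within a single block, but a causal decoder must already know $W_{2,t}$ at time $t$ in order to draw $U_{2,t}\sim\QQ_{U_2|W_{2,t}Y_{1,t}}$; no block decoder has ``vanishing per-symbol delay,'' and by the time $W_2^n$ could be identified the actions $U_{2,1},\dots,U_{2,n}$ have already been emitted. Moreover, a scheme in which the pair $(W_1^n,W_2^n)$ is jointly decoded from $Y_1^n$ would at best yield the constraint $I(W_1,W_2;Y_1)-I(W_1,W_2;X_0)\geq 0$, which is strictly more permissive than \eqref{eq:TheoIC} because $I(W_1,W_2;Y_1)=I(W_2;Y_1)+I(W_1;Y_1|W_2)$; the converse shows the extra term $I(W_2;Y_1)$ cannot be harvested by a causal decoder, so the scheme as described cannot be repaired by sharpening the rate bookkeeping. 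The missing idea is the block-Markov structure with backward encoding used in the paper: the index $m_b$ of the codeword $W_2^n(m_b)$ covering block $b$ (at rate $R\geq I(X_0;W_2)$) is embedded into the $W_1$ codeword transmitted during block $b-1$, so the decoder, having decoded $W_1^n(m_b,\ell_{b-1})$ at the end of block $b-1$ from $(Y_{1,b-1}^n,W_{2,b-1}^n)$, knows $W_{2,b}^n$ at the start of block $b$ and never needs $W_1^n$ of the current block to act. The covering conditions $R\geq I(X_0;W_2)$, $R_L\geq I(W_1;X_0,W_2)$ together with the packing condition $R+R_L\leq I(W_1;Y_1,W_2)$ are exactly what produce the asymmetric constraint \eqref{eq:TheoIC}, and an initialization sub-block (Gel'fand--Pinsker/Costa coding of $m_1$) is needed to bootstrap the chain. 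Without this structure the ``if'' direction is not established.
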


The achievability proof of Theorem \ref{theo:MainResult} relies on a block-Markov coding scheme and an adequate notion of weak typicality, inspired from the techniques of \cite{VOS18hiwp}. The achievability and converse proofs are stated in App.\ref{sec:achiev}. Entropy and mutual information are defined using the Radon-Nikodym derivative, which  is briefly recapitulated in the preliminaries in App.\ref{sec:achiev_prelim}.

\begin{remark}
The probability distributions in \eqref{eq:TargetDistribution0} satisfy the Markov chains
\begin{align}
\begin{cases}
(X_1,Y_1)  -\!\!\!\!\minuso\!\!\!\!- (X_0 ,U_1) -\!\!\!\!\minuso\!\!\!\!-  (W_1,W_2) ,\\ 
U_2 -\!\!\!\!\minuso\!\!\!\!- (Y_1 , W_2 ) -\!\!\!\!\minuso\!\!\!\!-  (X_0,X_1,U_1, W_1 ).
\end{cases}\label{eq:MarkovChains}
\end{align}
The causality condition prevents the controller $C_2$ to recover $W_1$ which induces the second Markov chain of \eqref{eq:MarkovChains}. The first and second Markov chains are induced by the network topology.
\end{remark}

\begin{remark}
The information constraint \eqref{eq:TheoIC} reformulates as 
\begin{align}
I(W_1;Y_1,W_2) - I(W_1;X_0,W_2)\geq I(X_0;W_2).
\end{align}
The terms $I(X_0;W_2)$ corresponds to the quantization of the state $X_0$ via the auxiliary random variable $W_2$. The expression $I(W_1;Y_1,W_2) - I(W_1;X_0,W_2)$ stands for the capacity of a two-sided state dependent channel where the encoder observes $(X_0,W_2)$ and the decoder observes $W_2$. Intuitively, $W_1$ is used to tune the state-dependent channel so as to increase its capacity, as in \cite{GP80cfcw}, in order to refine the quantization of the state $X_0$, via the auxiliary random variable $W_2$. 
\end{remark}

In order to investigate the region of achievable pairs of Witsenhausen costs $(P,S)$, we focus on its boundary. We fix the power cost to some parameter $P\geq0$, and we investigate the optimal estimation cost at the decoder. 

\begin{definition}
Given a power cost parameter $P \geq0$, the optimal estimation cost $\mathsf{MMSE}(P)$ is the solution of the optimization problem 
\begin{align}
\mathsf{MMSE}(P) =& \inf_{\QQ \in \Q(P)}{\E_{\QQ}\big[(X_{1} - U_{2})^2\big]},\label{eq:MMSEContinuous} \\
\Q(P) =& \bigg\{ \big( \QQ_{U_1W_1W_2|X_0}, \QQ_{U_2|W_2Y_1}\big) \;\text{ s.t. }\; P = E_{\QQ}\big[U_{1}^2\big], \nonumber \\
&\qquad I(W_1;Y_1|W_2) - I(W_1,W_2;X_0)\geq0   \bigg\}.\label{eq:InformationConstraintContinuous}
\end{align}
\end{definition}

The notation $\mathsf{MMSE}(P)$ in \eqref{eq:MMSEContinuous} recalls that the decoder estimation cost is the Minimum Mean Square Error estimation cost. For such objective, the decoder optimal decision policy is well known and given by the conditional expectation stated in the following lemma.

\begin{lemma}\label{lemma:OptimalU2}
Given a power cost parameter $P \geq0$, the optimal estimation cost $\mathsf{MMSE}(P)$ satisfies
\begin{align}
\mathsf{MMSE}(P) =& \inf_{\QQ \in \Q_{\mathsf{c}}(P)}{\E_{\QQ}\Big[(X_{1} - \E[X_1|W_2,Y_1])^2\Big|W_2,Y_1\Big]},\label{eq:MMSEContinuousE} \\
 \Q_{\mathsf{c}}(P) =& \bigg\{  \QQ_{U_1W_1W_2|X_0} \;\text{ s.t. }\; P = E_{\QQ}\big[U_{1}^2\big], \nonumber \\
&\qquad  I(W_1;Y_1|W_2) - I(W_1,W_2;X_0)\geq0   \bigg\}.\label{eq:InformationConstraintContinuousE}
\end{align}
\end{lemma}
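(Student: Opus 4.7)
Once a distribution $\QQ_{U_1W_1W_2|X_0}$ in $\Q_{\mathsf{c}}(P)$ is fixed, the joint law of $(X_0,U_1,W_1,W_2,X_1,Y_1)$ is determined through $\PP_{X_0}$ and $\PP_{X_1Y_1|X_0U_1}$, while the decoder kernel $\QQ_{U_2|W_2Y_1}$ plays no role in the information constraint \eqref{eq:InformationConstraintContinuous}. The plan is therefore to decouple the two infima in \eqref{eq:MMSEContinuous}, solve the inner minimization over $\QQ_{U_2|W_2Y_1}$ as a standard conditional MMSE problem, and then take the outer infimum over the remaining $\QQ_{U_1W_1W_2|X_0}\in \Q_{\mathsf{c}}(P)$. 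The second Markov chain in \eqref{eq:MarkovChains}, namely $U_2-\!\!\!\!\minuso\!\!\!\!-(Y_1,W_2)-\!\!\!\!\minuso\!\!\!\!-X_1$, which is structurally imposed by the form of the distribution \eqref{eq:TargetDistribution0}, is what makes the inner problem reduce to a classical orthogonality-principle computation.

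Concretely, for any admissible kernel $\QQ_{U_2|W_2Y_1}$, I would apply the tower property and decompose
\begin{align*}
\E_\QQ\!\left[(X_1-U_2)^2\right]
=\E_\QQ\!\left[\,\E_\QQ\!\left[(X_1-U_2)^2\,\big|\,W_2,Y_1\right]\right],
\end{align*}
then expand the inner conditional expectation as
\begin{align*}
\E_\QQ\!\left[(X_1-\E[X_1|W_2,Y_1])^2\,\big|\,W_2,Y_1\right]
+\left(\E[X_1|W_2,Y_1]-\E[U_2|W_2,Y_1]\right)^2
+\mathrm{Var}(U_2|W_2,Y_1).
\end{align*}
The cross-term between $X_1-\E[X_1|W_2,Y_1]$ and $U_2-\E[U_2|W_2,Y_1]$ vanishes because these two are conditionally uncorrelated given $(W_2,Y_1)$, as a consequence of the Markov chain above. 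The last two summands are nonnegative and simultaneously zero precisely when $U_2=\E[X_1|W_2,Y_1]$ almost surely, which is a legitimate $\sigma(W_2,Y_1)$-measurable (deterministic) kernel. This identifies the inner minimizer, gives the inner minimum value $\E_\QQ[(X_1-\E[X_1|W_2,Y_1])^2\,|\,W_2,Y_1]$, and lets me rewrite \eqref{eq:MMSEContinuous} as \eqref{eq:MMSEContinuousE} after taking the outer infimum over $\QQ_{U_1W_1W_2|X_0}\in\Q_{\mathsf{c}}(P)$.

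The only technical point I would need to verify is that the conditional expectation $\E[X_1|W_2,Y_1]$ is well defined as an element of $L^2$, so that the orthogonal decomposition is licit and the infimum in \eqref{eq:MMSEContinuous} is finite. This follows immediately from $X_1=X_0+U_1$ together with $\E[X_0^2]=Q<\infty$ and the power constraint $\E_\QQ[U_1^2]=P<\infty$; equivalently, one can restrict attention a priori to distributions in $\Q_{\mathsf{c}}(P)$ with $P<\infty$, since otherwise both sides of \eqref{eq:MMSEContinuousE} are trivially compared. I do not expect any genuine obstacle: the only mildly delicate aspect is carrying the conditional independence through the abstract (continuous-alphabet) setting, but this is standard once the Markov chain \eqref{eq:MarkovChains} is in hand, and the information constraint is simply inherited unchanged from the definition of $\Q(P)$ in \eqref{eq:InformationConstraintContinuous}.
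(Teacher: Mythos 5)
Your proposal is correct and follows essentially the same route as the paper, whose proof is just the one-line observation that for any $\QQ_{X_1W_2Y_1}$ the choice $U_2=\E[X_1|W_2,Y_1]$ minimizes $\E_{\QQ}\big[(X_{1} - U_2)^2\big]$; your orthogonality-principle decomposition and the $L^2$ check are simply a careful expansion of that same fact.
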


\begin{proof}[Lemma \ref{lemma:OptimalU2}]
For all probability distribution $\QQ_{X_1W_2Y_1}$, the random variable $U_2=\E[X_1|W_2,Y_1]$ minimizes $\E_{\QQ}\big[(X_{1} - U_2)^2\big]$.
\end{proof}

Note that the remaining optimization problem \eqref{eq:MMSEContinuousE} is difficult to solve since the domain is the set of real-valued distributions. In the next section, we investigate this optimization problem by considering additional assumptions that restrict the set of conditional distributions $\QQ_{U_1W_1W_2|X_0}$.

\section{Control schemes of particular interest}\label{sec:ParticularInterest}

We restrict our attention to specific choices for the  auxiliary random variables $W_1$ and $W_2$. We consider both $W_1$ and $W_2$ are Gaussian in Sec.~\ref{sec:GaussianW1W2}, whereas in Sec.~\ref{sec:GaussianW1dDiscreteW2}, the random variable $W_2$ is discrete and $W_1$ is Gaussian.

\subsection{Gaussian auxiliary random variables $W_1$ and $W_2$}\label{sec:GaussianW1W2}

We focus our attention to the class of jointly Gaussian  random variables $(X_0,U_1,W_1,W_2,X_1,Y_1,U_2)$.

\begin{definition}
Given a power cost parameter $P \geq0$, we define the optimal estimation cost obtained with Gaussian random variables
\begin{align}
\mathsf{MMSE}_{\mathsf{G}}(P)  =& \inf_{\QQ \in \Q_{\mathsf{G}}(P)}{\E_{\QQ}\Big[(X_{1} - \E[X_1|W_2,Y_1])^2\Big|W_2,Y_1\Big]},\label{eq:MMSEContinuousEG} \\
\Q_{\mathsf{G}}(P) =& \bigg\{  \QQ_{U_1W_1W_2|X_0} \;\text{ is conditionally Gaussian and }\; P = E_{\QQ}\big[U_{1}^2\big], \nonumber \\
&\qquad \qquad \quad I(W_1;Y_1|W_2) - I(W_1,W_2;X_0)\geq0   \bigg\}.\label{eq:InformationConstraintContinuousEG}
\end{align}
\end{definition}

Note that if $(X_0,U_1,W_2)$ are Gaussian, then $\E[X_1|W_2,Y_1]$ is also Gaussian.

\begin{definition}
Given a power cost parameter $P \geq0$, we consider the linear scheme defined by 
\begin{align}
U_{1,\ell}(P)=& 
\begin{cases}
-\sqrt{\frac{P}{Q}} X_0&\text{ if } P\in[0,Q],\\
-X_0+\sqrt{P-Q} \qquad\qquad &\text{ otherwise.}
\end{cases}\label{eq:MMSElinearStrat}
\end{align}
The linear estimation cost function given by
\begin{align}
\mathsf{MMSE}_{\ell}(P)=& 
\begin{cases}
\frac{ \big(\sqrt{Q}-  \sqrt{P} \big)^2 \cdot N}{ \big(\sqrt{Q}-  \sqrt{P} \big)^2 + N}\quad &\text{ if } P\in[0,Q],\\
0 \qquad\qquad &\text{ otherwise.}
\end{cases}\label{eq:MMSElinear}
\end{align}
\end{definition}

Note that if $P\geq Q$, the interim state $X_1$ can be canceled and the offset $\sqrt{P-Q}$ 
is only included to meet the power constraint with equality, as in \eqref{eq:EqualConstraint}. In the linear scheme $U_{1,\ell}(P)$, the channel input is used to cancel the state $X_0$. The next lemma is a reformulation of \cite[Lemma 11]{W68acis}, which shows that $\mathsf{MMSE}_{\ell}(P)$ is obtained by using the best linear scheme.

\begin{lemma}\label{lemma:LinearScheme}
We consider the linear strategy
\begin{align}
U_1 = a \cdot X_0 +b,
\end{align}
with parameters $(a,b)\in \R^2$, such that to match the power cost constraint $\E\big[{U_1}^2\big]= a^2 Q + b^2 = P$. The optimal estimation cost is given by
\begin{align}
\inf_{(a,b)\in R^2,\atop a^2 Q + b^2 = P}\E\Big[(X_1 - \E[X_1|Y_1])^2)\Big|Y_1\Big] 
=\mathsf{MMSE}_{\ell}(P),
\end{align}
which is achieved by the strategy $U_{1,\ell}(P)$ defined in \eqref{eq:MMSElinearStrat}.
\end{lemma}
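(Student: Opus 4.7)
The strategy is to reduce the problem to a one-parameter optimization by exploiting the Gaussian structure. Setting $U_1 = a X_0 + b$ in the dynamics \eqref{eq:Gaussian1}--\eqref{eq:Gaussian3} gives the affine expressions
\begin{align}
X_1 = (1+a)X_0 + b, \qquad Y_1 = (1+a)X_0 + b + Z_1,
\end{align}
so that $(X_0, X_1, Y_1)$ are jointly Gaussian. Consequently the conditional mean $\E[X_1|Y_1]$ is the usual linear regression estimator and, since Gaussian errors are homoscedastic, the conditional variance is deterministic:
\begin{align}
\E\!\left[(X_1 - \E[X_1|Y_1])^2 \,\middle|\, Y_1\right] = \mathrm{Var}(X_1) - \frac{\mathrm{Cov}(X_1,Y_1)^2}{\mathrm{Var}(Y_1)} = \frac{(1+a)^2 Q\, N}{(1+a)^2 Q + N}.
\end{align}
The first step is therefore to observe that the MMSE depends only on $a$ (not on $b$), and that the map $a \mapsto (1+a)^2 Q\, N / ((1+a)^2 Q + N)$ is strictly increasing in $(1+a)^2$.

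The second step is to minimize $(1+a)^2$ subject to $a^2 Q + b^2 = P$, i.e. $a^2 \leq P/Q$, or equivalently $a \in [-\sqrt{P/Q}, \sqrt{P/Q}]$. I would split into two cases. When $P \leq Q$, the admissible interval lies to the right of $-1$, so the minimum of $(1+a)^2$ is attained at the left endpoint $a = -\sqrt{P/Q}$, which forces $b = 0$ and yields $(1+a)^2 Q = (\sqrt{Q}-\sqrt{P})^2$; plugging into the MMSE expression gives the first branch of \eqref{eq:MMSElinear}. When $P > Q$, the choice $a = -1$ is feasible, which sets $(1+a)^2 Q = 0$ and uses the remaining budget for $b = \sqrt{P-Q}$, matching \eqref{eq:MMSElinearStrat} and yielding zero estimation cost, i.e. the second branch of \eqref{eq:MMSElinear}.

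The third step is to verify that the minimizers are attained by the strategy $U_{1,\ell}(P)$ defined in \eqref{eq:MMSElinearStrat}, by direct substitution of $(a,b) = (-\sqrt{P/Q}, 0)$ or $(a,b) = (-1, \sqrt{P-Q})$. There is no substantive obstacle: the Gaussianity of $(X_0, Z_1)$ turns the conditional expectation into an explicit rational function of $a$, and the constrained optimization is a one-dimensional exercise. The only point requiring mild care is the case split at $P = Q$, where both expressions agree and yield $\mathsf{MMSE}_\ell(Q) = 0$, and the fact that $b$ is irrelevant for the MMSE but is used to tighten the power constraint to equality, as required by \eqref{eq:EqualConstraint}.
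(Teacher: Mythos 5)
Your proposal is correct and follows essentially the same route as the paper's proof: compute the conditional variance $\frac{(1+a)^2QN}{(1+a)^2Q+N}$ (you via the Gaussian regression formula, the paper via the explicit conditional density), observe it is independent of $b$ and monotone in $(1+a)^2$, and then minimize $(1+a)^2$ over the feasible interval $a^2\leq P/Q$ with the same case split at $P=Q$.
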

For the sake of clarity, we also provide the proof of Lemma \ref{lemma:LinearScheme}, in App. \ref{sec:ProofLemmaLinear}.

\begin{theorem}
\label{theo:CharacterizationContinuousRV}
Suppose that $Q>4N$, we define the parameters
\begin{align}
P_1 =& \frac{1}{2} \Big(Q - 2N - \sqrt{Q\cdot(Q-4N)}\Big),\\
P_2 =& \frac{1}{2} \Big(Q - 2N +\sqrt{Q\cdot(Q-4N)}\Big).
\end{align}

The optimal estimation cost obtained with Gaussian random variables is given by
\begin{align}
\mathsf{MMSE}_{\mathsf{G}}(P) =&
\begin{cases}
\frac{N\cdot(Q-N-P)}{Q} & \text{if } Q> 4N \text{ and } P\in[P_1,P_2],\\
\mathsf{MMSE}_{\ell}(P) & \text{otherwise. }
\end{cases}
\label{eq:SolutionContinuous}
\end{align}
\end{theorem}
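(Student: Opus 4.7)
The strategy is to reduce the infinite-dimensional optimization in \eqref{eq:MMSEContinuousEG} to a finite-parameter Gaussian problem by exploiting invariance of mutual information and conditional variance under invertible affine transformations of $(W_1, W_2)$, and then applying Costa's dirty-paper coding result to eliminate $W_1$. As a starting point I would adopt the canonical parameterization $U_1 = \alpha X_0 + V$ with $V \perp X_0$, $\mathrm{Var}(V) = \sigma_V^2$, and $\alpha^2 Q + \sigma_V^2 = P$; take $W_2$ to be a jointly Gaussian function of $(X_0, V)$ plus independent Gaussian noise; and treat $W_1$ as a further free Gaussian degree of freedom.

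Next I would rewrite the information constraint via the chain rule as
\[
I(W_1; Y_1 | W_2) - I(W_1; X_0 | W_2) \geq I(W_2; X_0),
\]
i.e.\ a Gel'fand--Pinsker rate must dominate a quantization rate. Conditioned on $W_2$, the effective channel is $Y_1 - \E[Y_1 | W_2] = (1+\alpha)(X_0 - \E[X_0 | W_2]) + V + Z_1$, with the state $X_0 - \E[X_0|W_2]$ Gaussian and known non-causally to the encoder. By Costa's theorem, the supremum of the left-hand side over jointly Gaussian $W_1$ equals $\tfrac{1}{2}\log(1 + \sigma_V^2/N)$, independent of the conditional state variance, so the information constraint reduces to $\sigma_V^2/N \geq 2^{2 I(W_2; X_0)} - 1$ and $W_1$ is eliminated from the optimization.

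Then I would compute $\mathrm{Var}(X_1 | W_2, Y_1)$ in closed form as the Schur complement of the covariance matrix of $(X_1, W_2, Y_1)$ and minimize the resulting expression over $(\alpha, \sigma_V^2)$ and the parameters of $W_2$, subject to the power equality $\alpha^2 Q + \sigma_V^2 = P$ and the reduced constraint above. I expect two candidate optima to emerge: an interior solution where the Costa constraint is tight and the MMSE equals $\tfrac{N(Q-N-P)}{Q}$, corresponding to a genuine joint use of $W_1$ and $W_2$; and a boundary solution where $W_2$ is degenerate (independent of $X_0$) and the setup collapses to a purely linear scheme, yielding $\mathsf{MMSE}_{\ell}(P)$ via Lemma~\ref{lemma:LinearScheme}.

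Finally I would determine the regime of validity by comparing the two candidate values. Setting them equal leads to the quadratic $(P+N)^2 = Q P$, whose roots are exactly $P_1$ and $P_2$; its discriminant $Q(Q-4N)$ is nonnegative precisely when $Q \geq 4N$, which explains the threshold in the statement. Continuity of $\mathsf{MMSE}_{\mathsf{G}}(P)$ at the boundary is automatic from this matching, and the interior bound strictly improves on the linear one on the closed interval $[P_1, P_2]$ when $Q > 4N$. I expect the main obstacle to be the third step: showing that the global minimum of the reduced Gaussian optimization is attained by one of the two candidates, which requires a careful Lagrangian or case analysis to rule out intermediate configurations of the Gaussian parameters.
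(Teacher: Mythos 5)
Your overall architecture matches the paper's: eliminate $W_1$ via a Gel'fand--Pinsker/Costa argument, reduce to a finite-parameter Gaussian covariance optimization, identify a linear candidate and a tangent-line candidate, and locate the switch at the roots of $(P+N)^2=QP$, whose discriminant $Q(Q-4N)$ explains the threshold. That endgame is correct. However, your key reduction step is flawed. Conditioned on $W_2$, the residual Costa input is $U_1-\E[U_1\mid X_0,W_2]$, so the supremum of $I(W_1;Y_1\mid W_2)-I(W_1;X_0\mid W_2)$ over Gaussian $W_1$ is $\tfrac12\log\big(1+\mathrm{Var}(U_1\mid X_0,W_2)/N\big)$, \emph{not} $\tfrac12\log(1+\sigma_V^2/N)$ with $\sigma_V^2=\mathrm{Var}(U_1\mid X_0)$. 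The two agree only if $W_2$ is independent of the innovation $V=U_1-\E[U_1\mid X_0]$, and the actual optimizer violates this: in Lemma~\ref{lemma:OptimalRho1Rho3} the optimal covariance has $\rho_3=0$ with $\rho_1,\rho_2\neq 0$, which forces $\E[W_2V]=-\rho_1\rho_2\sqrt{P\,\mathrm{Var}(W_2)}\neq 0$ (for $(Q,N,P)=(10,1,3)$ one gets $\mathrm{Var}(U_1\mid X_0)=1.4$ versus $\mathrm{Var}(U_1\mid X_0,W_2)\approx 0.54$). As a result your reduced constraint $\sigma_V^2/N\geq 2^{2I(W_2;X_0)}-1$ is a strict relaxation whose minimum undershoots the theorem: e.g.\ for $P\geq\sqrt{QN}$ (which lies inside $[P_1,P_2]$) one can put the covariance on the PSD boundary so that $X_1$ is a deterministic function of $W_2$ — objective $0$ — while still satisfying your relaxed constraint, so the lower bound fails there. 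Conversely, if you repair the reduction by imposing $W_2\perp V$, the feasible class no longer contains the optimizer and $N(Q-N-P)/Q$ is no longer achievable. The paper sidesteps this by using the one-sided bound $I(W_1;Y_1|W_2)-I(W_1,W_2;X_0)\leq I(U_1;Y_1|W_2,X_0)-I(W_2;X_0)$ for the converse and by explicitly constructing Costa's $W_1$ with input $U_0=U_1-\E[U_1\mid X_0,W_2]$ for achievability.

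Beyond this, the step you explicitly defer is where most of the work lies: after the (corrected) elimination of $W_1$, one must minimize $\mathrm{Var}(X_1\mid W_2)$ over the three correlation parameters of $(X_0,W_2,U_1)$ subject to $\tfrac{P}{N}(1-\rho_1^2-\rho_2^2-\rho_3^2+2\rho_1\rho_2\rho_3)\geq\rho_1^2$ (Lemmas~\ref{lemma:EntropyComputationRho} and \ref{lemma:OptimalRho1Rho3}); the paper solves for $\rho_2$ on the constraint boundary, then for $\rho_3$ as a function of $\rho_1$, and shows the resulting piecewise function of $\rho_1^2$ has an interior minimizer exactly when $Q>4N$ and $P\in[P_1,P_2]$, and a boundary minimizer (the linear scheme) otherwise. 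Note also that the achievability of the middle segment has an elementary alternative you could use: time-sharing the linear schemes $U_{1,\ell}(P_1)$ and $U_{1,\ell}(P_2)$ shows $\mathsf{MMSE}_{\mathsf{G}}$ is at most the lower convex envelope of $\mathsf{MMSE}_{\ell}$, which is exactly the claimed expression; the hard part is the matching lower bound, and that is precisely what your relaxed constraint cannot deliver.
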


The proof of Theorem \ref{theo:CharacterizationContinuousRV} is stated in App. \ref{sec:ProofTheoGaussian}. The estimation cost in  \eqref{eq:SolutionContinuous} can be obtained by using, either a time sharing strategy between the two linear schemes $U_{1,\ell}(P_1)$ and $U_{1,\ell}(P_2)$, when $Q> 4N$ and $P\in[P_1,P_2]$, and otherwise with the linear scheme $U_{1,\ell}(P)$. This result shows that, under the Gaussian assumption,  memoryless policies are optimal so that these policies are also optimal for the original scalar Witsenhausen counterexample setup restricted to Gaussian random variables. However, as pointed out by Witsenhausen  in \cite{W68acis}, the Gaussian assumption is a strong restriction in the original scalar model which induces control designs that are generally not optimal.


\subsection{Gaussian auxiliary random variable $W_1$ and discrete $W_2$}\label{sec:GaussianW1dDiscreteW2}

In this section, we assume that $P\leq Q$ and we consider that $W_2$ is a discrete auxiliary random variable, equal to the sign of the interim random variable $X_1$, 
\begin{align}
W_2 =& \sign(X_1).\label{eq:discreteW2}
\end{align}
We assume that the random variables $(X_0,U_1)$ are centered jointly Gaussian, distributed according to  $\mathcal{N}(0,K)$, with covariance matrix 
\begin{align}
K = 
\begin{pmatrix}
Q &\rho\sqrt{PQ}\\
\rho\sqrt{PQ}& P
\end{pmatrix},\label{eq:CorrelationMatrix}
\end{align}
depending on the correlation parameter $\rho\in[-1,1]$. 

Given a correlation parameter $\rho\in[-1,1]$, we reformulate the pair of correlated Gaussian random variables $(X_0,U_1)$ into a pair of independent Gaussian random variables $(\tilde{S} ,\tilde{X} )$ such that the sum is preserved, i.e. $X_0+U_1 =  \tilde{S} +\tilde{X} $, and the auxiliary channel input $\tilde{X}$ is independent of the auxiliary channel state $\tilde{S}$. Since $(X_0,U_1)\sim\mathcal{N}(0,K)$, we have $U_1 = \rho\sqrt{\frac{P}{Q}} X_0+\tilde{X}$ with $\tilde{X}\sim \mathcal{N}(0,P(1-\rho^2))$ and $\tilde{X} \perp X_0$. Therefore, we introduce two Gaussian random variables $\tilde{X}\sim \mathcal{N}(0,P(1-\rho^2))$  and $\tilde{S} \sim \mathcal{N}(0,(\sqrt{Q}+\rho \sqrt{P})^2)$ such that 
\begin{align}
\tilde{S} =&\;\frac{\sqrt{Q}+\rho \sqrt{P}}{\sqrt{Q}}\cdot X_0,\label{eq:Stilde}\\
\tilde{X} \perp &\; (\tilde{S},X_0),\\
X_1 =&\; X_0+U_1 = \tilde{X}+ \tilde{S}.
\end{align}

Then, the state-dependent channel reformulates
\begin{align}
Y_1 =& X_0+U_1 + Z = \tilde{X}+ \tilde{S} +Z,
\end{align}
for which Costa's auxiliary random variable for Dirty Paper Coding (DPC), see \cite{Cwodp}, writes
\begin{align}
W_1 =& \tilde{X} + \alpha \tilde{S} ,\quad \text{ with }\quad \alpha = \frac{P(1-\rho^2)}{P(1-\rho^2)+N}.\label{eq:alpha}
\end{align}

By combining \eqref{eq:Stilde} and \eqref{eq:alpha}, we reformulate the auxiliary random variable
\begin{align}
W_1 =& \tilde{X} + \frac{P(1-\rho^2)(\sqrt{Q}+\rho \sqrt{P})}{(P(1-\rho^2)+N)\sqrt{Q}} \cdot X_0,\qquad \text{ where } X_0\perp \tilde{X}\sim \mathcal{N}(0,P(1-\rho^2))\label{eq:discreteW1}.
\end{align}

Note that the correlation parameter $\rho\in[-1,1]$ is a free parameter that we use to minimize the decoder estimation cost.

\begin{definition}
Given $P\geq0$, we consider the auxiliary random variables $(W_1,W_2)$ defined by \eqref{eq:discreteW1} and \eqref{eq:discreteW2}, the optimal estimation cost is defined by
\begin{align}
\mathsf{MMSE}_{\mathsf{coord}}(P) = &\min_{\rho\in[-1,1]} \E\Big[\Big(X_1 - \E[X_1|W_2,Y_1]\Big)^2\Big| W_2,Y_1\Big],\\
&\quad\text{s.t. }\quad I(W_1;Y_1|W_2) - I(W_1;X_0|W_2) \geq I(X_0;W_2).\label{eq:IC_MMSE_coord}
\end{align}
\end{definition}
In the information constraint \eqref{eq:IC_MMSE_coord}, the quantization rate $I(X_0;W_2)$ must be smaller than the state dependent channel capacity $I(W_1;Y_1|W_2) - I(W_1;X_0|W_2)$.

\begin{proposition}\label{prop:IC_W2sign}
Given $P\geq0$, we consider the auxiliary random variables $(W_1,W_2)$ defined by \eqref{eq:discreteW1} and \eqref{eq:discreteW2} and we use the change of variable $T = P + Q+ 2\rho \sqrt{PQ}$. We have
\begin{align}
\mathsf{MMSE}_{\mathsf{coord}}(P)  =& \min_{\rho\in[-1,1]}\frac{T N}{T + N} \cdot \Bigg(1 -  \frac{2}{\sqrt{T+N}} \cdot \frac{1}{2\pi} \int\frac{  \phi\Big( y_1 \cdot \sqrt{\frac{2 T+N}{N( T+N)}}\Big) }{  \Phi\Big(y_1 \cdot \sqrt{\frac{T}{N( T+N)}}\Big)}  dy_1  \Bigg),\label{eq:MMSE_coord}\\
& \quad \text{s.t. } \quad  \frac12 \log_2\Bigg(1+\frac{P(1-\rho^2)}{N}\Bigg)  -  \Psi\Bigg(\sqrt{\frac{T}{N}}\Bigg)\nonumber\\
&\qquad \quad + \Psi\Bigg( \sqrt{\frac{T(\sqrt{Q}+\rho\sqrt{P})^2N+P(1-\rho^2)(T+N)^2}{(\sqrt{Q}+\rho\sqrt{P})^2N^2}}\Bigg)\geq 1,\label{eq:IC_discreteCount}
 \end{align}
where the entropy reduction function $\Psi:\R \to [0,1]$ is defined by 
\begin{align}
\Psi(\alpha) =&\int2\Phi\big(\alpha \cdot x\big)\cdot\log_2\Big(2\Phi\big(\alpha \cdot x\big)\Big) \frac{1}{\sqrt{2\pi}}\exp\Big(-\frac{x^2}{2}\Big)dx,\label{eq:EntropyReductionFunction}
\end{align}
and $\Phi(x) = \frac{1}{\sqrt{2\pi}}\int_{-\infty}^{x} \exp\big(-\frac{t^2}{2}\big) \textrm{dt} $, is the Gaussian cumulative distribution function.
\end{proposition}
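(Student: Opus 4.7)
The plan is to verify the two identities in Proposition~\ref{prop:IC_W2sign} separately, exploiting that $(X_0, U_1, \tilde{X}, \tilde{S}, X_1, Y_1, W_1)$ is jointly centered Gaussian by the construction in \eqref{eq:Stilde}--\eqref{eq:alpha}, whereas $W_2 = \sign(X_1)$ is uniform on $\{\pm 1\}$ by symmetry of $X_1$.

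For the MMSE in \eqref{eq:MMSE_coord}, I apply the law of total variance,
\[
\mathsf{MMSE}_{\mathsf{coord}}(P) = \E[\text{Var}(X_1|Y_1)] - \E\big[\text{Var}(\E[X_1|W_2, Y_1] \mid Y_1)\big].
\]
The first term equals $TN/(T+N)$ since $X_1 \mid Y_1 \sim \mathcal{N}(Ty/(T+N),\, TN/(T+N))$. Conditioning additionally on $W_2 = \sign(X_1)$ produces a truncated Gaussian, and the truncated-mean formula $\mu_y \pm \sigma\, \phi(\mu_y/\sigma)/\Phi(\pm\mu_y/\sigma)$ yields $\text{Var}(\E[X_1|W_2,Y_1] \mid Y_1=y) = \sigma^2 \phi(ay)^2/[\Phi(ay)\Phi(-ay)]$, with $\sigma^2 = TN/(T+N)$ and $a = \sqrt{T/(N(T+N))}$. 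Taking the expectation over $Y_1 \sim \mathcal{N}(0, T+N)$, the exponents combine via $2a^2 + 1/(T+N) = b^2$ with $b = \sqrt{(2T+N)/(N(T+N))}$, so that $\phi(ay)^2 \phi(y/\sqrt{T+N}) = \phi(by)/\sqrt{2\pi}$. Splitting $1/[\Phi(ay)\Phi(-ay)] = 1/\Phi(ay) + 1/\Phi(-ay)$ and using the parity $y \to -y$ folds the two pieces into the single integral appearing in \eqref{eq:MMSE_coord}.

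For the information constraint, I rewrite
\[
I(W_1; Y_1 | W_2) - I(W_1, W_2; X_0) = [I(W_1;Y_1) - I(W_1;X_0)] + I(W_1;W_2|Y_1) - I(W_1;W_2|X_0) - I(X_0;W_2).
\]
The first bracket equals $\tfrac{1}{2}\log_2(1 + P(1-\rho^2)/N)$ by the standard Gelfand--Pinsker/Costa DPC calculation applied to $Y_1 = \tilde{X} + \tilde{S} + Z_1$ with $\tilde{X} \perp \tilde{S}$ and the Costa-optimal $\alpha$ of \eqref{eq:alpha}, together with $I(W_1; X_0) = I(W_1; \tilde{S})$. Since $(X_0, W_1)$ determines $\tilde{X}$ and hence $X_1$, one has $H(W_2 | X_0, W_1) = 0$, which gives $I(W_1; W_2 | X_0) + I(X_0; W_2) = H(W_2) = 1$ bit. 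The residual $I(W_1;W_2|Y_1) = H(W_2|Y_1) - H(W_2|Y_1, W_1)$ is reduced to $\Psi$ via the following elementary identity: whenever $X_1$ is conditionally Gaussian with variance $v$ given some $\sigma$-field and its conditional mean has variance $T - v$, then $H(W_2 | \cdot) = 1 - \Psi(\sqrt{(T-v)/v})$, by the definition \eqref{eq:EntropyReductionFunction}. Applied with $v = TN/(T+N)$ this gives $H(W_2|Y_1) = 1 - \Psi(\sqrt{T/N})$.

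The main technical step is the conditional variance $v' = \text{Var}(X_1 \mid Y_1, W_1)$, obtained from the $2\times 2$ Gaussian regression on the covariance matrix of $(W_1, Y_1)$. With $s_X = P(1-\rho^2)$ and $s_S = (\sqrt{Q}+\rho\sqrt{P})^2$ so that $T = s_X + s_S$, the DPC-optimal choice $\alpha = s_X/(s_X + N)$ induces dramatic simplifications: the determinant of the $2 \times 2$ covariance matrix collapses to $s_X N(T+N)/(s_X+N)$, and the numerator of the Schur complement for $X_1$ reduces to $s_X s_S N^2/(s_X+N)^2$, yielding $v' = N^2 s_S/[(T+N)(s_X+N)]$. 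Substituting into $\sqrt{T/v' - 1}$ recovers exactly the argument of $\Psi$ appearing in \eqref{eq:IC_discreteCount}, and assembling all the pieces produces the stated constraint. The algebraic simplification of $v'$ is the only arithmetically delicate step; everything else is a routine application of truncated-Gaussian moment formulas, Costa's DPC calculation, and the definition of $\Psi$.
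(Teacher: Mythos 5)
Your proof is correct, but it reaches both identities by a route that is genuinely different from the paper's. For the estimation cost \eqref{eq:MMSE_coord}, the paper works directly with the skew-Gaussian conditional law of $X_1$ given $(Y_1, X_1\geq 0)$, writes down its conditional variance, and integrates (with one term vanishing by parity); you instead use the law of total variance $\E[\mathrm{Var}(X_1|W_2,Y_1)]=\E[\mathrm{Var}(X_1|Y_1)]-\E[\mathrm{Var}(\E[X_1|W_2,Y_1]\,|\,Y_1)]$ together with the truncated-Gaussian mean formula. The two computations reduce to the same integral; yours is arguably cleaner in that the parity cancellation is built in from the start. (One cosmetic slip: the product $\phi(ay)^2\phi(y/\sqrt{T+N})$ equals $\tfrac{1}{2\pi}\phi(by)$, not $\tfrac{1}{\sqrt{2\pi}}\phi(by)$; your final expression is nevertheless consistent with \eqref{eq:MMSE_coord}, which carries the correct $\tfrac{1}{2\pi}$.) For the information constraint \eqref{eq:IC_discreteCount}, the paper evaluates the differential entropies $H(Y_1|W_2)$ and $H(Y_1,W_1|W_2)$ of bivariate skew-normal vectors by invoking the entropy formula of \cite{ACG13semi} (this is the content of Lemma \ref{lemma:Entropy_W2_discrete}), whereas you reorganize the chain rule so that the only non-Gaussian quantities are the \emph{discrete} conditional entropies $H(W_2|Y_1)$ and $H(W_2|Y_1,W_1)$, and you derive from the definition \eqref{eq:EntropyReductionFunction} the identity $H(W_2|\mathcal{G})=1-\Psi\big(\sqrt{(T-v)/v}\big)$ for $X_1$ conditionally Gaussian with variance $v$ given $\mathcal{G}$ (using $\E[\Phi(\alpha G)]=\tfrac12$). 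This makes the argument self-contained — no external skew-elliptical entropy result is needed — and it cleanly isolates Costa's identity $I(W_1;Y_1)-I(W_1;X_0)=\tfrac12\log_2(1+P(1-\rho^2)/N)$ and the bookkeeping identity $I(W_1;W_2|X_0)+I(X_0;W_2)=H(W_2)=1$. Your key algebraic step, $\mathrm{Var}(X_1|Y_1,W_1)=N^2 s_S/[(T+N)(s_X+N)]$ with $s_X=P(1-\rho^2)$, $s_S=(\sqrt{Q}+\rho\sqrt{P})^2$, checks out and yields $\sqrt{T/v'-1}$ equal to the argument of $\Psi$ in \eqref{eq:IC_discreteCount}; it is the same computation the paper performs via $\det(K_{X_1Y_1W_1})/\det(K_{Y_1W_1})$ in \eqref{eq:CovarianceMatrixX1Y1W1}. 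What the paper's approach buys is a direct link to the skew-normal literature; what yours buys is elementarity and a transparent decomposition into a Costa term, a one-bit quantization term, and two binary conditional entropies.
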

The proof of Prop. \ref{prop:IC_W2sign} is stated in App. \ref{sec:ProofPropW2discrete}. The first term of \eqref{eq:IC_discreteCount} corresponds to the capacity of a state-dependent channel with channel input power $P(1-\rho^2)$ and noise variance $N$. The entropy reduction function  $\Psi(\alpha)$ corresponds to the entropy penalty term of skew normal distribution with the skewness factor $\alpha\in \R$.

\begin{figure}[!h]
\begin{center}
\psset{xunit= 0.4cm,yunit= 1cm}			
\begin{pspicture}(-10,-0.1)(10,1.4) 
\fileplot[linecolor=blue]{data/Data_TermEntropySkew.dat}
\psline{->}(0,-0.0005)(0,1.4)
\psline{->}(-10,0)(10,0)
\psline[linestyle=dotted](-10,1)(10,1)
\psline[linestyle=dotted](-5,0)(-5,1)
\psline[linestyle=dotted](5,0)(5,1)
\rput[c](9,1.2){$\Psi(\alpha)$}
\rput[c](9,-0.2){$\alpha$}
\rput[c](0,-0.2){$0$}
\rput[r](-0.15,1.2){$1$}
\rput[c](5,-0.2){$5$}
\rput[c](-5,-0.2){$-5$}
\end{pspicture}
\caption{Entropy reduction function $\Psi(\alpha)$ defined in \eqref{eq:EntropyReductionFunction}.}\label{fig:EntropyReductionFunction}
\end{center}
\end{figure}
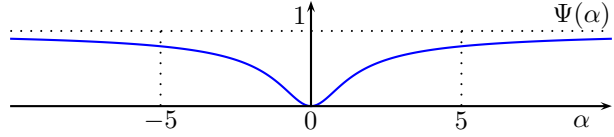

\subsection{Numerical Results}\label{sec:NumericalResults}

We compare the performances of the control schemes of Sec.~\ref{sec:GaussianW1W2} and Sec.~\ref{sec:GaussianW1dDiscreteW2}, to the Witsenhausen two-point strategy \cite[Sec.~5]{W68acis}. In \cite{GS10wcaa}, Grover and Sahai investigate a vector-version of the Witsenhausen counterexample in which the decoder is non-causal and implements Costa's DPC scheme, see \cite{Cwodp}, for a specific channel state. We compare the control schemes of Sec.~\ref{sec:GaussianW1W2} and Sec.~\ref{sec:GaussianW1dDiscreteW2} to the DPC based scheme of  \cite{GS10wcaa}.

\subsubsection{Witsenhausen two-point strategy}\label{sec:TwoPointStrategy}
\begin{proposition}[Two-point strategy]\label{prop:TwoPointStrategy}
For some parameter $a\geq0$, Witsenhausen two-point strategy is defined by 
\begin{align}
U_1 =& a \cdot \textrm{sign}\big(X_0\big) - X_0.
\end{align}
The power and estimation costs are given by
\begin{align}
P_{\mathsf{two}}(a) =&Q + a\Big(a-2\sqrt{\frac{2Q}{\pi}}\Big),\label{eq:Power_W}\\
\mathsf{MMSE}_{\mathsf{two}}(a) =& \sqrt{\frac{2\pi}{N}}  a^2\phi\bigg(\frac{a}{\sqrt{N}}\bigg)\int  \frac{\phi\big(\frac{y}{\sqrt{N}}\big)}{\cosh\big(\frac{a y }{N}\big)}dy,\label{eq:MMSE_W}
\end{align}
where $\phi(x)=  \frac{1}{\sqrt{2 \pi}}\exp\big(-\frac{x^2}{2}\big)$ and the optimal receiver's strategy is given by $\E[X_1|Y=y] =a \tanh\big(\frac{a y }{N}\big)$.
\end{proposition}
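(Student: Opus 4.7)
The key observation is that under this strategy the interim state becomes $X_1 = X_0 + U_1 = a\cdot\sign(X_0)$, so $X_1$ takes only the two values $\pm a$, each with probability $1/2$ by symmetry of $X_0\sim\mc{N}(0,Q)$. This reduces the whole analysis to a binary signaling problem through an AWGN channel. My plan is to compute the two costs separately and finally to identify the MMSE estimator.

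For the power cost, I would expand
\begin{equation*}
\gamma_{\mathsf{p}} = \E\big[(a\sign(X_0) - X_0)^2\big] = a^2 - 2a\,\E[|X_0|] + \E[X_0^2],
\end{equation*}
and substitute the standard folded-normal moment $\E[|X_0|]=\sqrt{2Q/\pi}$ together with $\E[X_0^2]=Q$, which immediately yields \eqref{eq:Power_W}.

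For the estimation cost, I would write $Y_1 = X_1 + Z_1$ and compute the posterior of the binary variable $X_1$ given $Y_1=y$ from Bayes' rule with equal priors. The Gaussian likelihoods $\phi_N(y\mp a)$ produce ratios $\exp(\pm ay/N)$ after cancellation of the common factor $\exp(-(y^2+a^2)/(2N))$, so
\begin{equation*}
\E[X_1\mid Y_1=y] = a\cdot\frac{e^{ay/N}-e^{-ay/N}}{e^{ay/N}+e^{-ay/N}} = a\tanh\!\Big(\frac{ay}{N}\Big),
\end{equation*}
which is the announced optimal decoder $U_2$. Then I would use the orthogonality identity $\mathsf{MMSE}=\E[X_1^2]-\E[(\E[X_1|Y_1])^2] = a^2\,\E[1-\tanh^2(aY_1/N)] = a^2\,\E[\sech^2(aY_1/N)]$.

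The last step is the integral manipulation. The marginal of $Y_1$ is a symmetric Gaussian mixture
\begin{equation*}
p_{Y_1}(y) = \frac{1}{\sqrt{2\pi N}}\,e^{-(y^2+a^2)/(2N)}\cosh\!\Big(\frac{ay}{N}\Big),
\end{equation*}
so multiplying by $a^2\sech^2(ay/N)$ cancels one power of $\cosh$ and leaves
\begin{equation*}
\mathsf{MMSE} = \frac{a^2}{\sqrt{2\pi N}}\,e^{-a^2/(2N)}\!\int \frac{e^{-y^2/(2N)}}{\cosh(ay/N)}\,dy.
\end{equation*}
Rewriting $e^{-a^2/(2N)} = \sqrt{2\pi}\,\phi(a/\sqrt{N})$ and $e^{-y^2/(2N)}=\sqrt{2\pi}\,\phi(y/\sqrt{N})$ yields exactly \eqref{eq:MMSE_W}. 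No serious obstacle arises; the only thing to watch is the bookkeeping between the $\sqrt{2\pi}$ and $\sqrt{N}$ factors when converting Gaussian densities from variance $N$ to standard form $\phi(\cdot)$, which is where a prefactor sign error would silently propagate.
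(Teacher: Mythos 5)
Your proposal is correct and follows essentially the same route as the paper's proof: the folded-normal moment for the power cost, Bayes' rule for the binary posterior giving $a\tanh(ay/N)$, and reduction of the error to $a^2\,\E[\sech^2(aY_1/N)]$ against the Gaussian-mixture marginal $\sqrt{2\pi/N}\,\phi(a/\sqrt{N})\phi(y/\sqrt{N})\cosh(ay/N)$. The only cosmetic difference is that you invoke the orthogonality identity $\mathsf{MMSE}=\E[X_1^2]-\E[(\E[X_1|Y_1])^2]$ directly, whereas the paper expands the square and cancels the cross term explicitly, which is the same computation.
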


For the sake of clarity, we also provide the proof of Proposition \ref{prop:TwoPointStrategy}, in App. \ref{sec:ProofPropTwoPoint}. By letting $N=1$ and $a=\sqrt{Q}$, we recover the equations in the proof of \cite[Theorem 2]{W68acis}.  Note that the function $P_{\mathsf{two}}(a)$ decreases over the interval $\Big[0,\sqrt{\frac{2Q}{\pi}}\Big]$, where it reaches the minimal value $Q\big(1 - \frac{2}{\pi}\big)$, and then increases for $a\geq \sqrt{\frac{2Q}{\pi}}$. Note that this two-point strategy requires a power cost $P\geq Q\big(1 - \frac{2}{\pi}\big)$ in order to be implemented. This strategy induces a binary interim state $X_1 =  a \cdot \textrm{sign}\big(X_0\big)\in\{-a,a\}$ for which the estimation cost outperforms, in some cases, the best estimation cost obtained via the linear scheme, see \cite[Theorem 2]{W68acis}.

\subsubsection{Dirty Paper Coding (DPC) based scheme for non-causal decoder}\label{sec:DPCStrategy}

In \cite{GS10wcaa}, the authors investigate a vector version of Witsenhausen counterexample in which the decoder is non-causal.

\begin{definition}\label{def:CodeNC}
For $n\in\N^{\star}$, a ``control design'' with non-causal encoder and non-causal decoder is a tuple of stochastic functions $c=(f,g)$ defined by
\begin{align}
f : \mc{X}_0^n  \longrightarrow   \mc{U}_1^n,\quad
g:  \mc{Y}_1^{n}  \longrightarrow \mc{U}_2^n . \label{eq:CodeNC}
\end{align}
We denote by $\mc{C}_{\mathsf{nc}}(n)$ the set of control designs with non-causal encoder and non-causal decoder.
\end{definition}


In \cite[App. D.1-D.7]{GS10wcaa}, the authors investigate a Dirty Paper Coding (DPC) based scheme by using a Gaussian channel input $U_1\sim \mc{N}(0,P)$, $U_1 \perp X_0$ and the auxiliary random variable $W = U_1 + \alpha X_0$. The leads to 
\begin{align}
I(W;Y_1)-I(W;X_0) =&  \frac{1}{2}\log_2\bigg( \frac{P(P+Q+N)}{PQ(1-\alpha)^2 + N(P+\alpha^2 Q)}\bigg),\\
H(U_1+X_0|W,Y_1) =&  \frac{1}{2}\log_2\bigg( (2\pi e) \cdot \frac{NPQ(1-\alpha)^2 }{PQ(1-\alpha)^2 + N(P+\alpha^2 Q)}\bigg).
\end{align}
Since the random variables are jointly Gaussian, the optimization problem writes
\begin{align}
\mathsf{MMSE}_{\mathsf{dpc}}(P) =& \min_{\alpha\in \R,\atop P(P+Q+N)\geq PQ(1-\alpha)^2 + N(P+\alpha^2 Q)}\frac{NPQ(1-\alpha)^2 }{PQ(1-\alpha)^2 + N(P+\alpha^2 Q)}.\label{eq:Optimization_DPC}
\end{align}

\begin{proposition}\label{prop:DPC}
Let $P^{\star}\geq0$, the unique positive root of equation $P^2(P+Q+N) =Q N^2$.\\
$\bullet$ If $P\leq P^{\star}$, the estimation cost for DPC is given by 
\begin{align}
\mathsf{MMSE}_{\mathsf{dpc}}(P) = & \frac{N\big( N\sqrt{Q} - P\sqrt{P+Q+N}\big)^2}{(P+N)^2(P+Q+N)} ,\label{eq:MMSE_DPC}
\end{align}
which is achieved with $\alpha^{\star}= \frac{P(\sqrt{Q}+\sqrt{P+Q+N})}{\sqrt{Q}(P+N)}$.\\
$\bullet$ If $P> P^{\star}$, then $\mathsf{MMSE}_{\mathsf{dpc}}(P) =0$ which is achieved with $\alpha^{\star}=1$.
\end{proposition}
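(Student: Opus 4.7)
The plan is to solve the one-dimensional optimization in \eqref{eq:Optimization_DPC} by analyzing objective and constraint separately. First I would rewrite
\begin{align}
\frac{NPQ(1-\alpha)^2}{PQ(1-\alpha)^2 + N(P+\alpha^2 Q)} = \frac{N}{1 + \frac{N}{PQ}\cdot g(\alpha)}, \qquad g(\alpha):=\frac{P+\alpha^2 Q}{(1-\alpha)^2},
\end{align}
so that minimizing $\mathsf{MMSE}_{\mathsf{dpc}}$ over $\alpha$ is equivalent to maximizing $g$. Second, expanding the constraint $P(P+Q+N)\geq PQ(1-\alpha)^2 + N(P+\alpha^2 Q)$ produces the quadratic $Q(P+N)\alpha^2 - 2PQ\alpha - P^2 \leq 0$, whose two real roots are
\begin{align}
\alpha_{\pm} \;=\; \frac{P\bigl(\sqrt{Q}\pm \sqrt{P+Q+N}\bigr)}{\sqrt{Q}(P+N)},
\end{align}
so the feasible set is the interval $[\alpha_-,\alpha_+]$. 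This already identifies $\alpha_+$ as the candidate maximizer stated in the proposition.

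Next I would study $g$ via $g'(\alpha)=2(P+\alpha Q)/(1-\alpha)^3$: $g$ is strictly increasing on $(-P/Q,1)$, has a pole at $\alpha=1$, and is strictly decreasing on $(1,\infty)$. A short check shows $\alpha_->-P/Q$: otherwise squaring $\sqrt{Q}(\sqrt{Q}-\sqrt{P+Q+N})\leq -(P+N)$ leads to $(P+Q+N)\leq Q$, i.e.\ $P+N\leq 0$, a contradiction. Hence the feasible set lies entirely to the right of the stationary point of $g$, and the monotonicity of $g$ on this set is controlled purely by the location of the pole $\alpha=1$ relative to $\alpha_+$. A direct computation gives
\begin{align}
1-\alpha_+ \;=\; \frac{N\sqrt{Q}-P\sqrt{P+Q+N}}{\sqrt{Q}(P+N)},
\end{align}
so $\alpha_+\leq 1 \iff P^2(P+Q+N)\leq QN^2 \iff P\leq P^{\star}$, where uniqueness of $P^{\star}>0$ follows because $P\mapsto P^2(P+Q+N)$ is strictly increasing on $[0,\infty)$ from $0$ to $\infty$.

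I would then split into two cases. If $P\leq P^{\star}$, then $\alpha_+\leq 1$ and $g$ is strictly increasing on the whole feasible interval $[\alpha_-,\alpha_+]$, so the maximum is attained at $\alpha=\alpha_+$, where the constraint is active, $PQ(1-\alpha_+)^2+N(P+\alpha_+^2 Q)=P(P+Q+N)$. Substituting this equality back into the objective collapses it to $NQ(1-\alpha_+)^2/(P+Q+N)$, and inserting the closed form for $1-\alpha_+$ above yields exactly $N(N\sqrt{Q}-P\sqrt{P+Q+N})^2/[(P+N)^2(P+Q+N)]$, as claimed. If $P>P^{\star}$, then $\alpha_+>1>0>\alpha_-$, so $\alpha=1$ lies in the interior of the feasible set; plugging $\alpha=1$ into the objective gives $0$, which is clearly optimal since the objective is nonnegative.

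The only conceptual subtlety is that the feasible interval straddles the pole of $g$ exactly when $P>P^{\star}$, which is why the DPC estimation cost collapses to zero in that regime; the rest is routine algebra, where the main bookkeeping hurdle is the factorization $1-\alpha_+ = (N\sqrt{Q}-P\sqrt{P+Q+N})/(\sqrt{Q}(P+N))$ that makes the closed-form MMSE expression clean. I would expect the detailed proof to just execute these steps; the heart of the argument is the monotonicity analysis of $g$ together with the equivalent active-constraint substitution at $\alpha_+$.
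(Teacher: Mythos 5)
Your proof is correct and follows essentially the same route as the paper's: both reduce to a monotonicity analysis of the objective in $\alpha$, extract the feasible interval from the roots of the quadratic $Q(P+N)\alpha^2-2PQ\alpha-P^2\le 0$, and identify the threshold $P^{\star}$ via the comparison $\alpha_+\lessgtr 1$. Your reformulation as maximizing $(P+\alpha^2Q)/(1-\alpha)^2$, the explicit check that $\alpha_->-P/Q$, and the active-constraint substitution used to obtain the closed form \eqref{eq:MMSE_DPC} are clean cosmetic refinements of the paper's direct derivative computation and evaluation at $\alpha_1$, but not a different argument.
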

This result is proved in \cite[App. D.1-D.7]{GS10wcaa}, we recall the main proof arguments in App. \ref{sec:ProofPropDPC}. In \cite[App. D.8]{GS10wcaa}, the authors additionally investigate a combination between the linear scheme and the DPC scheme. Given a parameter $-\sqrt{\frac{P}{Q}}\leq \beta \leq \sqrt{\frac{P}{Q}}$, the transmit power $P$ is divided into a linear part $U_{1,1} = -\beta X_0$ and a part $U_{1,2}$ used to implement DPC against the state $(1-\beta)X_0\sim \mc{N}(0,(1-\beta)^2Q)$ with power constraint $\E[U_{1,2}^2]\leq P - \beta^2 Q$. By using the change of variable $\beta = -\rho \sqrt{\frac{P}{Q}}$, we obtain the correlation matrix of \eqref{eq:CorrelationMatrix} and the auxiliary state-dependent channel $Y_1 = \tilde{X}+\tilde{S}+ Z$ where $\tilde{X}+\tilde{S}=X_0+U_1$, the channel state $\tilde{S}$ is defined in \eqref{eq:Stilde}, and  $\tilde{X}\perp (\tilde{S},X_0)$ with $\E[\tilde{X}^2]\leq P(1-\rho^2)$. Therefore, we replace $P$ and $Q$ in \eqref{eq:MMSE_DPC}, respectively by $P(1-\rho^2)$ and $(\sqrt{Q} + \rho \sqrt{P})^2$, and we obtain 
\begin{align}
\mathsf{MMSE}_{\mathsf{lin+dpc}}(P) =& \min_{\rho\in[-1,1]}\frac{N\big(P(1-\rho^2)\sqrt{P+Q + 2 \rho \sqrt{PQ}+N} - N(\sqrt{Q} + \rho \sqrt{P})\big)^2}{(P(1-\rho^2)+N)^2(P+Q + 2 \rho \sqrt{PQ}+N)}.\label{eq:MMSE_lin_DPC}
\end{align}

In the next section, we will see that this estimation cost $\mathsf{MMSE}_{\mathsf{lin+dpc}}(P)$ outperforms all other estimation costs.

\FigMMSEThird{0.1}{0.01}

\subsection{Discussion}

In Figure \ref{fig:OptimalMMSE_0.1_0.01_bis}, we compare the estimation cost proposed in Sec.~\ref{sec:GaussianW1W2} and Sec.~\ref{sec:GaussianW1dDiscreteW2}, with the estimation costs from the literature, for $(Q,N)=(0.1,0.01)$.
\begin{itemize}
\item The blue curve corresponds to the estimation cost of the best linear scheme $\mathsf{MMSE}_{\ell}(P)$ defined in  \eqref{eq:MMSElinear}, see also \cite[Lemma 11]{W68acis}.
\item The green curve depicts the estimation cost of Witsenhausen two point strategy $(P_{\mathsf{two}},\mathsf{MMSE}_{\mathsf{two}})$ defined in \eqref{eq:Power_W} and \eqref{eq:MMSE_W}, see also \cite[Sec.~5]{W68acis}.
\item The red curve presents the estimation cost of the Grover and Sahai's combination of the linear scheme and DPC scheme $\mathsf{MMSE}_{\mathsf{lin+dpc}}(P)$ defined in \eqref{eq:MMSE_lin_DPC}, when the decoder is non-causal, see also \cite[App. D.1-D.8]{GS10wcaa}. 
\end{itemize}

The coordination coding scheme we propose in Sec. \ref{sec:GaussianW1W2} is restricted to Gaussian random variables. The estimation cost $\mathsf{MMSE}_{\mathsf{G}}(P)$ defined in \eqref{eq:SolutionContinuous} consists of the brown line when $Q> 4N $ and $P\in[P_1,P_2]$, and of the estimation cost $\mathsf{MMSE}_{\ell}(P)$ represented by the blue line, otherwise. Note that the function $P\mapsto\mathsf{MMSE}_{\mathsf{G}}(P)$ is the convexification of the linear estimation cost function $P\mapsto \mathsf{MMSE}_{\ell}(P)$.

We reduce the estimation cost by using the auxiliary random variable $W_2=\sign(X_1)$ that encapsulates the sign of the interim state, see Sec.~\ref{sec:GaussianW1dDiscreteW2}. This strategy requires a certain power cost level for the first controller to transmit the sign of $X_1$ to the second controller. The dashed line that is tangent to the orange curve in Fig.~\ref{fig:OptimalMMSE_0.1_0.01_bis}, shows the existence of some weight parameter $  \kappa \in [0,1]$, such that our coordination coding scheme of Sec.~\ref{sec:GaussianW1W2}, outperforms Witsenhausen two point strategy
\begin{align}
\kappa P + (1-\kappa) \mathsf{MMSE}_{\mathsf{coord}}(P) \leq \kappa P_{\mathsf{two}} + (1-\kappa)\mathsf{MMSE}_{\mathsf{two}}.
\end{align}
Note also that the power cost required to implement our coordination coding scheme of Sec.~\ref{sec:GaussianW1W2}, is strictly less than the minimal power cost $Q \big(1 - \frac{2}{\pi}\big)$ needed to implement the Witsenhausen two point strategy. 

When the decoder is non-causal, the combination between the linear scheme and the DPC scheme $\mathsf{MMSE}_{\mathsf{lin+dpc}}(P)$ proposed in \cite{GS10wcaa}, Pareto-dominates all the other solutions.

\section{Conclusion}\label{sec:conclusion}

Our results show that information theoretic methods, in particular coordination coding results, lead to new insights on the Witsenhausen counterexample, and on distributed decision making problems in general. Vice versa, we believe that our observation makes the Witsenhausen counterexample also interesting for other source-channel coding problems. In more detail, we characterize the optimal trade-off between the Witsenhausen power cost and estimation cost, via a single-letter expression with two auxiliary random varibles $(W_1,W_2)$. We show that a convex combination of linear memoryless policies is optimal for the vector-valued Witsenhausen problem with causal decoder, restricted to the space of Gaussian random variables. Since Witsenhausen two-point strategy outperforms the best linear strategy, we investigate a coordination coding with a discrete random variable $W_2 = \sign(X_1)$ and a Gaussian auxiliary random variable $W_1$. For some range of parameters, we show by numerical results that this strategy outperforms both Witsenhausen two-point strategy, and the best linear scheme. In future works, we will consider policies that result in interim states described by more general probability distributions having discrete and continuous parts.


\appendices

\section{Proof of Theorem \ref{theo:MainResult}}
\label{sec:achiev}
\subsection {Preliminaries }
\label{sec:achiev_prelim}

The coordination coding scheme of Sec.~\ref{sec:GaussianW1W2} involves auxiliary random variables that are discrete, e.g. $W_2$, and continuous, e.g. $W_1$. We start with a brief discussion to clarify that the concept of {\it jointly weakly typical} sequences straightforwardly applies to real random vectors with components that are either discrete or continuous, i.e., the measure of discrete random variables is absolutely continuous with the counting measure $\mu$ and the measure of continuous random variables are absolutely continuous with respect to the Lebesgue measure $\lambda$. Since we assume that components in the random vector are either continuous or discrete, the information-theoretic expressions easily follow using the Radon-Nikodym derivative, as comprehensively discussed in \cite[Chapter 2]{Pinsker}.
We briefly recapitulate the definitions refined for our setting for convenience and clarity. 

Let $(X,Y)$ be a pair of real random vectors with $X\in \mathcal{X}\subseteq\R^k$ and $Y\in\mathcal{Y}\subseteq\R^m$
and measure $\nu_{XY}\ll \mu^{\otimes k}\times \lambda^{\otimes m}$, i.e., random vector $X$ is discrete and random vector $Y$ is continuous. Then the Radon-Nikodym derivative of $(X,Y)$ is defined as $f_{XY}(x,y)=\frac{\mathrm{d}\nu_{XY}}{\mathrm{d}
\mu^{\otimes k}\times \lambda^{\otimes m}}(x,y)$, which allows us to define the entropy of $(X,Y)$ as follows
$$
H(X,Y)=-\int\limits_{\mathcal{X}\times\mathcal{Y}
}f_{XY}(x,y)\log f_{XY}(x,y)\,\mathrm{d}(\mu^{\otimes k}\times \lambda^{\otimes m})=
-\sum_{x\in\mathcal{X}}\int\limits_{\mathcal{Y}
}f_{XY}(x,y)\log f_{XY}(x,y)\,\mathrm{d}\lambda^{\otimes m},
$$
where $\mathcal{X}$ and $\mathcal{Y}$ denote the support sets of random vectors $X$ and $Y$. Since we require components to be either discrete or continuous, the Radon-Nikodym derivative reduces to a mixed but 
well separated function with a probability mass function part for the discrete and probability density function part for the continuous variables so that the entropy also reduces to the expected expressions. The expressions for conditional entropy and mutual information then straightforwardly follow considering $X=(X_1,X_2)$ and $Y=(Y_1,Y_2)$ with $X_i\in\R^{k_i}$ and $Y_i\in\R^{m_i}$, $i=1,2$, and $k=k_1+k_2$ and $m=m_1+m_2$. The conditional entropy $H(X_1,Y_1|X_2,Y_2)$ and mutual information 
$I(X_1,Y_1;X_2,Y_2)$ are as usually given by the differences $H(X_1,X_2,Y_1,Y_2)-H(X_2,Y_2)$ and $H(X_1,X_2,Y_1,Y_2)-H(X_1,X_2|Y_1,Y_2)$. 

The standard definition of (jointly) weakly typicality \cite{CT06eoit} applies to random vectors with only discrete or continuous components. For random vectors with both discrete and continuous components, the definition can be just re-stated using the Radon-Nikodym derivative and entropy definition as above. For a two-dimensional random vector with a discrete and continuous component, i.e. $k=m=1$, we denote by $\Aen(X,Y)$ the set of jointly typical sequences given by
$$
\Aen(X,Y)=\left\{(x^n,y^n)\in \R^{2\times n}\,:\,
\Big|-\frac{1}{n}\log f_{XY}^{\otimes n}(x^n,y^n)-H(X,Y)\Big|<\eps, x^n\in\Aen(X), y^n\in\Aen(Y) \right\},
$$
with $f_{XY}^{\otimes n}(x^n,y^n)=\prod_{i=1}^nf_{XY}(x_i,y_i)$. The extension to random vectors with larger dimension, $k+m>2$, is then done as usual. This means, sequences $(x^n,y^n)=(x_1^n,x_2^n,\dots,x_k^n,y_1^n,\dots,y_m^n)\in\mathcal{X}^n\times\mathcal{Y}^n\subseteq \R^{(k+m)\times n}$ need to satisfy the joint typicality condition additionally for any possible subset of the random vector $(X,Y)\in\mathcal{X}\times\mathcal{Y}$. 


Next, it is easy to verify that the joint asymptotic equipartition property straightforwardly extends to random vectors with components that are either discrete or continuous. The reason is that the evaluation is always done component-wise so that the property always follows from the discrete or continuous case. To illustrate this, we 
briefly restate the first two statements of  \cite[Theorem 7.6.1]{CT06eoit}.

{\it Joint AEP.} Let  $(X^n,Y^n)$ be a sequence of length $n$ pair of i.i.d. random vectors with $(X,Y)\in\mathcal{X}\times\mathcal{Y}$
with measure $\nu_{XY}^{\otimes n}
$ and
$\nu_{XY}\ll \mu\times \lambda$. Then
\begin{enumerate}
    \item $\mathbb{P}\{(X^n,Y^n)\in\Aen(X,Y)\}\to 1$ ad $n\to\infty$,
    \item $|\Aen(X,Y)|\leq 2^{n H(X,Y)+\eps}$,
\end{enumerate}
where the notation $\mathbb{P}\{\cdot\}$ stands for the probability of the random event.

The proof is the same as for \cite[Theorem 7.6.1 ]{CT06eoit} since the law of large numbers applies to discrete and continuous random variables. For the second statement considering the continuous part one obviously needs to integrate over the domain instead of taking the sum to compute the volume. The third statement of \cite[Theorem 7.6.1 ]{CT06eoit} can be also reformulated and shown in the same manner. For this, let $(X^n_1,X^n_2,Y^n_1,Y^n_2)$ be a sequence of length $n$ of i.i.d. random vectors  $(X_1,X_2,Y_1,Y_2)\in\mathcal{X}^2\times\mathcal{Y}^2$ with measure $\nu_{X_1X_2Y_1Y_2}^{\otimes n}
$ and $\nu_{X_1X_2Y_1Y_2}\ll \mu^{\otimes2}\times \lambda^{\otimes2}$. Then:
\begin{enumerate}
    \item[3)] If $(\tilde{X}^n_1,\tilde{X}^n_2,\tilde{Y}^n_1,\tilde{Y}^n_2)$ are i.i.d. vectors with measure $\nu_{X_1Y_1}^{\otimes n}\times\nu_{X_2Y_2}^{\otimes n}$, while $\nu_{X_1Y_1}$ and $\nu_{X_2Y_2}$ are the marginals of $\nu_{X_1X_2Y_1Y_2}$ with respect to $(X_1,Y_1)$ and $(X_2,Y_2)$ respectively, then
    $$
    \mathbb{P}\{(\tilde{X}^n_1,\tilde{X}^n_2,\tilde{Y}^n_1,\tilde{Y}^n_2)
    \in\Aen(X_1,X_2,Y_1,Y_2)\}\leq 2^{-n(I(X_1,Y_1;X_2,Y_2)-3\epsilon)},
        $$
        and for sufficiently large $n$,
        $$
    \mathbb{P}\{(\tilde{X}^n_1,\tilde{X}^n_2,\tilde{Y}^n_1,\tilde{Y}^n_2)
    \in\Aen(X_1,X_2,Y_1,Y_2)\}\geq (1-\eps) 2^{-n(I(X_1,Y_1;X_2,Y_2)-3\epsilon)}.
        $$
\end{enumerate}
Likewise, one can straightforwardly verify that all coding lemmas necessary for the following achievablity proof, in particular covering lemma and packing lemma, can be directly restated. Note that results can be always straightforwardly extended like this whenever components in the vectors are separately treated in the proofs. 

In summary, everything straightforwardly extends as expected so that we only need to evaluate the random variables in the expressions with respect to the counting measure (sum) if the random variable is discrete or Lebesgue measure (integral) otherwise. To not unnecessarily over-complicate the expressions in the proof, we will abstain from making this aspect explicit in the following.

\subsection{Achievability Proof}\label{sec:AchievabilityProofCoordination}

The achievability proof uses the block-Markov coding scheme with $B\in \N^{\star}$ blocks each of length $n\in \N^{\star}$ using backward encoding at the encoder and forward decoding at the decoder. The coding scheme follows the empirical coordination scheme with non-causal encoding and causal decoding of \cite{MT17jeco}. Before the regular transmission will be a initialisation phase of length $n'\in \N^{\star}$.
The {\it `error'} analysis is based on the concept of weak typicality 
with an extension that circumvents the need of the Markov Lemma \cite{T78msc} available for strong typicality. A similar approach has been taken in \cite{VOS18hiwp}.

{\it Preliminaries:} Given an arbitrary but fixed $\eps>0$. Further, assume $(X_0^n, X_1^n, U_1^n, U_2^n, W_1^n, W_2^n,  Y_1^n)$ is generated i.i.d. according to the distribution $ \QQ_{X_0X_1U_1U_2W_1W_2Y_1}=\PP_{X_0} \QQ_{U_1W_1W_2|X_0} \PP_{X_1Y_1|X_0U_1} \QQ_{U_2|W_2Y_1}$ of \eqref{eq:TargetDistribution0}, with  $P=\mathbb{E}[U_1^2]$ and $S=\mathbb{E}[(X_1-U_2)^2]$. Then let $\psi^{(n)}:\set{X}_0^n
\times \set{X}_1^n
\times \set{U}_1^n
\times \set{U}_2^n
\times \set{W}_1^n
\times \set{W}_2^n
\times \set{Y}_1^n \to \{0,1\}$ denote an indicator function for sequences of length $n$ with 
\begin{align}
\psi^{(n)}(&x_0^n,x_1^n,  u_1^n,  u_2^n, w_1^n, w_2^n,  y_1^n)=
\begin{cases}
1 & \text{if } |c_P(u_1^n)-P| \geq  \tfrac{1}{2}
\eps\text{ or }|c_S(x_1^n,u_2^n)-S|
\geq \tfrac{1}{12}
\eps\\
&\quad \text{ or } 
(w_1^n,w_2^n, y_1^n)\notin\Aen(W_1,W_2,Y_1),\\
0 & \text{otherwise. }
\end{cases}
\end{align}
Using the weak law of large numbers and the union bound we have
$$
\delta_n=\mathbb{E}[\psi^{(n)}(X_0^n,X_1^n, U_1^n,U_2^n, W_1^n, W_2^n,  Y_1^n)]\overset{n\to \infty}{
\longrightarrow} 0.
$$
Define 
\begin{equation*}
\Sen=\{(x_0^n, w_1^n,w_2^n)\,|\,\eta^{(n)}(x_0^n, w_1^n,w_2^n)\leq \sqrt{\delta_n}\},
\end{equation*}
with 
$
\eta^{(n)}(x_0^n, w_1^n,w_2^n)=\mathbb{E}[
\psi^{(n)}(x_0^n, X_1^n, U_1^n, U_2^n, w_1^n, w_2^n,$
$Y_1^n)
|X_0^n=x_0^n, W_1^n=w_1^n, W_2^n=w_2^n
]$.
Then from the Markov inequality we obtain
\begin{equation*}
\mathbb{P}\{(X_0^n,W_1^n,W_2^n)\notin \set{S}_\eps^{(n)}\}
\leq \frac{
\mathbb{E}[\psi^{(n)}(X_0^n,X_1^n, U_1^n,U_2^n, W_1^n, W_2^n,  Y_1^n)]}{\sqrt{\delta_n}}\leq \sqrt{\delta_n}.
\end{equation*}
We finally define the set 
$$
\Ben = \Aen(X_0,W_1,W_2) \cap  \Sen,
$$
which denotes the set of jointly typical pairs that also satisfy the cost constraints. Note that for $(X_0^n,W_1^n,W_2^n)$ i.i.d.~$\sim \QQ_{X_0W_1W_2}$ we have
 $\mathbb{P}\{(X_0^n,W_1^n,W_2^n)\in \Ben\}\to 1$ as 
 $n\to\infty$. 
 Furthermore, we have the following lemma, which can be similarly shown as in \cite[Lemma 2]{VOS18hiwp}.
\begin{lemma}
\label{LemmaJoint}
Let $X_0^n$ i.i.d.~$\sim \QQ_{X_0}$. For  $M=2^{nR} \geq 2^{n(I(X_0;W_2)+3\eps)}$ codewords $w_2^n(m)$ i.i.d.~$\sim \QQ_{W_2}$, $1\leq m\leq M$ and $L=2^{nR_L} \geq 
2^{n(I(W_1;X_0,W_2)+4\eps)}$ codewords  $w_1^n(\ell,m)$ i.i.d.~$\sim \QQ_{W_1}$, $1\leq \ell\leq L$ and $\eps>0$, we have 
$$
\mathbb{P}\{(X_0^n,W_1^n(1,\ell),W_2^n(m))\notin\Ben \, \forall \, m,\ell\}\to 0\text{ as } n\to\infty.
$$
 \end{lemma}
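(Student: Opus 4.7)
The plan is to cascade two standard joint-typicality covering bounds (one for the outer codebook $\{W_2^n(m)\}_{m=1}^M$, one for the inner codebook $\{W_1^n(\ell,m)\}_{\ell=1}^{L}$), in the spirit of \cite[Lemma 2]{VOS18hiwp}. Throughout, AEP estimates and information quantities are interpreted in the mixed discrete/continuous sense justified in App.~\ref{sec:achiev_prelim}.

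First, I would handle the outer layer. Define $\mathcal{E}_1=\{\exists\,m:(X_0^n,W_2^n(m))\in\Aen(X_0,W_2)\}$. By the joint AEP, $\mathbb{P}\{(X_0^n,W_2^n(m))\in\Aen(X_0,W_2)\}\geq (1-\eps)\,2^{-n(I(X_0;W_2)+2\eps)}$ for every $m$. Since the $W_2^n(m)$ are independent, $\mathbb{P}(\mathcal{E}_1^c)\le (1-p_2)^M \le \exp(-M p_2)$. The rate condition $R\ge I(X_0;W_2)+3\eps$ yields $M p_2\ge (1-\eps)\,2^{n\eps}\to\infty$, so $\mathbb{P}(\mathcal{E}_1^c)\to 0$.

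Second, I would handle the inner layer conditional on $\mathcal{E}_1$ occurring, with witness $m^*$. The $L$ codewords $\{W_1^n(\ell,m^*)\}_{\ell=1}^L$ are drawn i.i.d.\ from $\QQ_{W_1}^{\otimes n}$ and are independent of $(X_0^n,W_2^n(m^*))$. For each $\ell$ I would lower-bound
\[
 p_1 := \mathbb{P}\bigl\{(X_0^n,W_1^n(\ell,m^*),W_2^n(m^*))\in\Ben\,\big|\,(X_0^n,W_2^n(m^*))\in\Aen(X_0,W_2)\bigr\}
\]
by first applying the mixed AEP to get $\mathbb{P}(\Aen(X_0,W_1,W_2)\mid\cdot)\ge (1-\eps)\,2^{-n(I(W_1;X_0,W_2)+3\eps)}$ and then subtracting the cost-failure probability $\mathbb{P}(\Sen^c)\le\sqrt{\delta_n}$, giving $p_1\ge (1-2\eps)\,2^{-n(I(W_1;X_0,W_2)+3\eps)}$ for $n$ large. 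With $R_L\ge I(W_1;X_0,W_2)+4\eps$ the covering bound $(1-p_1)^L\le \exp(-L p_1)$ then gives $\mathbb{P}(\mathcal{E}_2^c\mid\mathcal{E}_1)\to 0$, and a union bound over $\mathcal{E}_1^c\cup(\mathcal{E}_2^c\cap\mathcal{E}_1)$ completes the proof.

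The main obstacle lies in Step 2, specifically the coupling between the typicality set $\Aen$ and the empirical-cost set $\Sen$. The bound $\mathbb{P}(\Sen^c)\le\sqrt{\delta_n}$ is an unconditional statement under the joint i.i.d.\ target law, whereas the covering argument must deal with the product-of-marginals law imposed by the random codebook construction and further conditioned on typicality of $(X_0^n,W_2^n(m^*))$. The device borrowed from \cite[Lemma 2]{VOS18hiwp} is to control this transfer via a Radon--Nikodym / AEP sandwich inside the typical set, together with the elementary inequality $\mathbb{P}(\Aen\cap\Sen)\ge \mathbb{P}(\Aen)-\mathbb{P}(\Sen^c)$, which absorbs the vanishing $\sqrt{\delta_n}$ error into the typicality slack $\eps$ without disturbing the exponential AEP scaling. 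Once this transfer is in place, the standard covering-lemma template carries through.
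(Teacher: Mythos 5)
Your overall architecture -- a sequential two-layer covering with a change of measure via the Radon--Nikodym ratio $\QQ_{W_1}^{\otimes n}/\QQ_{W_1|X_0W_2}^{\otimes n}\geq 2^{-n(I(W_1;X_0,W_2)+4\eps)}$ -- is close in spirit to what the paper does (the paper defers to \cite[Lemma 2]{VOS18hiwp} and exhibits exactly this ratio as the only modification). However, there is a genuine gap in your Step 2, and it sits precisely at the point you flag as ``the main obstacle'': the conversion of average statements into per-realization ones. After Step 1 you fix a specific pair $(x_0^n,w_2^n)=(X_0^n,W_2^n(m^*))$ and need $p_1(x_0^n,w_2^n)=\mathbb{P}_{W_1^n\sim\QQ_{W_1}^{\otimes n}}\{(x_0^n,W_1^n,w_2^n)\in\Ben\}\gtrsim 2^{-n(I(W_1;X_0,W_2)+4\eps)}$. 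The Radon--Nikodym bound reduces this to showing that $\QQ_{W_1|X_0W_2}^{\otimes n}\big(\{w_1^n:(x_0^n,w_1^n,w_2^n)\in\Ben\}\,\big|\,x_0^n,w_2^n\big)$ is bounded away from zero \emph{for that particular pair}. Under weak typicality this does \emph{not} follow from joint typicality of $(x_0^n,w_2^n)$: the summands $-\log f_{W_1|X_0W_2}(W_{1,i}|x_{0,i},w_{2,i})$ are independent but not identically distributed, and their empirical mean concentrates around $\frac1n\sum_i H(W_1|X_0=x_{0,i},W_2=w_{2,i})$, which for a weakly typical pair need not be close to $H(W_1|X_0,W_2)$. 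This is exactly the Markov-lemma failure the paper says it circumvents. The inequality you invoke, $\mathbb{P}(\Aen\cap\Sen)\geq\mathbb{P}(\Aen)-\mathbb{P}(\Sen^c)$ with $\mathbb{P}(\Sen^c)\leq\sqrt{\delta_n}$, is an \emph{average} statement under the joint target law; it does not transfer to the conditional law given the pair selected in Step 1, so it cannot close this step. (As a side remark, $\sqrt{\delta_n}$ is not exponentially small, so it can only be subtracted \emph{after} the change of measure, i.e.\ from a quantity of order one, not from $2^{-n(I+3\eps)}$.)

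The missing idea is the $\eta^{(n)}$/$\Sen$-type device applied one level up: define the set of ``good'' pairs $\mathcal{G}=\{(x_0^n,w_2^n):\QQ_{W_1|X_0W_2}^{\otimes n}(\Ben\text{-slice}\mid x_0^n,w_2^n)\geq 1-\sqrt{\delta_n'}\}$, show by Markov's inequality that $\mathcal{G}$ has probability tending to one under the joint law, and require Step 1 to land in $\Aen(X_0,W_2)\cap\mathcal{G}$ (with the analogous restriction at the $x_0^n$ level, where the recursion bottoms out because $X_0^n$ is genuinely i.i.d.). This nested Markov-inequality construction is the actual content of \cite[Lemma 2]{VOS18hiwp} and of the paper's definitions of $\eta^{(n)}$, $\Sen$ and $\Ben$. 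Two further small points: (i) your bound $\mathbb{P}(\mathcal{E}_1^c)\leq(1-p_2)^M$ with $p_2$ the \emph{averaged} probability is not valid as written, since $p\mapsto(1-p)^M$ is convex and Jensen gives the inequality in the wrong direction; you must first condition on $X_0^n=x_0^n$ and use a per-$x_0^n$ lower bound on $p_2(x_0^n)$, which runs into the same conditional-typicality issue. (ii) With the standard AEP lower bound $(1-\eps)2^{-n(I+3\eps)}$ and the rate $R\geq I(X_0;W_2)+3\eps$, the product $Mp_2$ is only bounded below by a constant rather than diverging; a strict margin (or a smaller typicality parameter inside $\Aen$) is needed for $(1-p_2)^M\to0$, though the paper's own constants are equally tight on this point.
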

{\it Proof.}
The proof follows the same arguments as the proof of Lemma 2 in  \cite{VOS18hiwp} with $X$, $U$, and $V$ replaced by $X_0$, $W_1$, and $W_2$ as well as $p_{V|U}$ replaced by $\QQ_{W_1}$ so that (155) changes as follows
\begin{equation*}
\begin{split}
\phantom{xxxx}
\frac{\QQ_{W_1}^{\otimes n}(w_1^n)}{\QQ_{W_1|X_0W_2}^{\otimes n}(w_1^n|x_0^n,w_2^n)}&\geq 
\frac{2^{-n (H(W_1)+ 2\eps)}}{2^{-n(H(W_1|X_0,W_2)- 2\eps)}}\\
&=2^{-n(I(W_1;X_0,W_2)+4\eps}.
\end{split}
\end{equation*}
Thereby, we use the entropy for discrete and continuous parts as defined above so that 
the integration over $w_1^n$ and $w_2^n$ while the integration over the discrete parts using the counting measure become sums.\hfill$\square$

To ensure that the second cost constraint remains bounded even when a coding error happens, the decoder is going to quantize its output.
Since we assume a joint distribution with $\mathbb{E}[(X_1-U_2)^2]= S$, for any $\hat{\delta}>0$ 
there exists a quantization
 $q_{U_2}:\set{U}_2\to \{\hat{u}_{2,k}\}_{k=1}^{N_{U_2}}$
 such that 
 $$
\hat{S}=
\mathbb{E}[(X_1-q_{U_2}(U_2))^2]\leq (1+\hat{\delta})S,
 $$
 in particular such that $\hat{\delta}S<\tfrac{1}{4}\eps$. 

With those preliminaries we are now ready to provide the coding scheme.\\[1ex]


{\it Random codebook:} For rate $R\geq I(X_0;W_2)+ 3\eps$ and rate $R_L\geq I(W_1;W_2;X_0)+4\eps$, generate $2^{nR}$ codewords $w_2^n(m)$ i.i.d. $\sim \QQ_{W_2}$ 
and $2^{n(R+R_L)}$ codewords $w_1^n(m,\ell)$ i.i.d. $\sim \QQ_{W_1}$ with indices $m\in[1:2^{nR}]$ and $\ell\in[1:2^{nR_L}]$. 

{\it Backward encoding at the encoder:} Let $m_b$ and $x_{0,b}^n$ denote the message and processed source sequence of length $n$ of block $b$, $1\leq b \leq B$. Due to non-causal knowledge, the
  encoder performs backward encoding, i.e., the encoder starts with block $b=B$ with initialisation 
  $m_{B+1}=1$ and subsequently encodes the previous blocks.
In block $b$, the encoder takes sequence $x_{0,b}^n$ and message $m_{b+1}$ and looks for $\ell_b$ and $m_b$  such that 
$$
(x_{0,b}^n,w_1^n(m_{b+1},\ell_b), w_2^n(m_b))\in\Ben.
$$
If there are none or more than one pair, then the encoder randomly picks one. Let $w_{1,b}^n=w_1^n(m_{b+1},\ell_b)$  and $w_{2,b}^n=w_2^n(m_b)$ denote the  choice. Next, we generate $u_{1,b}^n\sim \QQ_{U_1|W_1W_2X_0}^{\otimes n}(w_{1,b}^n,w_{2,b}^n,x_0^n)$. 

{\it Forward transmission of the encoder:} In block $b$, $1\leq b \leq B$, if $|c_P(u_{1,b}^n)-P| < \tfrac{1}{4}\eps$
then the encoder transmits $u_{1,b}^n$ synchronously with $x_{0,b}^n$, otherwise the encoder transmits the all zero codeword. The channel distribution $P_{X_1,Y_1|X_0,U_1}^{\otimes n}$ produces channel outputs $x_{1,b}^n$ and $y_{1,b}^n$.

{\it Forward decoding at the decoder:} Let $\tilde{w}_{2,b}^n$ be an abbreviation for  $w_{2,b}^n(\tilde{m}_{b})$ for block $b$, $1\leq b \leq B$, where $\tilde{m}_b$ denotes the index decided on in the previous block $b-1$. Note that message $\tilde{m}_{1}$ will have been obtained from the initialisation phase.  Upon receiving $y_{1,b}^n$, the decoder looks for  $\tilde{\ell}_b$ and $\tilde{m}_{b+1}^n$ such that
$$
(y_{1,b}^n,w_1^n(\tilde{m}_{b+1},\tilde{\ell}_b),\tilde{w}_{2,b})\in\Aen(Y_1,W_1,W_2).
$$
If there are none or more than one pair, then the decoder randomly picks one. 

{\it Forward transmission of the decoder:} In block $b$, $1\leq b \leq B$, the decoder generates  $u_{2,b}^n\sim \QQ_{U_2|W_2Y_1}^{\otimes n}(\tilde{w}_{2,b}^n,y_1^n)$. 
The decoder transmits the quantised sequence $\hat{u}_{2,b}^n$ with elements $\hat{u}_{2,i,b}=q_{U_2}(u_{2,i,b})$ synchronously with $y_{1,b}^n$. 

{\it Sketch for initialisation phase:} Before the first block, message $m_1$ is communicated from the encoder to the decoder using a Gel'fand Pinsker coding scheme, see \cite{GP80cfcw}, treating $X_0^{n'}$ as non-causal channel state knowledge. The auxiliary random variable is picked according to Costa in \cite{Cwodp}, with transmit power $P$ so that the rate $R_{GP}=\frac{1}{2}\log(1+\frac{P}{N})$ is achievable. 
The block length of the initial phase  $n'=\alpha n$ is chosen such that message $m_1$ with rate $R$ can be communicated with an arbitrary small error, i.e., we pick a finite $\alpha>0$ such that $\alpha > R/R_{GP}$. Beside decoding message $m_1$, similarly as in \cite{SCCK05ccas} where the channel state sequence is estimated, the decoder will estimate the evolved state sequence $X_1^{n'}$ using the MMSE estimator 
$$
U_{2,i}=\frac{P+Q}{P+Q+N} \hat{Y}_{1,i},
$$
for $1\leq i \leq n'$. The corresponding mean-squared state estimation error is given by
$$
S'=
\mathbb{E}\left[\frac{1}{n'}\norm{X_1^{n'}-U_2^{n'}}_2^2\right]=\frac{(P+Q)N}{P+Q+N}.
$$
In the following error analysis, the initialisation phase will be denoted as block $b=0$.\\[1ex]


{\it Error analysis per block:}  Let $E^e$ and $E_b^e(m_{b+1})$ denote the events of a failed encoding process and failed encoding in block $b$ given $m_{b+1}$, i.e., $E_b^e(m_{b+1}) = E_b^{e,1}(m_{b+1})\cup E_b^{e,2}$ with $E_b^{e,1}(m_{b+1})=\{ (X_{0,b}^n, W_1^n(m_{b+1},\ell_b), W_2^n(m_b))\notin\Ben
\;\forall (\ell_b, m_b)\}$ and $E_b^{e,2}=\{|c_P(U_{1,b}^n)-P| \geq \frac{1}{4}\eps\}$.
Due to the independence between codewords, the probability of an encoding error in block $b$ given no encoding error in previous blocks does not depend on previous blocks. Accordingly, it is sufficient to analyze the case $m_{b+1}=1$. Thus,
\begin{align}
\mathbb{P}\{E^e_b(M_{b+1})&\,|\, \bigcup\limits_{\beta=b+1}^{B} \bar{E}_\beta^e (M_{\beta+1})\}\\
=&
\mathbb{P}\{E^e_b(1)\}\leq \mathbb{P}\{E^{e,1}_b(1)\}+\mathbb{P}\{E^{e,2}_b\,|\,\bar{E}^{e,1}_b(1)\},
\end{align}
where the bar in $ \bar{E}^e$ denotes the complementary event of $E^e$.
If $R\geq I(X_0;W_2)+3\eps$ and $R_L\geq I(W_1;W_2,X_0)+4\eps$ following Lemma~\ref{LemmaJoint}, 
we have
$
\mathbb{P}\{E^{e,1}_b(1)\}= \mathbb{P}\{(X_{0,b}^n,W_1^n(1,\ell),W_2(m))\notin\Ben \, \forall \, m,\ell\}
\to 0\text{ as } n\to\infty.
$
 Further, we have
$
\mathbb{P}\{E^{e,2}_b|\,\bar{E}^{e,1}_b(1)\}= 
\mathbb{P}\{|c_P(U_{1,b}^n)-P|
\geq \tfrac{1}{4}\eps
\,|\,
(X_{0,b}^n,W_1^n(1,L_b),W_2(m))\in\Ben
\}\leq \sqrt{\delta_n}
\to 0\text{ as } n\to\infty$
due to the law of large numbers.

For the initialisation phase, i.e. block $b=0$,   the encoding and decoding is successful if the message $m_1\in[1:2^{nR}]$ can be successfully send in the initialisation block. This can be done with arbitrary small, but positive probability of error with a sufficiently long block length $n'=\alpha n$ since $\alpha$ has been chosen such that ${nR}<{n'R(1)}=\alpha n R_{GP}$ holds. Thus, we have
$\mathbb{P}\{E^e_0(M_{1})\,|\, \bigcup\limits_{\beta=1}^{B} \bar{E}_\beta^e (M_{\beta+1})\}\to 0$
as well as 
$\mathbb{P}\{E_0^d\}\to 0$
as $n\to\infty$. It follows that $\mathbb{P}\{ E^e\}\to 0$ as $n\to \infty$.

Next, we analyze the decoding error in block $b$, $1\leq b\leq B$. Let $Y_{1,b}^n$ denote the received sequences at the decoder in block $b$. Further, let $E_b^t$ denote the event that sequence $Y_{1,b}^n$  is not jointly typical, i.e., $E_b^t=\{(Y_{1,b}^n,W_1^n(M_{b+1},L_b), \tilde{W}_{2,b}^n)\notin \Aen(Y_1,W_1,W_2)\}$. Then  the decoding error probability $\mathbb{P}\{E^d_b \,|\, \bigcup\limits_{\beta=0}^{b-1} \bar{E}_\beta^d \cup \bar{E}^e \}$
can be upper bounded by
\begin{align}
\mathbb{P}\{E^d_b \,|\, \bigcup\limits_{\beta=0}^{b-1} \bar{E}_\beta^d \cup \bar{E}^e \cup \bar{E}_b^t\}+ 
\mathbb{P}\{E_b^t \,|\, \bigcup\limits_{\beta=0}^{b-1} \bar{E}_\beta^d \cup \bar{E}^e \},
\end{align}
using the union bound. Using the definition of $\Ben$ we obtain the following upper bound for the second term
\begin{align}
\mathbb{P}\{E_b^t \,|\, \bigcup\limits_{\beta=0}^{b-1} \bar{E}_\beta^d \cup \bar{E}^e \}
&
=\mathbb{P}\{Y_{1,b}^n\notin\Aen(Y_1\,|\,W_{1,b}^n, W_{2,b}^n)\,|\,(X_{0,b},W_{1,b}^n, W_{2,b}^n)\in\Ben\}\\
&
\leq \!\!\max\limits_{(x_0^n,w_1^n,w_2^n)\in\Ben} \!\!\eta^{(n)}(x_0^n,w_1^n,w_2^n)\leq \sqrt{\delta_n}\to 0 \text{ as } n\to\infty,
\end{align}
which also ensures that $W_1^n(M_{b+1},L_b)$ will be jointly typical with $Y_{1,b}^n$ and $W_{2,b}^n$. For the correct decoding in block $b$, $1\leq b \leq B$ we have  
\begin{align}
\mathbb{P}\{E^d_b \,|\, \bigcup\limits_{\beta=0}^{b-1} \bar{E}_\beta^d \cup \bar{E}^e \cup \bar{E}_b^t\}&\leq \mathbb{P}\{\exists \tilde{\ell}_b, \tilde{m}_{b+1}\neq M_{b+1}\, :  
W_1^n(\tilde{m}_{b+1},\tilde{\ell}_{b+1}) \in\Aen(W_1|Y_{1,b}^n,W_{2,b}^n)\}\\
&\leq\!\!\!\!\!\!\! \sum_{\substack{\ell_b,\tilde{m}_{b+1}:\\\tilde{m}_{b+1}\neq M_{b+1}}}
\!\!\!\!\!
\max_{(y_1^n,w_2^n)\in\Aen}
\!\!\!\!\!
 \mathbb{P}\{W_1^n(\tilde{m}_{b+1},\tilde{\ell}_{b+1}) \in\Aen(W_1|y_1^n,w_2^n)\}\\
 &\leq 2^{nR} 2^{nR_L} 2^{-n(I(W_1;Y_1,W_2)-3\eps)}=2^{n(R+R_L-I(W_1;Y_1,W_2)+3\eps)},
\end{align}
which goes to 0 as  $n\to\infty$
 if $R+R_L<I(W_1;Y_1,W_2)-3\eps$. It follows that $\mathbb{P}\{ E^d\}\to 0$ as $n\to \infty$.\\[1ex]


{\it Witsenhausen cost analysis:} We first analyze the cost of control.
Let $\psi_b=\psi^{(n)}(X_{0,b}^n,X_{1,b}^n,U_{1,b}^n,U_{2,b}^n,W_{1,b}^n,W_{2,b}^n,Y_{1,b}^n)$ indicate an {\it  error}  in block $b$. From the previous we have  $\mathbb{E}[\psi_b]\to 0$
as $n\to\infty$.
If ${\psi}_b=1$, either for 
the generated input sequence $u_{1,b}^n$ we have $|c_P(u_{1,b}^n)-P|\geq  \tfrac{1}{2}\eps$, or the first cost constraint is satisfied but the second cost constraint or the jointly typicality condition are not satisfied. If the first constraint is not satisfied, then the encoder sets $u_{1,b}^n$ to  the all zero codeword, i.e., bounded error for $\psi_b=1$ so that $\mathbb{E}[|c_P(U_{1,b}^n)-P|]<\eps$ can be shown for $n$ sufficiently large.


Next, for the  estimation error cost we extend the distortion analysis approach by  Wyner \cite{W78trdf}.
Let $\chi_{E,b}$ be an indicator function of the event of an encoding or decoding error in block $b$. 
 From the previous error analysis we have 
 $\mathbb{P}\{\chi_{E,b}\}
 \to 0\text{ as } n\to\infty.$
 Define $\phi_b=(1-\psi_b)(1-\chi_{E,b})$ indicating the  {\it event of desired sequences that satisfy cost and joint typicality constraints} AND {\it no coding error event} in block $b$.  From the previous we have  $\mathbb{E}[{\phi}_b]\to 1$
as $n\to\infty$. In particular, if $\phi_b=1$, then we have $\mathbb{E}[|c_S(X_{1,b}^n,\hat{U}_{2,b}^n)-\hat{S}|]<\tfrac{1}{12}\eps$.  
Therewith, we obtain
\begin{align}
&\mathbb{E}[|c_S(X_{1,b}^n,\hat{U}_{2,b}^n)-\hat{S}]=
\mathbb{E}[\phi_b
|c_S(X_{1,b}^n,\hat{U}_{2,b}^n)-\hat{S}|]+
\mathbb{E}[\bar{\phi}_b
|c_S(X_{1,b}^n,\hat{U}_{2,b}^n)-\hat{S}|]\\
&\leq 
\tfrac{1}{12}\eps 
+ \mathbb{E}[\bar{\phi}_b\hat{S}]
+ \mathbb{E}[\bar{\phi}_bc_S(X_{1,b}^n,\hat{U}_{2,b}^n)].
\end{align}
For $n$ sufficiently large we have
$\mathbb{E}[\bar{\phi}_b\hat{S}]\leq 
\tfrac{1}{12}\eps $ since $\hat{S}<\infty$. The last term can be bounded following Wyner's trick as done in \cite{VOS18hiwp}, which we however need to extend because the internal state $X_1$ instead of source $X_0$ is estimated.

First note that using Cauchy-Schwartz inequality, we have
$\sum_{i=1}^n (a_i+b_i)^2 \leq \sum_{i=1}^n a_i^2+b_i^2 + 2 \sqrt{(\sum_{i=1}^n a_i^2)(\sum_{i=1}^n b_i^2)}$, for any $a_i,b_i\in\mathbb{R}$. Since 
 $X_1=X_0+U_1$, we have
$c_S(X_{1,b}^n,\hat{U}_{2,b}^n)=\frac{1}{n}\sum_{i=1}^n(X_{0i,b}+U_{1i,b}-\hat{U}_{2i,b})^2$.
Associating $U_{1i,b}$ as $a_i$ and $X_{0i,b}-\hat{U}_{2i,b}$ as $b_i$, we obtain the following inequality
\begin{align}
    c_S(X_{1,b}^n,\hat{U}_{2,b}^n)\leq c_P(U_{1,b}^n)+c_S(X_{0,b}^n,\hat{U}_{2,b}^n)
    +2\sqrt{
c_P(U_{1,b}^n)
c_S(X_{0,b}^n,\hat{U}_{2,b}^n)
}.
\end{align}
Further, the encoding ensures that we always have $c_P(U_{1,b}^n)\leq P+
\eps$. Since $\sqrt{\cdot}$ is concave, using Jensen inequality we have 
\begin{align}
\mathbb{E}[\bar{\phi}_bc_S(X_{1,b}^n,\hat{U}_{2,b}^n)]
\leq 
\mathbb{E}\{\bar{\phi}_b 
(P+
\eps+
c_S(X_{0,b}^n,\hat{U}_{2,b}^n))\}
+2\sqrt{
\mathbb{E}[\bar{\phi}_b
(P+
\eps)
c_S(X_{0,b}^n,\hat{U}_{2,b}^n)]
}.
\end{align}
Now, we can argue following Wyner's trick exploiting the discretization of $U_2$ as follows 
\begin{align}
\mathbb{E}[\bar{\phi}_bc_S(X_{0,b}^n,\hat{U}_{2,b}^n)]
=\frac{1}{n}\sum_{i=1}^n
\mathbb{E}[\bar{\phi}_bc_S(X_{0i,b},\hat{U}_{2,i,b})]
\leq \frac{1}{n}\sum_{i=1}^n\mathbb{E}[\bar{\phi}_bD(X_{0,i,b})],
\end{align}
with $D(X_{0,i,b})=\max_{\hat{u}_{2,k}}c_S(X_{0,i,b},\hat{u}_{2,k})$. The random variables $\{D(X_{0,i,b})\}_i$ are  i.i.d. and
 integrable since $c_S(\cdot,\cdot)$ is a squared distance measure and $X_{0,i,b}$ is Gaussian distributed. 
Next, let $\chi_{\{ D(X_{0i,b})>d \}}$ denote an indicator function which is one if $ D(X_{0i,b})>d $. Then we have
\begin{align}
\mathbb{E}[\bar{\phi}_b &
(P+
\eps+
c_S(X_{0,b}^n,\hat{U}_{2,b}^n))]
\leq 
(P+
\eps+d) \,\mathbb{E}[\bar{\phi}_b]
+ 
\mathbb{E}[D(X_{0i,b})
\chi_{\{ D(X_{0i,b})>d \}}
],
\end{align}
as well as
\begin{align}
2\sqrt{\mathbb{E}[\bar{\phi}_b
(P+
\eps)
c_S(X_{0,b}^n,\hat{U}_{2,b}^n)}\leq 
2\sqrt{(P+
\eps) (
\,\mathbb{E}[\bar{\phi}_b]
d + 
\mathbb{E}[D(X_{0i,b})
\chi_{\{ D(X_{0i,b})>d \}}
])}.
\end{align}
Since $D(X_{0i,b})$ is integrable, for any $\eps_d>0$ there must exist a $d_0$ such that 
$\mathbb{E}[D(X_{0i,b}) \chi_{\{ D(X_{0i,b})>d \}}]<\eps_d$ for all $d>d_0$ due to the monotone convergence theorem. Thus for a sufficiently small $\eps_d$ and a sufficiently large $n$ both right hand sides can be upper bounded by $\frac{1}{24}\eps$ so that 
$$
\mathbb{E}[\bar{\phi}_bc_S(X_{1,b}^n,\hat{U}_{2,b}^n)]
\leq \tfrac{1}{12}\eps.
$$
Thus, for the costs of block $b$ we have
\begin{align}
\mathbb{E}[|c_S( X_{1,b}^n,\hat{U}_{2,b}^n)-S|]
&\leq |\hat{S}-S|+ \mathbb{E}[|c_S( X_{1,b}^n,\hat{U}_{2,b}^n)-\hat{S}|]\\
&\leq \tfrac{1}{4}\eps+
\mathbb{E}[|c_S( X_{1,b}^n,\hat{U}_{2,b}^n)-\hat{S}|]
\leq \tfrac{1}{2}\eps,
\end{align}
and $\mathbb{E}\{|{c}(U_{1,b}^n, X_{1,b}^n,U_{2,b}^n)-P-S|\}\leq\tfrac{1}{2}\eps$.

Lastly, we have to include the cost of the initialisation block $b=0$. Since the average transmit power in the initial phase is also set to $P$, we have 
\begin{align}
\mathbb{E}[|c_P(U_{1}^{Bn+n'})-P|]
\leq \frac{\alpha n}{(B+\alpha)n}
\mathbb{E}[|c_P(U_{1,0}^{n'})-P|]+
\frac{n}{(B+\alpha)n}
\sum_{b=1}^B
\mathbb{E}[|c_P(U_{1,b}^{n})-P|]
\leq 
\eps.
\end{align}
For the estimation error, the initial phase results in a larger but bounded error average error $S'<\infty$. The impact however can be made arbitrary small  with a sufficiently large number of blocks $B$  as follows
\begin{align}
\mathbb{E}[|c_S(X_1^{Bn+n'}, U_{2}^{Bn+n'})-S|]
\leq\frac{\alpha n}{(B+\alpha)n}\mathbb{E}[|c_S(X_1^{n'},U_{2}^{n'})
-S|]
+\frac{n}{(B+\alpha)n} 
\sum_{b=1}^B
\mathbb{E}[|c_S(X_{1,b}^{n}, U_{2,b}^{n})-S|]
\leq \eps,
\end{align}
for $n$ and $B$ sufficiently large. 

Lastly, the existence of a coordination scheme follows from the extension of the random coding argument as in the proof of \cite[Lemma 2.2]{BB11pls}.


{\it Closedness:} The previous holds if the rate constraint holds with strict inequality. For equality, we can argue as in \cite[App.C]{MT17jeco}, i.e., since $N<\infty$, we can always find an approximation of the random variables $W_1, W_2, U_1$ and $U_2$ that result in an arbitrary small increase of the costs, but satisfy the rate constraint with strict inequality.\hfill $\square$

\subsection{Converse proof}\label{sec:ConverseProofCoordination}

The converse proof follows the same arguments as in \cite[Sec.V-B]{MT15ctec}. We consider a control design $c\in\mc{C}_{\mathsf{d}}(n)$ of block-length $n\in\N^{\star}$ such that $\gamma^n_{\mathsf{p}}(c)<+\infty$ and $\gamma^n_{\mathsf{s}}(c)<+\infty$. According to Csisz\'{a}r sum identity, see \cite[pp.25]{EK11nit}, we have
\begin{align}
0=&  \sum_{t=1}^n I(   X^n_{0,t+1} ; Y_{1,t}       | Y_1^{t-1}) - \sum_{t=1}^n I( Y_1^{t-1}   ; X_{0,t}| X^n_{0,t+1} ) \label{eq:Conv1} \\
=&  \sum_{t=1}^n I(   X^n_{0,t+1} ; Y_{1,t}       | Y_1^{t-1}) - \sum_{t=1}^n I(X^n_{0,t+1} , Y_1^{t-1}   ; X_{0,t}) \label{eq:Conv2} \\
=&  \sum_{t=1}^n I(W_{1,t}; Y_{1,t} | W_{2,t} ) - \sum_{t=1}^n I(W_{1,t}, W_{2,t} ; X_{0,t}),\label{eq:Conv3} \\
=&  n\cdot\Big(  I(W_{1,T}; Y_{1,T} | W_{2,T}, T ) - I(W_{1,T}, W_{2,T} ; X_{0,T}|T)\Big)\label{eq:Conv4} \\
\leq &  n\cdot\Big(  I(W_{1,T},T; Y_{1,T} | W_{2,T} ) - I(W_{1,T},T, W_{2,T} ; X_{0,T})\Big)\label{eq:Conv5} \\
\leq &  n\cdot\Big(  I(W_{1}; Y_{1} | W_{2} ) - I(W_{1}, W_{2} ; X_{0})\Big)\label{eq:Conv6} 
\end{align}
Equation \eqref{eq:Conv2} comes from the i.i.d. property of the state.\\ 
Equation \eqref{eq:Conv3} comes from the identification of the auxiliary random variables $W_{1,t} = X^n_{0,t+1} $ and $W_{2,t} = Y_1^{t-1}$, for $t\in\{1,\ldots,n\}$.\\
Equation \eqref{eq:Conv4} comes from the introduction of 
the uniform random variable $T \in \{1,\ldots,n\}$ and the auxiliary random variables $X_{0,T}$,  $W_{1,T}$, $W_{2,T}$ and $Y_{1,T}$, where $Y_{1,T}$ is distributed according to 
\begin{align}
\mathbb{P}\big\{Y_{1,T} = y_1\big\}  = \frac1n\sum_{t=1}^n \mathbb{P}\big\{Y_{1,t} = y_1\big\},\qquad \forall y_1\in \mc{Y}_1.\label{eq:AuxRVT}
\end{align}
Equation \eqref{eq:Conv5} comes from the independence between the random variables $T$ and $X_{0,T}$.\\
Equation \eqref{eq:Conv6} comes from the introduction of the auxiliary random variables $X_0=X_{0,T}$, $Y_1=Y_{1,T}$, $W_1=(W_{1,T},T)$, $W_2 =W_{2,T}$.

We show that the auxiliary random variables $W_{1,t} = X^n_{0,t+1} $ and $W_{2,t} = Y_1^{t-1}$ satisfy the following Markov chains, for $t\in\{1,\ldots,n\}$. 
\begin{align}
&(X_{1,t},Y_{1,t})  -\!\!\!\!\minuso\!\!\!\!- (X_{0,t} ,U_{1,t}) -\!\!\!\!\minuso\!\!\!\!-  (W_{1,t}, W_{2,t}) ,\label{eq:MarkovChain1}\\ 
&U_{2,t} -\!\!\!\!\minuso\!\!\!\!- (Y_{1,t} , W_{2,t} ) -\!\!\!\!\minuso\!\!\!\!-  (X_{0,t},X_{1,t},U_{1,t}, W_{1,t} ) . \label{eq:MarkovChain2} 
\end{align}
Equation  \eqref{eq:MarkovChain1} comes from the memoryless property of the channel $\QQ_{X_1Y_1|X_0U_1}$.\\
Equation  \eqref{eq:MarkovChain2} comes from the causal decoding: for all $t\in\{1,\ldots,n\}$, the output of the decoder $U_{2,t}$ depends on the symbols $(X_{0,t},X_{1,t},U_{1,t}, X^n_{0,t+1}  ) $ only through the past and current channel outputs $( Y_{1,t}, Y_1^{t-1} )$. 

This implies that the auxiliary random variables $X_0=X_{0,T}$, $U_1=U_{1,T}$,  $X_1=X_{1,T}$, $Y_1=Y_{1,T}$, $W_1=(W_{1,T},T)$, $W_2 =W_{2,T}$, $U_2=U_{2,T}$ satisfy the  following Markov chains. 
\begin{align}
&(X_{1},Y_{1})  -\!\!\!\!\minuso\!\!\!\!- (X_{0} ,U_{1}) -\!\!\!\!\minuso\!\!\!\!-  (W_{1}, W_{2}) ,\label{eq:MarkovChain11}\\ 
&U_{2} -\!\!\!\!\minuso\!\!\!\!- (Y_{1} , W_{2} ) -\!\!\!\!\minuso\!\!\!\!-  (X_{0},X_{1},U_{1}, W_{1} ) . \label{eq:MarkovChain22} 
\end{align}
Therefore, the distribution of the auxiliary random variables decomposes as in \eqref{eq:TargetDistribution0}. 

We reformulate the $n$-stage costs by using the auxiliary random variables $U_1=U_{1,T}$,  $X_1=X_{1,T}$, $U_2=U_{2,T}$ and \eqref{eq:AuxRVT}.
\begin{align}
\gamma^n_{\mathsf{p}}(c) =& \E\Bigg[\frac{1}{n}\sum_{t=1}^n {U_{1,t}}^2\Bigg] =   \E\Big[ {U_{1,T}}^2\Big] =   \E\Big[ U_{1}^2\Big]\label{eq:GammaN_p4}.
\end{align}
By using a similar argument, we show that
\begin{align}
\gamma^n_{\mathsf{s}}(c) =& \E\Big[ \big(X_1-U_2\big)^2\Big]\label{eq:GammaN_s4}.
\end{align}

In conclusion, if the pair of costs $(P,S)\in\R^2$ is achievable, then for all $\varepsilon>0$, there exists $\bar{n}\in\N^{\star}$ such that for all $n\geq \bar{n}$, there exists a control design $c\in\mc{C}_{\mathsf{d}}(n)$ such that \eqref{eq:TargetDistribution0} and \eqref{eq:TheoIC}  are satisfied and 
\begin{align}
\Big| P - \E\Big[ U_{1}^2\Big] \Big| + \Big| S - \E\Big[ \big(X_1-U_2\big)^2\Big] \Big| \leq \varepsilon.\label{eq:EqualConstraint0}
\end{align}
The equation \eqref{eq:EqualConstraint0} is valid for all $\varepsilon>0$. This concludes the converse proof of Theorem \ref{theo:MainResult}.

\section{Proof of Lemma \ref{lemma:LinearScheme}}\label{sec:ProofLemmaLinear}

We consider the linear strategy
\begin{align}
U_1 = a \cdot X_0 +b,
\end{align}
with parameters $(a,b)\in \R^2$. It induces an interim state random variable
\begin{align}
X_1 = (1+a) \cdot X_0 +b,
\end{align}
which is distributed according to $\mathcal{N}\big(b,(1+a)^2Q\big)$. Since $Y_1 = X_1 + Z_1$ with $Z_1\sim\mc{N}(0,N)$, the conditional probability density function of $X_1$ given a realization $Y_1=y_1$ is given by
\begin{align}
f(x_1|y_1) = \frac{1}{\sqrt{\frac{(1+a)^2QN}{(1+a)^2Q+N}}}\phi \Bigg(\frac{x_1 - y_1\cdot \frac{(1+a)^2Q}{(1+a)^2Q+N} - \frac{b N }{(1+a)^2Q + N}}{\sqrt{\frac{(1+a)^2QN}{(1+a)^2Q + N}}}\Bigg), \quad \forall x_1\in \R,\label{eq:ConditionalDistribution}
\end{align}
where $\phi(x)=  \frac{1}{\sqrt{2 \pi}}\exp\big(-\frac{x^2}{2}\big)$. Therefore, we have 
\begin{align}
&\E\Big[{U_1}^2\Big] = a^2 Q + b^2,\label{eq:PowerCostLemma}\\
&\E\Big[\big(X_1 - \E[X_1|Y_1]\big)^2\Big|Y_1\Big] = \frac{(1+a)^2QN}{(1+a)^2Q+N}.\label{eq:EstimationCostLemma}
\end{align}
The estimation cost in \eqref{eq:EstimationCostLemma} does not depend on the parameter $b\in \R$. The function $a\mapsto \frac{(1+a)^2QN}{(1+a)^2Q+N}$ is strictly decreasing over the interval $]-\infty,-1[$, reaches zero in $a=-1$, and is strictly increasing over $] -1,+\infty[$.

Suppose that $P> Q$, then we select $a=-1$ and $b\in \R$ such that 
\begin{align}
 Q + b^2 = P \quad \Longleftrightarrow \quad b \in \Big\{ \sqrt{P-Q}, - \sqrt{P-Q}\Big\},
\end{align}
which induces an estimation cost in \eqref{eq:EstimationCostLemma} equal to zero.

Suppose that $P \leq Q$ and assume that $b\neq0$. For all parameters $a\in R$ and $0<b^2$ such that $a^2 Q + b^2=P$, there exists other parameters $a' = -\sqrt{\frac{P}{Q}}<a$ and $b'=0$ such that ${a'}^2 Q + {b'}^2=a^2 Q + b^2=P$ and the estimation cost in \eqref{eq:EstimationCostLemma} is reduced. Therefore, at the optimum we must have $b=0$ and $a = -\sqrt{\frac{P}{Q}}$. Therefore, the best linear scheme is defined by 
\begin{align}
U_{1,\ell}(P)=& 
\begin{cases}
-\sqrt{\frac{P}{Q}} X_0&\text{ if } P\in[0,Q],\\
-X_0+\sqrt{P-Q} \qquad\qquad &\text{ otherwise.}
\end{cases}\label{eq:MMSElinear00}
\end{align}
It induces an estimation cost 
\begin{align}
\inf_{(a,b)\in R^2,\atop a^2 Q + b^2 = P}\E\Big[(X_1 - \E[X_1|Y_1])^2)\Big|Y_1\Big] 
=&\begin{cases}
\frac{ \big(\sqrt{Q}-  \sqrt{P} \big)^2 \cdot N}{ \big(\sqrt{Q}-  \sqrt{P} \big)^2 + N}\quad &\text{ if } P\in[0,Q],\\
0 \qquad\qquad &\text{ otherwise,}
\end{cases}\label{eq:MMSElinear0}
\end{align}
that corresponds to the definition of $\mathsf{MMSE}_{\ell}(P)$.

\section{Proof of Theorem \ref{theo:CharacterizationContinuousRV}}\label{sec:ProofTheoGaussian}

Throughout the proof, we assume that $P\leq Q$.

\subsection{Lower bound}\label{sec:ConverseProof}

The Markov chain $Y_1 -\!\!\!\!\minuso\!\!\!\!- (X_0,U_1)  -\!\!\!\!\minuso\!\!\!\!- (W_1,W_2)$ implies
\begin{align}
I(W_1;Y_1|W_2) - I(W_1,W_2;X_0) 
\leq& I(W_1;Y_1|W_2,X_0) - I(W_2;X_0)\\
\leq& I(U_1;Y_1|W_2,X_0) - I(W_2;X_0).
\end{align}
Therefore
\begin{align}
\mathsf{MMSE}_{\mathsf{G}}(P) 
\geq& \min_{\QQ_{U_1W_2|X_0}  \in \Q_1(P)} \E_{\QQ}\Big[\Big(X_{1} - \E\big[X_1|W_2,Y_1\big]\Big)^2\Big| W_2,Y_1\Big], \label{eq:MMSEweak}
\end{align}
where 
\begin{align}
\Q_1(P) = \bigg\{  &\QQ_{U_1W_2|X_0} \;\text{ s.t. }\; P = E_{\QQ}\big[U_{1}^2\big],\nonumber \\
& I(U_1;Y_1|W_2,X_0) - I(W_2;X_0)\geq0,  \nonumber \\
& (X_0,U_1,W_2,X_1,Y_1, U_2)\;  \text{ are Gaussian  }\bigg\},
\end{align}

According to \eqref{eq:EstimationCostLemma}, the estimation cost does not depend on the mean vector of the Gaussian random variables. Without loss of generality, we consider that the Gaussian random variables $(X_0,W_2,U_1)\sim  \mc{N}(0,K)$ optimal for \eqref{eq:MMSEweak}, are centered with covariance matrix
\begin{align}
K = 
\begin{pmatrix}
Q & \rho_1\sqrt{QV} & \rho_2\sqrt{QP} \\
\rho_1\sqrt{QV} &V & \rho_3\sqrt{VP} \\
 \rho_2\sqrt{QP} &  \rho_3\sqrt{VP} & P\\
\end{pmatrix},\label{eq:CovarianceMatrix}
\end{align}
where the correlation coefficients $(\rho_1,\rho_2,\rho_3)\in[-1,1]^3$ are such that $\det (K) = Q V  P \cdot \big(1- {\rho_1}^2- {\rho_2}^2- {\rho_3}^2 + 2 \rho_1 \rho_2\rho_3\big)\geq0$, i.e. $K$ is semi-definite positive.

\begin{lemma}\label{lemma:EntropyComputationRho}
Assume that $(X_0,W_2,U_1)\sim  \mc{N}(0,K)$, then
\begin{align}
I(U_1;Y|X_0,W_2)  - I(X_0;W_2)  
&=  \frac{1}{2 }\log_2 \bigg( \frac{ P}{ N}  \cdot (1- {\rho_1}^2 - {\rho_2}^2 - {\rho_3}^2 + 2 \rho_1\rho_2\rho_3) + (1- {\rho_1}^2)\bigg),\label{eq:IC}\\
\E_{\QQ}\Big[\Big(X_{1} - \E\big[X_1|W_2,Y_1\big]\Big)^2\Big| W_2,Y_1\Big] 
&=  \frac{N  \Big(Q (1-{\rho_1}^2 )+  P(1-{\rho_3}^2)+ 2\sqrt{QP}  (\rho_2 -  \rho_1 \rho_3)\Big)}{N + \Big( Q (1-{\rho_1}^2) +  P (1-{\rho_3}^2) +  2\sqrt{QP}  (\rho_2 -  \rho_1 \rho_3) \Big)}.\label{eq:MMSE}
\end{align}
\end{lemma}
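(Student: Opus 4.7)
Since $(X_0,W_2,U_1)$ is jointly Gaussian with covariance $K$, and $Y_1 = X_0 + U_1 + Z_1$ with $Z_1 \sim \mathcal{N}(0,N)$ independent of everything, the whole tuple $(X_0, W_2, U_1, X_1, Y_1)$ is jointly Gaussian. The plan is to compute the two quantities directly using standard formulas for conditional variance of Gaussian vectors. All quantities below are mean-zero so I work only with covariances.

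\textbf{Information constraint \eqref{eq:IC}.} The idea is to write $I(U_1; Y_1 | X_0, W_2) - I(X_0; W_2)$ in terms of conditional variances. Since $Y_1 = X_0 + U_1 + Z_1$ and $Z_1 \perp (X_0,W_2,U_1)$, conditioning on $(X_0,U_1,W_2)$ leaves only $Z_1$, so $h(Y_1|X_0,W_2,U_1) = \frac{1}{2}\log_2(2\pi e N)$ and $h(Y_1|X_0,W_2) = \frac{1}{2}\log_2\bigl(2\pi e (\mathrm{Var}(U_1|X_0,W_2) + N)\bigr)$. Thus
\begin{align*}
I(U_1; Y_1|X_0,W_2) \;=\; \tfrac{1}{2}\log_2\!\left( 1 + \tfrac{\mathrm{Var}(U_1|X_0,W_2)}{N}\right).
\end{align*}
Using the Schur-complement formula, together with $\det\bigl(\mathrm{Cov}(X_0,W_2)\bigr) = QV(1-\rho_1^2)$, a short calculation gives
\begin{align*}
\mathrm{Var}(U_1|X_0,W_2) \;=\; P \cdot \frac{1 - \rho_1^2 - \rho_2^2 - \rho_3^2 + 2\rho_1\rho_2\rho_3}{1 - \rho_1^2}.
\end{align*}
Since $I(X_0;W_2) = -\tfrac{1}{2}\log_2(1-\rho_1^2)$, combining these two identities and factoring $\frac{1}{N}$ inside the logarithm yields \eqref{eq:IC}.

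\textbf{Estimation cost \eqref{eq:MMSE}.} Since the tuple $(X_1, W_2, Y_1)$ is jointly Gaussian, the conditional mean $\E[X_1|W_2,Y_1]$ is affine in $(W_2,Y_1)$ and the conditional variance is deterministic, so the MMSE equals $\mathrm{Var}(X_1|W_2,Y_1)$. Let $A = \mathrm{Var}(X_1) = Q + P + 2\rho_2\sqrt{QP}$ and $B = \mathrm{Cov}(W_2,Y_1) = \mathrm{Cov}(X_1,W_2) = \sqrt{V}(\rho_1\sqrt{Q}+\rho_3\sqrt{P})$. Then $\mathrm{Var}(Y_1) = A + N$. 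Applying the block-matrix Schur-complement formula to the $2\times 2$ covariance matrix of $(W_2,Y_1)$, I obtain after cancellation
\begin{align*}
\mathrm{Var}(X_1|W_2,Y_1) \;=\; A - \frac{N B^2 + V A^2 - A B^2}{V(A+N) - B^2} \;=\; \frac{N\bigl(A - B^2/V\bigr)}{N + \bigl(A - B^2/V\bigr)}.
\end{align*}
The last step is algebraic: expanding
\begin{align*}
A - B^2/V \;=\; Q(1-\rho_1^2) + P(1-\rho_3^2) + 2\sqrt{QP}(\rho_2 - \rho_1\rho_3),
\end{align*}
which substituted above gives precisely \eqref{eq:MMSE}.

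\textbf{Main obstacle.} Both identities reduce to linear-algebraic manipulations on the $3\times 3$ covariance $K$ and the $2\times 2$ covariance of $(W_2,Y_1)$. The only nontrivial step is tracking the cross-terms so that the Schur-complement numerators telescope to the clean expression $Q(1-\rho_1^2)+P(1-\rho_3^2)+2\sqrt{QP}(\rho_2-\rho_1\rho_3)$ in both formulas; this is what makes the final forms in \eqref{eq:IC} and \eqref{eq:MMSE} share the same geometric quantity.
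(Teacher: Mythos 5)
Your proof is correct and follows essentially the same route as the paper's: both reduce the two identities to second-moment computations on the jointly Gaussian tuple, the paper via determinants of the $3\times 3$ covariances of $(X_0,W_2,Y_1)$ and $(X_1,W_2,Y_1)$ and the corresponding joint entropies, you via the equivalent Schur-complement conditional variances. The intermediate quantities you compute (e.g.\ $\mathrm{Var}(U_1|X_0,W_2)=\det K/\det\mathrm{Cov}(X_0,W_2)$ and $A-B^2/V$) check out and match the paper's determinant evaluations.
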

The proof of Lemma \ref{lemma:EntropyComputationRho} is stated in Sec. \ref{sec:ProofLemmaEntropyComputation}.
Note that \eqref{eq:IC} and \eqref{eq:MMSE} do not depend on the variance parameter $V$ of the auxiliary random variable $W_2$. Moreover, if \eqref{eq:IC} is positive, then the matrix $K$ is semi-definite positive.

By using Lemma \ref{lemma:EntropyComputationRho}, we reformulate \eqref{eq:MMSEweak} and since the function $x\to \frac{N\cdot x}{N+ x}$ is strictly increasing for all $x\geq0$, the optimal parameters $(\rho_1^{\star}, \rho_2^{\star}, \rho_3^{\star})\in[-1,1]^3$ minimize
\begin{align}
Q (1-{\rho_1}^2 )+  P(1-{\rho_3}^2)+ 2\sqrt{QP}  (\rho_2 -  \rho_1 \rho_3) ,\label{eq:ConverseMinimizationCriteriaPb22}
\end{align}
under the constraint
\begin{align}
&\frac{ P}{ N}  \cdot (1- {\rho_1}^2 - {\rho_2}^2 - {\rho_3}^2 + 2 \rho_1\rho_2\rho_3) - {\rho_1}^2  \geq 0\label{eq:ConverseMinimizationPb2}\\
\Longleftrightarrow & (1- {\rho_1}^2)\cdot (1  - {\rho_3}^2) -  \frac{N}{P}  \cdot {\rho_1}^2 \geq   (\rho_2 -  \rho_1\rho_3)^2   ,
\end{align}
which yields the optimal parameter
\begin{align}
\rho_2^{\star}  =&   \rho_1\rho_3  - \sqrt{(1- {\rho_1}^2)\cdot (1  - {\rho_3}^2) -  \frac{N}{P}  \cdot {\rho_1}^2}.\label{eq:Rho2}
\end{align}

\begin{lemma}\label{lemma:OptimalRho1Rho3}
If $Q> 4N$ and $P\in[P_1,P_2]$, then 
\begin{align}
{\rho_1^{\star}}= \sqrt{\frac{P Q - (P+N)^2}{Q(P+N)}},\qquad
\rho_2^{\star} = - \frac{P+N}{\sqrt{PQ}},\qquad
{\rho_3^{\star}}^2=0.
\end{align}
If $Q\leq4N$ \emph{or} if $Q> 4N$ and $P\in[0,P_1]\cup[P_2,Q]$, then 
\begin{align}
{\rho_1^{\star}}=0,\qquad
{\rho_2^{\star}}=-1,\qquad
{\rho_3^{\star}}=0.
\end{align}
\end{lemma}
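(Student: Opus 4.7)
The plan is to substitute $\rho_2^\star$ from \eqref{eq:Rho2} into \eqref{eq:ConverseMinimizationCriteriaPb22} and minimize the resulting function
\begin{equation*}
f(\rho_1,\rho_3) = Q(1-\rho_1^2) + P(1-\rho_3^2) - 2\sqrt{QP}\sqrt{(1-\rho_1^2)(1-\rho_3^2)-(N/P)\rho_1^2}
\end{equation*}
over $(\rho_1,\rho_3)\in[-1,1]^2$ satisfying $(1-\rho_1^2)(1-\rho_3^2)\geq (N/P)\rho_1^2$. I will proceed in four steps: (i) write down the stationarity conditions; (ii) show that any minimizer must satisfy $\rho_3^\star=0$, reducing the problem to one variable; (iii) solve the resulting one-dimensional problem, whose solution bifurcates according to whether a nontrivial interior root is admissible; and (iv) substitute back into \eqref{eq:Rho2} to recover $\rho_2^\star$.

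The crux is step (ii). Differentiating $f$, the bracket that must vanish at an interior stationary point with $\rho_1\neq 0$ yields $\sqrt{QP}\bigl((1-\rho_3^2)+N/P\bigr)=Q\sqrt{(1-\rho_1^2)(1-\rho_3^2)-(N/P)\rho_1^2}$, and the bracket for $\rho_3\neq 0$ yields $\sqrt{QP}(1-\rho_1^2)=P\sqrt{(1-\rho_1^2)(1-\rho_3^2)-(N/P)\rho_1^2}$. My plan is to square both and eliminate the common radical; the two polynomial identities that result simplify to $-N=0$, which is impossible. This forecloses fully interior critical points and leaves only the coordinate axes $\rho_1=0$ and $\rho_3=0$ to inspect. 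On the axis $\rho_1=0$, the factorization $f(0,\rho_3)=(\sqrt{Q}-\sqrt{P(1-\rho_3^2)})^2$ together with the standing hypothesis $P\leq Q$ implies that the minimum occurs at $\rho_3=0$, which also lies on the other axis.

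Along $\rho_3=0$, setting the stationarity condition yields either $\rho_1=0$ or $\rho_1^{\star 2}=(QP-(P+N)^2)/(Q(P+N))$. The nontrivial root is admissible iff $QP\geq(P+N)^2$, i.e.\ $P^2-(Q-2N)P+N^2\leq 0$, and combined with $P\leq Q$ this is exactly $Q>4N$ and $P\in[P_1,P_2]$. A direct substitution gives $f(0,0)=(\sqrt{Q}-\sqrt{P})^2$ and $f(\rho_1^\star,0)=N(Q-P-N)/(P+N)$; checking $f(\rho_1^\star,0)<f(0,0)$ in the admissible regime confirms the interior root is the global minimizer there, while $(0,0)$ wins in its complement. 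Finally, \eqref{eq:Rho2} at $\rho_3=0$ collapses to $\rho_2^\star=-\sqrt{1-\rho_1^{\star 2}(P+N)/P}$, which evaluates to $-1$ at $\rho_1^\star=0$ and simplifies to $-(P+N)/\sqrt{PQ}$ at $\rho_1^{\star 2}=(QP-(P+N)^2)/(Q(P+N))$, matching both claims of the lemma.

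I expect the main obstacle to be step (ii), namely ensuring that no fully interior local minimum hides inside the feasibility region and complicates the case analysis; the polynomial contradiction $-N=0$ cleanly rules this out, after which the remainder is routine calculus together with the algebra needed to evaluate $f$ at the two stationary points and to back-substitute into \eqref{eq:Rho2}.
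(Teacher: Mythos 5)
Your computations are all correct and your route is genuinely different from the paper's. The paper performs a nested optimization: for each fixed $\rho_1^2$ it first minimizes over $\rho_3^2$, obtaining an optimizer ${\rho_3^2}^{\star}(\rho_1^2)$ that is \emph{not} identically zero (it is strictly positive for $\rho_1^2$ in an intermediate window $[\rho_a,\rho_b]$), and then minimizes the resulting piecewise one-variable function $F(\rho_1^2)$, whose minimizer happens to land in the branch where ${\rho_3^2}^{\star}=0$. You instead kill joint interior stationary points outright via the $-N=0$ contradiction (which I verified: with $R$ the radical, the two bracket conditions give $R=\sqrt{P/Q}\,\bigl((1-\rho_3^2)+N/P\bigr)$ and $R=\sqrt{Q/P}\,(1-\rho_1^2)$, and combining them with $R^2=(1-\rho_1^2)(1-\rho_3^2)-(N/P)\rho_1^2$ forces $N=0$), then reduce to the axes. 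The axis work — the critical point ${\rho_1^{\star}}^2=(PQ-(P+N)^2)/(Q(P+N))$, its admissibility window $Q>4N$, $P\in[P_1,P_2]$, the values $f(0,0)=(\sqrt{Q}-\sqrt{P})^2$ and $f(\rho_1^{\star},0)=N(Q-P-N)/(P+N)$, and the back-substitution giving $\rho_2^{\star}$ — is correct and reproduces the paper's answer. Your approach avoids the paper's somewhat delicate case analysis of the piecewise function $F$, at the price of a two-dimensional critical-point argument.

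There is, however, a genuine gap at your step (ii): ruling out fully interior critical points does not leave ``only the coordinate axes to inspect.'' A minimum not attained at an interior critical point must lie on the \emph{boundary of the feasible region}, and here that boundary contains the curve $\{(1-\rho_1^2)(1-\rho_3^2)=(N/P)\rho_1^2\}$, where the radicand vanishes and $f$ is not differentiable; your argument never addresses it. The fix is short: since $\partial f/\partial(\rho_3^2)=-P+\sqrt{QP}\,(1-\rho_1^2)/R\to+\infty$ as $R\to 0^{+}$ (and $1-\rho_1^2>0$ throughout the feasible set), decreasing $\rho_3^2$ from any boundary point with $\rho_3\neq 0$ moves into the feasible interior and strictly decreases $f$, so no such point can be a minimizer; the unique boundary point with $\rho_3=0$ is the corner $\rho_1^2=P/(P+N)$, which lies on the $\rho_3=0$ axis and is dominated there. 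Relatedly, on the $\rho_3=0$ axis you should also account for the endpoint $\rho_1^2=P/(P+N)$ of the feasible interval, not only the stationary points; convexity of $f(\cdot,0)$ in $\rho_1^2$ disposes of it, but it needs to be said. With these two additions the proof is complete.
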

The proof of Lemma \ref{lemma:OptimalRho1Rho3} is stated in App. \ref{sec:ProofLemmaOptimalRho}.
We obtain the lower bound by replacing the optimal parameters $(\rho_1^{\star}, \rho_2^{\star}, \rho_3^{\star})\in[-1,1]^3$ in \eqref{eq:MMSE}. For all $P\leq Q$, we have
\begin{equation}
\mathsf{MMSE}_{\mathsf{G}}(P) \geq
\begin{cases}
\frac{N\cdot(Q-N-P)}{Q} & \text{if } Q> 4N \text{ and } P\in[P_1,P_2],\\
\frac{ \big(\sqrt{Q}-  \sqrt{P} \big)^2 \cdot N}{ \big(\sqrt{Q}-  \sqrt{P} \big)^2 + N}& \text{otherwise. }
\end{cases}\label{eq:SolutionContinuousConverse}
\end{equation}


\subsection{Upper bound}\label{sec:AchievabilityProof}

\subsubsection{Linear Scheme}

According to the Lemma \ref{lemma:LinearScheme} for $P\leq Q$, the optimal linear scheme is given by
\begin{align}
U_1 = - \sqrt{\frac{P}{Q}} \cdot X_0.
\end{align}
Therefore, $\mathsf{MMSE}_{\mathsf{G}}(P) \leq \mathsf{MMSE}_{\ell}(P) $, for all $P\leq Q$.

\subsubsection{Case where $Q> 4N$ and $P\in[P_1,P_2]$}

The upper bound of Theorem \ref{theo:CharacterizationContinuousRV} can be obtained by using a time sharing strategy between the two linear schemes with parameters $P_1$ and $P_2$ defined by 
\begin{align}
P_1 =& \frac{1}{2}\cdot \Big(Q - 2N - \sqrt{Q\cdot(Q-4N)}\Big),\\
P_2 =& \frac{1}{2}\cdot \Big(Q - 2N +\sqrt{Q\cdot(Q-4N)}\Big).
\end{align}

We show that we obtain the same result by replacing the coefficients of Lemma \ref{lemma:OptimalRho1Rho3}
\begin{align}
{\rho_1}= \sqrt{\frac{P Q - (P+N)^2}{Q(P+N)}},\qquad
\rho_2 = - \frac{P+N}{\sqrt{PQ}},\qquad
{\rho_3}=0,\label{eq:OptimalRhoAchieva}
\end{align}
in the covariance matrix $K$ of the random variables $(X_0,W_2,U_1)\sim  \mc{N}(0,K)$, in \eqref{eq:CovarianceMatrix}, and by selecting accurately Costa's auxiliary random variable $W_1$ for a Dirty Paper Coding (DPC), see \cite{Cwodp}.

Since the random variable $W_2$ is correlated with the state $X_0$, we have
\begin{align}
W_2 = & \rho_1 \sqrt{\frac{V}{Q}} X_0 + Z_0, \quad Z_0 \sim \mathcal{N}\big(0, V(1 - {\rho_1}^2)\big),\quad  Z_0 \perp X_0,\label{eq:AchievabilityW2s}
\end{align}
and then
\begin{align}
I(X_0;W_2) = \frac12\log_2 \bigg(\frac{1}{1-{\rho_1}^2}\bigg).\label{eq:ICrho1}
\end{align}
The channel input $U_1$ reformulates as
\begin{align}
U_1 =& \frac{\rho_2 - \rho_1\rho_3}{1 - {\rho_1}^2}\sqrt{\frac{P}{Q}} X_0 + \frac{\rho_3 - \rho_1\rho_2}{1 - {\rho_1}^2}\sqrt{\frac{P}{V}} W_2 + U_0,\label{eq:AchievabilityU1s}\\
\text{with }&\quad U_0 \sim \mathcal{N}\bigg(0,P\cdot \frac{1 - {\rho_1}^2 - {\rho_2}^2 - {\rho_3}^2 + 2 {\rho_1}{\rho_2}{\rho_3}}{1 - {\rho_1}^2}\bigg),\quad  U_0 \perp (X_0,W_2).\label{eq:U0}
\end{align}

In order to evaluated the information constraint, we now state two lemmas.
\begin{lemma}\label{lemma:MultiplicativeEntropy}
Assume that $(X_0,W_2,U_1)\sim  \mc{N}(0,K)$ and let $\tilde{W}_2 = \beta W_2$ with $\beta\in \R$. Then
\begin{align}
H(X_0,\tilde{W}_2 ) =&H(X_0,{W}_2 ) + \log_2|\beta|,\\
H(X_0,\tilde{W}_2,U_1 ) =& H(X_0,{W}_2,U_1 ) + \log_2|\beta|.
\end{align}
\end{lemma}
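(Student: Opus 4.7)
The plan is to invoke the standard scaling property of differential entropy for linear transformations of absolutely continuous random vectors, which in the Gaussian case reduces to a direct determinant computation. Specifically, for any random vector $\mathbf{X}$ with a density and any invertible matrix $A$, one has $H(A\mathbf{X}) = H(\mathbf{X}) + \log_2|\det A|$. Applied to the vector $(X_0, W_2)^\top$ and the diagonal matrix $A = \mathrm{diag}(1, \beta)$, this gives the first identity, since $(X_0, \tilde{W}_2)^\top = A (X_0, W_2)^\top$ and $|\det A| = |\beta|$. Applied to $(X_0, W_2, U_1)^\top$ with $A = \mathrm{diag}(1, \beta, 1)$, the same reasoning yields the second identity.

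Alternatively, I can verify both statements by direct Gaussian computation: since $(X_0, W_2, U_1) \sim \mathcal{N}(0,K)$, the joint differential entropy is $\tfrac{1}{2}\log_2((2\pi e)^3 \det K)$, and the relevant marginal of $(X_0, W_2)$ is $\mathcal{N}(0, K')$ where $K'$ is the corresponding $2\times 2$ principal submatrix. The scaling $\tilde{W}_2 = \beta W_2$ transforms the covariance matrix of $(X_0, \tilde{W}_2, U_1)$ into $D K D^\top$ with $D = \mathrm{diag}(1,\beta,1)$, so $\det(D K D^\top) = \beta^2 \det K$, and $\tfrac{1}{2}\log_2 \beta^2 = \log_2 |\beta|$. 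The same argument applies to the $(X_0,W_2)$ marginal. Both presentations give the two claimed equalities.

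I assume $\beta \neq 0$ throughout, because otherwise $\tilde{W}_2 \equiv 0$ is degenerate and the differential entropy is not defined (consistent with $\log_2 0 = -\infty$ on the right-hand side). There is no genuine difficulty here; the result is really just the Jacobian change-of-variable identity for densities, $f_{\tilde{W}_2}(\tilde{w}_2) = |\beta|^{-1} f_{W_2}(\tilde{w}_2/\beta)$ (and its joint analogues), integrated against $-\log_2$. Accordingly, I would keep the proof to two or three lines, citing the linear-transformation rule for differential entropy.
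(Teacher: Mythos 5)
Your proposal is correct and, in its second (determinant-based) presentation, is essentially identical to the paper's proof, which likewise computes $\det(\tilde{K})=\beta^2\det(K)$ for the scaled covariance matrix and applies the Gaussian entropy formula $\tfrac12\log_2((2\pi e)^k\det K)$. The general linear-transformation rule you cite first is just the textbook generalization of the same computation, and your remark that $\beta\neq 0$ is needed is a valid (implicit in the paper) caveat.
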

The proof of Lemma \ref{lemma:MultiplicativeEntropy} is stated in Sec. \ref{sec:lemma:MultiplicativeEntropy}.

\begin{lemma}\label{lemma:StateDependent}
Consider the state-dependent channel 
\begin{align}
\tilde{Y_1} = \tilde{X}_0 + \tilde{W}_2 +  \tilde{U}_0 + \tilde{Z},
\end{align}
with Gaussian channel-state parameters $(\tilde{X}_0,\tilde{W}_2)\sim  \mc{N}(0,\tilde{K})$ and
\begin{align}
\tilde{K} = 
\begin{pmatrix}
q    &  \mu\sqrt{qv}\\
\mu\sqrt{qv} &v
\end{pmatrix},
\end{align} 
with $q\geq0$, $v\geq0$, $\mu\in[-1,1]$, and with Gaussian noise $\tilde{Z}\sim\mc{N}(0,N)$ such that $\tilde{Z}\perp( \tilde{X}_0, \tilde{W}_2, \tilde{U}_0)$. We assume that the channel input is also Gaussian $\tilde{U}_0\sim\mc{N}(0,P_0)$, $P_0\geq0$, with $\tilde{U}_0\perp(\tilde{X}_0,\tilde{W}_2)$, and we introduce Costa's auxiliary random variable, see \cite{Cwodp}, 
\begin{align}
 \tilde{W}_1 = \tilde{U}_0 + \alpha \tilde{X}_0, \qquad \alpha \in \R.
\end{align}
Then, 
\begin{align}
I( \tilde{W}_1;\tilde{Y_1},\tilde{W}_2) - I( \tilde{W}_1;\tilde{X}_0,\tilde{W}_2) = \frac12\log_2 \Bigg(\frac{P_0\big(q(1-\mu^2)+P_0+N\big)}{P_0N+q(1-\mu^2)\big((1-\alpha)^2 P_0+ \alpha^2 N\big)}\Bigg).\label{eq:ICtilde}
\end{align}
\end{lemma}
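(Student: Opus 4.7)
The plan is to reduce the calculation to a standard Costa Dirty Paper Coding (DPC) computation by conditioning on the decoder side-information $\tilde{W}_2$. The starting point is the chain-rule identity
\begin{align}
I(\tilde{W}_1;\tilde{Y}_1,\tilde{W}_2) - I(\tilde{W}_1;\tilde{X}_0,\tilde{W}_2)
= I(\tilde{W}_1;\tilde{Y}_1|\tilde{W}_2) - I(\tilde{W}_1;\tilde{X}_0|\tilde{W}_2),
\end{align}
which lets us drop the common term $I(\tilde{W}_1;\tilde{W}_2)$ on both sides.

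Next, I condition on a realization $\tilde{W}_2=w_2$. Since $(\tilde{X}_0,\tilde{W}_2)$ is jointly Gaussian with the stated covariance, the conditional law of $\tilde{X}_0$ given $w_2$ is Gaussian with mean $\mu\sqrt{q/v}\,w_2$ and variance $q(1-\mu^2)$. Writing $\tilde{X}_0 = \mu\sqrt{q/v}\,\tilde{W}_2 + \tilde{X}_0'$ with $\tilde{X}_0'\sim\mathcal{N}(0,q(1-\mu^2))$ independent of $\tilde{W}_2$, and using independence of $\tilde{U}_0$ and $\tilde{Z}$ from $(\tilde{X}_0,\tilde{W}_2)$, the channel equation conditioned on $w_2$ becomes, up to a deterministic shift by $(1+\mu\sqrt{q/v})w_2$,
\begin{align}
\tilde{Y}_1 \;\leftrightarrow\; \tilde{X}_0' + \tilde{U}_0 + \tilde{Z},
\end{align}
while the auxiliary reduces, again up to a deterministic shift by $\alpha\mu\sqrt{q/v}\,w_2$, to $\tilde{W}_1 \leftrightarrow \tilde{U}_0 + \alpha\tilde{X}_0'$. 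Deterministic shifts by functions of the conditioning variable do not change mutual information, so
\begin{align}
I(\tilde{W}_1;\tilde{Y}_1|\tilde{W}_2) = I(\tilde{U}_0+\alpha\tilde{X}_0'\,;\,\tilde{X}_0'+\tilde{U}_0+\tilde{Z}),\qquad
I(\tilde{W}_1;\tilde{X}_0|\tilde{W}_2) = I(\tilde{U}_0+\alpha\tilde{X}_0'\,;\,\tilde{X}_0').
\end{align}

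At this point the problem has been reduced to the canonical Costa DPC setting with effective state variance $q(1-\mu^2)$, input power $P_0$, noise variance $N$ and auxiliary parameter $\alpha$. The standard Costa computation proceeds via $h(\tilde{U}_0+\alpha\tilde{X}_0'\,|\,\tilde{X}_0')=\tfrac12\log_2(2\pi e P_0)$ and the conditional variance
\begin{align}
\mathrm{Var}\bigl(\tilde{U}_0+\alpha\tilde{X}_0'\,\big|\,\tilde{X}_0'+\tilde{U}_0+\tilde{Z}\bigr)
= \frac{P_0 N + q(1-\mu^2)\bigl((1-\alpha)^2 P_0 + \alpha^2 N\bigr)}{q(1-\mu^2)+P_0+N},
\end{align}
obtained by the usual MMSE formula $\mathrm{Var}(A)-\mathrm{Cov}(A,B)^2/\mathrm{Var}(B)$ applied to $A=\tilde{U}_0+\alpha\tilde{X}_0'$ and $B=\tilde{X}_0'+\tilde{U}_0+\tilde{Z}$. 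Taking the difference of the two differential entropies yields exactly the right-hand side of \eqref{eq:ICtilde}.

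The main obstacle is the bookkeeping of the conditioning step: one must verify that the shifts absorbed into $\tilde{W}_2$ are genuinely deterministic given $\tilde{W}_2$ (hence do not contribute to either conditional mutual information) and that independence of $\tilde{U}_0$ and $\tilde{Z}$ from the full pair $(\tilde{X}_0,\tilde{W}_2)$ is preserved after conditioning. Once that is cleanly done, the remainder is the routine Costa variance algebra, which I would not grind out in detail since it is the standard DPC computation that already appears in the proof of Proposition~\ref{prop:DPC} and in \cite{Cwodp}.
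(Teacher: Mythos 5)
Your proof is correct, but it takes a genuinely different route from the paper. The paper works with unconditioned joint entropies, writing the difference as $H(\tilde{Y}_1,\tilde{W}_2)-H(\tilde{W}_1,\tilde{Y}_1,\tilde{W}_2)+H(\tilde{W}_1|\tilde{X}_0,\tilde{W}_2)$ and evaluating each term by an explicit determinant computation on the $2\times 2$ and $3\times 3$ covariance matrices of $(\tilde{Y}_1,\tilde{W}_2)$ and $(\tilde{W}_1,\tilde{Y}_1,\tilde{W}_2)$; the $3\times 3$ determinant expansion is the bulk of the work, and the common factor $v$ cancels only at the very end. You instead cancel $I(\tilde{W}_1;\tilde{W}_2)$ by the chain rule, condition on $\tilde{W}_2$, and use the Gaussian innovation decomposition $\tilde{X}_0=\mu\sqrt{q/v}\,\tilde{W}_2+\tilde{X}_0'$ to collapse everything to the canonical scalar Costa problem with effective state variance $q(1-\mu^2)$; the conditional laws do not depend on the realization $w_2$, so the conditional mutual informations equal the unconditional ones of the reduced problem, and a single $2\times 2$ MMSE computation finishes it. Your argument is shorter, avoids the $3\times 3$ determinant entirely, and makes it structurally obvious why the answer is independent of $v$ and depends on $(q,\mu)$ only through $q(1-\mu^2)$ — facts that emerge only after cancellation in the paper's version. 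The one caveat is the degenerate case $v=0$ (where $\sqrt{q/v}$ is undefined and the differential entropies in the paper's version are likewise ill-posed); there $\tilde{W}_2=0$ a.s.\ and the identity reduces trivially to the unconditioned Costa computation, so this is a remark rather than a gap. Your bookkeeping concerns are easily discharged: $\tilde{U}_0$, $\tilde{Z}$ and $\tilde{X}_0'$ are each independent of $\tilde{W}_2$, hence their joint conditional law given $\tilde{W}_2=w_2$ equals their joint unconditional law, and the shifts are measurable functions of $w_2$ alone.
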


The proof of Lemma \ref{lemma:StateDependent} is stated in Sec. \ref{sec:lemma:StateDependent}. We select $P_0= P\cdot \frac{1 - {\rho_1}^2 - {\rho_2}^2 - {\rho_3}^2 + 2 {\rho_1}{\rho_2}{\rho_3}}{1 - {\rho_1}^2}$, $\alpha = \frac{P_0}{P_0+N}$ and we identify $\tilde{U}_0=U_0$ given in \eqref{eq:U0}, $\tilde{Z}= Z$, and 
\begin{align}
\tilde{X}_0 = \Bigg(1+ \frac{\rho_2 - \rho_1\rho_3}{1 - {\rho_1}^2}\sqrt{\frac{P}{Q}} \Bigg) \cdot X_0 ,\qquad 
\tilde{W}_2 =  \frac{\rho_3 - \rho_1\rho_2}{1 - {\rho_1}^2}\sqrt{\frac{P}{V}} \cdot  W_2.
\end{align}
These choices of parameters imply that $\tilde{Y_1} = Y_1$, $\mu=\rho_1$ and
\begin{align}
q= \Bigg(\sqrt{Q}+ \sqrt{P}\frac{\rho_2 - \rho_1\rho_3}{1 - {\rho_1}^2} \Bigg)^2,\qquad 
v = P \Bigg(\frac{\rho_3 - \rho_1\rho_2}{1 - {\rho_1}^2}\Bigg)^2.
\end{align}
We define the auxiliary random variable $W_1$ by
\begin{align}
W_1 = \tilde{W}_1 = \tilde{U}_0 + \alpha \tilde{X}_0 = U_0 + \alpha \Bigg(1+ \frac{\rho_2 - \rho_1\rho_3}{1 - {\rho_1}^2}\sqrt{\frac{P}{Q}} \Bigg) \cdot X_0 .
\end{align}

According to Lemmas \ref{lemma:StateDependent} and \ref{lemma:MultiplicativeEntropy}, and since $\alpha = \frac{P_0}{P_0+N}$ implies $(1-\alpha)^2 P_0+ \alpha^2 N = \frac{P_0 N}{P_0+N}$, we have 
\begin{align}
&I(W_1;{Y_1},{W}_2) - I(W_1;{X}_0,{W}_2)\nonumber\\
=&
H({Y_1},{W}_2) - H(W_1,{Y_1},{W}_2) +H(W_1|X_0,{W}_2)\\
=&
H({Y_1},\tilde{W}_2) - H(W_1,{Y_1},\tilde{W}_2) +H(W_1|\tilde{X}_0,\tilde{W}_2)\\
=& \frac12\log_2\Bigg(\frac{P_0\big(q(1-\mu^2)+P_0+N\big)}{P_0N+q(1-\mu^2)\big((1-\alpha)^2 P_0+ \alpha^2 N\big)}\Bigg)\\
=&\frac12\log_2\Bigg(\frac{P_0\big(q(1-\mu^2)+P_0+N\big)}{P_0N+q(1-\mu^2)\frac{P_0 N}{P_0+N}}\Bigg)\\
=&\frac12\log_2\Bigg(\frac{P_0\big(q(1-\mu^2)+P_0+N\big)}{\frac{P_0 N}{P_0+N}\big(P_0+N+q(1-\mu^2)\big)}\Bigg)\\
=&\frac12\log_2\Bigg(1+ \frac{P_0}{N}\Bigg)\\
=&\frac12\log_2\Bigg(1+ \frac{P}{N}\cdot\frac{1 - {\rho_1}^2 - {\rho_2}^2 - {\rho_3}^2 + 2 {\rho_1}{\rho_2}{\rho_3}}{1 - {\rho_1}^2}\Bigg).\label{eq:ICrho2}
\end{align}

Now, we replace in \eqref{eq:ICrho1} and \eqref{eq:ICrho2}, the coefficients of Lemma \ref{lemma:OptimalRho1Rho3}
\begin{align}
{\rho_1}= \sqrt{\frac{P Q - (P+N)^2}{Q(P+N)}},\qquad
\rho_2 = - \frac{P+N}{\sqrt{PQ}},\qquad
{\rho_3}=0.\label{eq:OptimalRhoAchievaBB}
\end{align}
We obtain
\begin{align}
&I(W_1;W_2,Y_1) - I(W_1;X_0,W_2)  
=  \frac{1}{2 }\log_2 \bigg(\frac{Q(P+N)}{QN + (P+N)^2}\bigg)
= I(X_0;W_2) = I(U_1;Y_1|X_0,W_2).\label{eq:INC}
\end{align}
Equation \eqref{eq:INC} ensures that the combination of the lossy source coding of $X_0$ via $W_2$, with Costa's coding, see \cite{Cwodp}, for state-dependent channel $Y_1$, is achievable. According to Lemma \ref{lemma:EntropyComputationRho}, we have
\begin{align}
\E_{\QQ}\Big[\Big(X_{1} - \E\big[X_1|W_2,Y_1\big]\Big)^2\Big| W_2,Y_1\Big]  = \frac{N\cdot(Q-N-P)}{Q}.\end{align}

\subsection{Proof of Lemma \ref{lemma:EntropyComputationRho}}\label{sec:ProofLemmaEntropyComputation}

We consider  $(X_0,W_2,U_1)\sim \mathcal{N}(0,K)$ with $K$ defined in \eqref{eq:CovarianceMatrix}, which together with \eqref{eq:Gaussian3}, induces the  Gaussian random variables $(X_0,W_2,Y_1)$ whose entropy is
\begin{align}
h(X_0,W_2,Y) =\frac{1}{2 }\log_2 \bigg((2\pi e)^3\cdot Q V  \Big( P (1- {\rho_1}^2 - {\rho_2}^2 - {\rho_3}^2 + 2 \rho_1\rho_2\rho_3) + N (1- {\rho_1}^2)  \Big) \bigg).
\end{align}
Therefore we have
\begin{align}
I(U_1;Y|X_0,W_2)  - I(X_0;W_2)
=& h(X_0,W_2,Y)  - h(Y|U_1,X_0,W_2) - h(X_0) - h(W_2)\\
=&  \frac{1}{2 }\log_2 \bigg( \frac{ P}{ N}  \cdot (1- {\rho_1}^2 - {\rho_2}^2 - {\rho_3}^2 + 2 \rho_1\rho_2\rho_3) + (1- {\rho_1}^2)\bigg). 
\end{align}
According to \eqref{eq:Gaussian1} and \eqref{eq:Gaussian3} the entropy of $(X_1,W_2,Y_1)$ writes
\begin{align}
h(X_1,W_2,Y) = \frac{1}{2 }\log_2 \bigg((2\pi e)^3\cdot   V  N \Big(Q  (1-{\rho_1}^2 )+  P  (1-{\rho_3}^2)+ 2\sqrt{QP}  (\rho_2 -  \rho_1 \rho_3)\Big)\bigg),
\end{align}
and hence 
\begin{align}
\E\Big[\big(X_1 - \E[X_1|W_2,Y_1]\big)^2\Big|W_2,Y_1\Big] = \frac{N \Big(Q  (1-{\rho_1}^2 )+  P (1-{\rho_3}^2)+ 2\sqrt{QP}  (\rho_2 -  \rho_1 \rho_3)\Big)}{N + \Big( Q (1-{\rho_1}^2) +  P (1-{\rho_3}^2) +  2\sqrt{QP}  (\rho_2 -  \rho_1 \rho_3) \Big)}.
\end{align}

\subsection{Proof of Lemma \ref{lemma:OptimalRho1Rho3}}\label{sec:ProofLemmaOptimalRho}

We replace $\rho_2^{\star}$ in \eqref{eq:ConverseMinimizationCriteriaPb22} and we define \begin{align}
f({\rho_1}^2,{\rho_3}^2) = Q  (1-{\rho_1}^2 )+  P(1-{\rho_3}^2) - 2\sqrt{QP}   \sqrt{(1- {\rho_1}^2)(1  - {\rho_3}^2) -  \frac{N}{P}   {\rho_1}^2}.
\end{align}
Note that $f$ is well defined if ${\rho_1}^2\leq \frac{P}{P+N}$ and ${\rho_3}^2 \leq 1 -   \frac{N}{P}  \cdot \frac{{\rho_1}^2}{1- {\rho_1}^2}$. 
\begin{align}
&\frac{\partial f({\rho_1}^2,{\rho_3}^2)}{\partial {\rho_3}^2} = \sqrt{PQ}\cdot \frac{1-{\rho_1}^2}{\sqrt{(1- {\rho_1}^2)\cdot (1  - {\rho_3}^2) -  \frac{N}{P}  \cdot {\rho_1}^2}} - P,
\end{align}
then for all ${\rho_1}^2\leq \frac{P}{P+N}$, the optimal ${{\rho_3}^2}^{\star}({\rho_1}^2)$ is
\begin{align}
{{\rho_3}^2}^{\star}({\rho_1}^2) = \max\Bigg(1  -  \bigg( \frac{Q}{P}\cdot \Big(1-{\rho_1}^2\Big)+ \frac{N}{P}  \cdot \frac{{\rho_1}^2}{1- {\rho_1}^2}\bigg),0\Bigg).\label{eq:OptimalRho33}
\end{align}
We introduce the parameters
\begin{align}
\rho_a =&  \frac{2Q -(P+N) - \sqrt{(P+N)^2 - 4QN} }{2Q}, \label{eq:Condition2LemmaRho3}\\
\rho_b =&  \frac{2Q -(P+N) +\sqrt{(P+N)^2 - 4QN} }{2Q},\label{eq:Condition3LemmaRho3}
\end{align}
and we define the function
\begin{align}
F({\rho_1}^2)=f\Big({\rho_1}^2,{{\rho_3}^2}^{\star}({\rho_1}^2) \Big) 
=\begin{cases}
Q \cdot (1-{\rho_1}^2 )+  P  - 2\sqrt{QP} \cdot  \sqrt{1- {\rho_1}^2\cdot \frac{P+N}{P}}  &\text{ if }  0  \leq {\rho_1}^2\leq  \rho_a ,\\
N  \cdot \frac{{\rho_1}^2}{1- {\rho_1}^2}  & \text{ if }   \rho_a  \leq  {\rho_1}^2\leq\rho_b,\\
Q \cdot (1-{\rho_1}^2 )+  P  - 2\sqrt{QP} \cdot  \sqrt{1- {\rho_1}^2\cdot \frac{P+N}{P}} &\text{ if }  \rho_b \leq {\rho_1}^2\leq \frac{P}{P+N}.
\end{cases}
\end{align}
The function $F({\rho_1}^2)$ is continuous in  $ \rho_a$ and $\rho_b$. We define 
\begin{align}
\rho^{\star} =&  \frac{PQ - (P+N)^2}{Q(P+N)}.
\end{align}
$\bullet$ If $Q> 4N$ and $P\in[P_1,P_2]$, then the function $F({\rho_1}^2)$ is decreasing over the interval ${\rho_1}^2\in[0,\rho^{\star}]$ and increasing over the interval ${\rho_1}^2\in[\rho^{\star},\frac{P}{P+N}]$, then the optimal parameters are
\begin{align}
\rho_1= \sqrt{\rho^{\star}},\qquad 
\rho_2 = - \frac{P+N}{\sqrt{QP}},\qquad\rho_3= 0,
\end{align}
where $\rho_2$ is obtained from \eqref{eq:Rho2}.\\
$\bullet$ If $Q\leq4N$ or if $Q> 4N$ and $P\in[0,P_1]\cup[P_2,Q]$, then the optimal parameters are $\rho_1=\rho_3=0$ which imply $\rho_2=-1$.

\subsection{Proof of Lemma \ref{lemma:MultiplicativeEntropy}}\label{sec:lemma:MultiplicativeEntropy}

We consider $(X_0,W_2,U_1)\sim  \mc{N}(0,K)$ with covariance matrix
\begin{align}
K = 
\begin{pmatrix}
Q & \rho_1\sqrt{QV} & \rho_2\sqrt{QP} \\
\rho_1\sqrt{QV} &V & \rho_3\sqrt{VP} \\
 \rho_2\sqrt{QP} &  \rho_3\sqrt{VP} & P\\
\end{pmatrix},\label{eq:CovarianceMatrixApp}
\end{align}
where $(\rho_1,\rho_2,\rho_3)\in[-1,1]^3$ are such that $\det (K) =Q V  P \cdot \big(1- {\rho_1}^2- {\rho_2}^2- {\rho_3}^2 + 2 \rho_1 \rho_2\rho_3\big)\geq0$, i.e. $K$ is semi-definite positive.

We define $\tilde{W}_2 = \beta W_2$ with $\beta\in \R$. Then $(X_0,\tilde{W}_2,U_1)\sim  \mc{N}(0,\tilde{K})$ with covariance matrix
\begin{align}
\tilde{K} = 
\begin{pmatrix}
Q & \rho_1\sqrt{Q}(\beta\sqrt{V}) & \rho_2\sqrt{QP} \\
\rho_1\sqrt{Q}(\beta\sqrt{V}) &(\beta \sqrt{V})^2 & \rho_3\sqrt{P}(\beta\sqrt{V}) \\
 \rho_2\sqrt{QP} &  \rho_3\sqrt{P}(\beta\sqrt{V}) & P\\
\end{pmatrix},\label{eq:CovarianceMatrixApp}
\end{align}
and therefore $\det(\tilde{K}) = \beta^2 \det (K)$. Hence, we have
\begin{align}
H(X_0,\tilde{W}_2,U_1 ) =& \frac12\log_2\big((2\pi e)^3 \det ({K})\big)+ \log_2\big(\sqrt{\beta^2}\big) \\
=& H(X_0,{W}_2,U_1 ) + \log_2|\beta|,\\
H(X_0,\tilde{W}_2 ) =& H(X_0,{W}_2 ) + \log_2|\beta|.
\end{align}
This concludes the proof of Lemma \ref{lemma:MultiplicativeEntropy}.

\subsection{Proof of Lemma \ref{lemma:StateDependent}}\label{sec:lemma:StateDependent}

Consider the state-dependent channel 
\begin{align}
\tilde{Y_1} = \tilde{X}_0 + \tilde{W}_2 +  \tilde{U}_0 + \tilde{Z},
\end{align}
with Gaussian channel-state parameters $(\tilde{X}_0,\tilde{W}_2)\sim  \mc{N}(0,\tilde{K})$ and
\begin{align}
\tilde{K} = 
\begin{pmatrix}
q    &  \mu\sqrt{qv}\\
\mu\sqrt{qv} &v
\end{pmatrix},
\end{align} 
with $q\geq0$, $v\geq0$, $\mu\in[-1,1]$, and with Gaussian noise $\tilde{Z}\sim\mc{N}(0,N)$ such that $Z\perp( \tilde{X}_0, \tilde{W}_2, \tilde{U}_0)$. We consider that the channel input is also Gaussian $\tilde{U}_0\sim\mc{N}(0,P_0)$, $P_0\geq0$ with $\tilde{U}_0\perp(\tilde{X}_0,\tilde{W}_2)$ and we introduce Costa's auxiliary random variable, see \cite{Cwodp},
\begin{align}
 \tilde{W}_1 = \tilde{U}_0 + \alpha \tilde{X}_0, \qquad \alpha \in \R.
\end{align}

We have
\begin{align}
H(\tilde{W}_1|\tilde{X}_0,\tilde{W}_2)
=&  H(\tilde{U}_0 + \alpha   \tilde{X}_0|\tilde{X}_0,\tilde{W}_2)  
= H(\tilde{U}_0|\tilde{X}_0,\tilde{W}_2) = H(\tilde{U}_0) =  \frac{1}{2 }\log_2 \Big(2\pi e \cdot P_0\Big),\\
H(\tilde{Y_1},\tilde{W}_2) =&\frac{1}{2 }\log_2 \Big((2\pi e)^2 \cdot v  \big( q (1 - \mu^2) +  P_0 + N\big) \Big),\\
H(\tilde{W}_1,\tilde{Y}_1,\tilde{W}_2) =&\frac{1}{2 }\log_2 \bigg((2\pi e)^3 \cdot v   \Big(P_0 N  + q (1 - \mu^2) \big(P_0  (1-\alpha)^2  + N  \alpha^2\big)\Big)\bigg).
\end{align}
The details of the calculation are in App. \ref{sec:EvaluationEntropyCorrelatedState1} and \ref{sec:EvaluationEntropyCorrelatedState2}. We evaluate the information constraint
\begin{align}
I( \tilde{W}_1;\tilde{Y}_1,\tilde{W}_2) - I( \tilde{W}_1;\tilde{X}_0,\tilde{W}_2) 
=& H(\tilde{Y_1},\tilde{W}_2) - H(\tilde{W}_1,\tilde{Y}_1,\tilde{W}_2) + H(\tilde{W}_1|\tilde{X}_0,\tilde{W}_2)\\
=&\frac12\log_2 \Bigg(\frac{vP_0\big(q(1-\mu^2)+P_0+N\big)}{v\Big(P_0N+q(1-\mu^2)\big((1-\alpha)^2 P_0+ \alpha^2 N\big)\Big)}\Bigg)\\
=&\frac12\log_2 \Bigg(\frac{P_0\big(q(1-\mu^2)+P_0+N\big)}{P_0N+q(1-\mu^2)\big((1-\alpha)^2 P_0+ \alpha^2 N\big)}\Bigg).\label{eq:ICtildeBB}
\end{align}
\subsubsection{Evaluation of the entropy $H(\tilde{Y_1},\tilde{W}_2)$}\label{sec:EvaluationEntropyCorrelatedState1}

\begin{align}
\tilde{Y_1} =& \tilde{X}_0 + \tilde{W}_2 + \tilde{U}_0 + \tilde{Z}.
\end{align}
We have
\begin{align}
\E\Big[\tilde{W}_2^2\Big] =&v,\\
\E\Big[\tilde{Y}_1^2\Big]  =& \E\Big[(\tilde{X}_0 + \tilde{W}_2 + \tilde{U}_0 + \tilde{Z})^2\Big] = \E\Big[(\tilde{X}_0 + \tilde{W}_2)^2\Big]  + \E\Big[\tilde{U}_0^2\Big] + \E\Big[\tilde{Z}^2\Big] \\
=& \E\Big[\tilde{X}_0^2\Big]  + \E\Big[\tilde{W}_2^2\Big]   +2  \E\Big[\tilde{X}_0 \tilde{W}_2\Big]  + \E\Big[\tilde{U}_0^2\Big] + \E\Big[\tilde{Z}^2\Big] \\
=&q  + v   +2  \mu \sqrt{qv}  + P_0 + N\\
\E\Big[\tilde{W}_2 \tilde{Y}_1\Big] =&\E\Big[\tilde{W}_2(\tilde{X}_0 + \tilde{W}_2 + \tilde{U}_0 + \tilde{Z})\Big] =\E\Big[\tilde{W}_2\tilde{X}_0\Big]  + \E\Big[\tilde{W}_2^2\Big] \\
=& v + \mu \sqrt{qv}.
\end{align}

The Gaussian random variables $(\tilde{W}_2,\tilde{Y}_1)\sim \mathcal{N}(0,K)$ have covariance matrix 
\begin{align}
K = 
\begin{pmatrix}
v & v + \mu \sqrt{qv}\\
v + \mu \sqrt{qv} &q  + v   +2  \mu \sqrt{qv}  + P_0 + N\\
\end{pmatrix}.\label{eq:covarianceMatrixProof}
\end{align}

The determinant writes
\begin{align}
\det (K) = & v(q  + v   +2  \mu \sqrt{qv}  + P_0 + N) - (v + \mu \sqrt{qv})^2\\ 
= & v ( q  + v   +2  \mu \sqrt{qv}  + P_0 + N) - (v^2 + \mu^2 qv + 2v\mu \sqrt{qv})\\ 
= & v  \big( q  (1 - \mu^2) +  P_0 + N\big). 
\end{align}

Therefore,
\begin{align}
H(\tilde{Y}_1,\tilde{W}_2) =\frac{1}{2 }\log_2 \Big((2\pi e)^2 \cdot v \big( q (1 - \mu^2) +  P_0 + N\big) \Big).
\end{align}

\subsubsection{Evaluation of the entropy $H(\tilde{W}_1,\tilde{Y}_1,\tilde{W}_2)$}\label{sec:EvaluationEntropyCorrelatedState2}

\begin{align}
\tilde{Y}_1 =& \tilde{X}_0 + \tilde{W}_2 + \tilde{U}_0 + \tilde{Z},\\
\tilde{W}_1=& \tilde{U}_0 + \alpha   \tilde{X}_0,\quad \text{ with } \tilde{U}_0\perp (\tilde{X}_0,\tilde{W}_2),
\end{align}
We have
\begin{align}
\E\Big[\tilde{W}_1^2\Big] =& \E\Big[(\tilde{U}_0 + \alpha   X_0)^2\Big]
=  \E\Big[\tilde{U}_0^2\Big] + \E\Big[(\alpha   \tilde{X}_0 )^2\Big] = P_0 + \alpha^2   q  \\
\E\Big[\tilde{W}_1\tilde{W}_2\Big] =& \E\Big[(\tilde{U}_0 + \alpha   \tilde{X}_0 )\tilde{W}_2\Big] = \alpha    \E\Big[\tilde{X}_0 \tilde{W}_2\Big] =  \alpha    \mu \sqrt{qv},\\
\E\Big[\tilde{W}_1\tilde{Y}_1\Big] =& \E\Big[(\tilde{U}_0 + \alpha \tilde{X}_0 )\cdot(\tilde{X}_0 + \tilde{W}_2 + \tilde{U}_0 + \tilde{Z})\Big]\\
=& \E\Big[\tilde{U}_0 \Big] + \alpha  \E\Big[\tilde{X}_0^2\Big] + \alpha \E\Big[\tilde{X}_0  \tilde{W}_2\Big] = P_0 + \alpha  q + \alpha\mu \sqrt{qv} \\
\E\Big[\tilde{W}_2^2\Big] =&v,\\
\E\Big[\tilde{Y}_1^2\Big]  =& \E\Big[(\tilde{X}_0 + \tilde{W}_2 + \tilde{U}_0 + \tilde{Z})^2\Big] = \E\Big[(\tilde{X}_0 + \tilde{W}_2)^2\Big]  + \E\Big[\tilde{U}_0^2\Big] + \E\Big[\tilde{Z}^2\Big] \\
=& \E\Big[\tilde{X}_0^2\Big]  + \E\Big[\tilde{W}_2^2\Big]   +2  \E\Big[\tilde{X}_0 \tilde{W}_2\Big]  + \E\Big[\tilde{U}_0^2\Big] + \E\Big[\tilde{Z}^2\Big] \\
=&q  + v   +2  \mu \sqrt{qv}  + P_0 + N\\
\E\Big[\tilde{W}_2 \tilde{Y}_1\Big] =&\E\Big[\tilde{W}_2(\tilde{X}_0 + \tilde{W}_2 + \tilde{U}_0 + \tilde{Z})\Big] =\E\Big[\tilde{W}_2\tilde{X}_0\Big]  + \E\Big[\tilde{W}_2^2\Big] \\
=& v + \mu \sqrt{qv}.
\end{align}

The Gaussian random variables $(\tilde{W}_1,\tilde{W}_2,\tilde{Y}_1)\sim \mathcal{N}(0,K)$ have covariance matrix 
\begin{align}
K=
\begin{pmatrix}
P_0 + \alpha^2 q    &   \alpha     \mu \sqrt{qv}& P_0 + \alpha  q + \alpha  \mu \sqrt{qv}  \\
 \alpha     \mu \sqrt{qv}&v & v + \mu \sqrt{qv}\\
P_0 + \alpha  q + \alpha   \mu \sqrt{qv} &v + \mu \sqrt{qv} &q  + v   +2  \mu \sqrt{qv}  + P_0 + N\\
\end{pmatrix}.\label{eq:covarianceMatrixProof}
\end{align}

The determinant writes
\begin{align}
\det (K) = &\Big( P_0 + \alpha^2   q \Big)   v \big( q  (1 - \mu^2) +  P_0 + N\big)\\
-&   \alpha      \mu \sqrt{qv}  \cdot 
\det 
\begin{pmatrix}
  \alpha \mu \sqrt{qv} & P_0 + \alpha  q + \alpha   \mu \sqrt{qv}\\
v + \mu \sqrt{qv} &q  + v   +2  \mu \sqrt{qv}  + P_0 + N\\
\end{pmatrix}\\
+& \Big( P_0 + \alpha  q + \alpha    \mu \sqrt{qv}\Big) \cdot  \det
\begin{pmatrix}
 \alpha     \mu \sqrt{qv} & P_0 + \alpha  q + \alpha \mu \sqrt{qv} \\
v & v + \mu \sqrt{qv}\\
\end{pmatrix}\displaybreak[0]\label{eq:covarianceMatrixProof}\\
 = &\Big( P_0 + \alpha^2  q \Big) \cdot \big(  v  q \cdot(1 - \mu^2) +  v\cdot(P_0 + N)\big)\\
-& \alpha      \mu \sqrt{qv}\cdot \bigg( - \alpha vq(1-\mu^2) + N \alpha\mu \sqrt{vq} + P_0 \cdot \big( (\alpha-1)\mu \sqrt{vq} -v \big) \bigg)\\
+& \Big( P_0 + \alpha  q + \alpha   \mu \sqrt{qv}   \Big) \cdot \Big( - vP_0 - \alpha vq(1-\mu^2)\Big)\displaybreak[0]\label{eq:covarianceMatrixProof}\\
 = & P_0^2 v   - P_0^2 v +   P_0 \cdot \Bigg[N v + v  q (1 - \mu^2) + \alpha^2 v  q  -   \alpha      \mu \sqrt{qv} \big( (\alpha-1)\mu \sqrt{vq} - v \big)\nonumber\\
& -  \alpha vq(1-\mu^2) - v\Big(  \alpha  q + \alpha    \mu \sqrt{qv} \Big)\Bigg] + N \cdot \Bigg[ \alpha^2   q v - \big(\alpha \mu \sqrt{vq} \big)^2\Bigg]\nonumber\\
& +   \alpha ^2   q^2   v (1 - \mu^2) +  \alpha^2      \mu \sqrt{qv} vq(1-\mu^2)  - \big(  \alpha   q + \alpha     \mu \sqrt{qv} \big)  \alpha vq(1-\mu^2)\displaybreak[0]\label{eq:covarianceMatrixProof}\\
 = &  v P_0  \Big[N  + q (1 - \mu^2) (1-\alpha )^2 \Big] + vN  \Big[ q(1 - \mu^2) \alpha ^2 \Big]\\
  = &  v   \Big(P_0 N  + q (1 - \mu^2) \big(P_0  (1-\alpha )^2  + N  \alpha ^2\big)\Big).\label{eq:covarianceMatrixProof}
\end{align}

Hence we have
\begin{align}
H(\tilde{W}_1,\tilde{Y}_1,\tilde{W}_2) =&\frac{1}{2 }\log_2 \bigg((2\pi e)^3 \cdot v   \Big(P_0 N  + q (1 - \mu^2) \big(P_0  (1-\alpha)^2  + N  \alpha^2\big)\Big)\bigg).
\end{align}
This concludes the proof of Lemma \ref{lemma:StateDependent}.

\section{Proof of Proposition \ref{prop:IC_W2sign}}\label{sec:ProofPropW2discrete}

\subsection{Skew Gaussian random variables}
We consider $\rho\in [-1,1]$ and we define the channel input and the two auxiliary random variables by
\begin{align}
U_1 =& \rho \sqrt{\frac{P}{Q}} \cdot X_0 + \tilde{X},\qquad \text{ where } X_0\perp \tilde{X}\sim \mathcal{N}(0,P(1-\rho^2)),\label{eq:U1}\\
W_1 =& \tilde{X} + \frac{P(1-\rho^2)}{P(1-\rho^2)+N}\frac{(\sqrt{Q}+\rho \sqrt{P})}{\sqrt{Q}} \cdot X_0,\label{eq:W1_Costa_W2discrete}\\
W_2 =& \sign(X_1)\in \{-1,1\}.\label{eq:W2}
\end{align}
Equation \eqref{eq:W1_Costa_W2discrete} corresponds to Costa's optimal auxiliary random variable $W_1$, see \cite{Cwodp}, with parameter $\alpha = \frac{P(1-\rho^2)}{(P(1-\rho^2)+N)}$ for the channel state $\frac{(\sqrt{Q}+\rho \sqrt{P})}{\sqrt{Q}} \cdot X_0 \sim \mc{N}(0,(\sqrt{Q}+\rho \sqrt{P})^2)$ and the channel input constraint $\E[\tilde{X}^2]\leq P(1-\rho^2)$. 

\begin{lemma}\label{lemma:Entropy_W2_discrete}
Suppose that $(U_1,W_1,W_2)$ are defined according to \eqref{eq:U1}-\eqref{eq:W2}, for some $\rho\in [-1,1]$. Then
\begin{align}
H(X_0,W_1|W_2)=& \frac12 \log_2\Big((2\pi e)^2\cdot PQ(1-\rho^2)\Big) -1,\\
H(Y|W_2) =& \frac12 \log_2\Bigg((2\pi e)\cdot (T+N)\Bigg) - \Psi\Bigg(\sqrt{\frac{T}{N}}\Bigg),\\
H(Y,W_1|W_2)=& \frac12 \log_2\Bigg((2\pi e)^2\cdot \frac{(T+N)NP(1-\rho^2)}{P(1-\rho^2)+N}\Bigg) - \Psi\Bigg( \sqrt{\frac{T(\sqrt{Q}+\rho\sqrt{P})^2N+P(1-\rho^2)(T+N)^2}{(\sqrt{Q}+\rho\sqrt{P})^2N^2}}\Bigg),
\end{align}
where the entropy reduction function $\Psi:\R \to [0,1]$ is defined by 
\begin{align}
\Psi(\alpha) =&\int2\Phi\big(\alpha \cdot x\big)\cdot\log_2\Big(2\Phi\big(\alpha \cdot x\big)\Big) \frac{1}{\sqrt{2\pi}}\exp\Big(-\frac{x^2}{2}\Big)dx,
\end{align}
and $\Phi(x) = \frac{1}{\sqrt{2\pi}}\int_{-\infty}^{x} \exp\big(-\frac{t^2}{2}\big) dt $, is the Gaussian cumulative distribution function.
\end{lemma}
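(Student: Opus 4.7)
The starting point is that $X_1=(1+\rho\sqrt{P/Q})X_0+\tilde X$ is zero-mean Gaussian with variance
\begin{align*}
\mathrm{Var}(X_1)=(\sqrt Q+\rho\sqrt P)^2+P(1-\rho^2)=P+Q+2\rho\sqrt{PQ}=T,
\end{align*}
so $W_2=\sign(X_1)$ is uniform on $\{-1,+1\}$ and $H(W_2)=1$ bit. For each of the three conditional entropies I would apply
\begin{align*}
H(A\mid W_2)=H(A)-I(A;W_2)=H(A)-1+H(W_2\mid A),
\end{align*}
reducing the task to one Gaussian differential entropy and one binary conditional entropy.

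For $H(X_0,W_1\mid W_2)$, I note that $\tilde X=W_1-\alpha(\sqrt Q+\rho\sqrt P)X_0/\sqrt Q$ is a deterministic function of $(X_0,W_1)$, hence so are $X_1$ and $W_2$. Therefore $H(W_2\mid X_0,W_1)=0$, and $X_0\perp\tilde X$ gives $H(X_0,W_1)=H(X_0)+H(\tilde X)=\tfrac12\log_2((2\pi e)^2 PQ(1-\rho^2))$, which yields the claim after subtracting $1$.

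For $H(Y\mid W_2)$ and $H(Y,W_1\mid W_2)$ I proceed uniformly. The unconditional Gaussian entropies are $H(Y)=\tfrac12\log_2((2\pi e)(T+N))$ and, with $P'=P(1-\rho^2)$ and $s^2=(\sqrt Q+\rho\sqrt P)^2$, Costa's choice $\alpha=P'/(P'+N)$ makes the determinant of the $(Y,W_1)$ covariance collapse to $P'N(T+N)/(P'+N)$, giving the stated $H(Y,W_1)$. For the binary part, since $X_1$ conditional on the Gaussian observation is Gaussian with mean $\mu$ (linear in the observation) and deterministic variance $\sigma^2$, the posterior $\mathbb{P}(W_2=1\mid \text{obs})=\Phi(\mu/\sigma)$ is a $\Phi$-biased coin. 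Using $\mu\sim\mathcal N(0,\mathrm{Var}(\mu))$, the symmetry $\phi(-x)=\phi(x)$, and the identity $\Psi(\alpha)=1+2\int\Phi(\alpha x)\log_2\Phi(\alpha x)\phi(x)\,dx$ (which follows from $\int\Phi(\alpha x)\phi(x)\,dx=1/2$), the expected binary entropy simplifies to $1-\Psi(\beta)$ with $\beta=\sqrt{\mathrm{Var}(\mu)/\sigma^2}$. In the $Y$-only case a direct computation gives $\beta=\sqrt{T/N}$, matching the lemma.

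The main algebraic step is to identify $\beta$ in the $(Y,W_1)$ case. I plan to compute $\sigma^2$ as a Schur complement of the $(Y,W_1)$ covariance; writing $c=\alpha s^2+P'=P'(T+N)/(P'+N)$ and $c_2=\alpha^2 s^2+P'$, and using $T=s^2+P'$ together with $1-\alpha=N/(P'+N)$, the numerator $Tc_2-c^2=(s^2+P')(\alpha^2 s^2+P')-(\alpha s^2+P')^2$ collapses to $s^2 P'(1-\alpha)^2=s^2P'N^2/(P'+N)^2$, and a short cancellation (the $s^4$ and $P'^2$ terms vanish) yields $\sigma^2=N^2 s^2/[(P'+N)(T+N)]$. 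Then $\mathrm{Var}(\mu)=T-\sigma^2$, and a calculation using $T-P'=s^2$ gives
\begin{align*}
\frac{\mathrm{Var}(\mu)}{\sigma^2}=\frac{Ts^2N+P'(T+N)^2}{s^2 N^2},
\end{align*}
which is exactly the argument of $\Psi$ in the statement.
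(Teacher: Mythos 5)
Your proposal is correct, and for the two entropies involving $Y$ it takes a genuinely different route from the paper. The first identity is handled identically in both: $W_2$ is a deterministic function of $(X_0,W_1)$, so $H(X_0,W_1|W_2)=H(X_0,W_1)-H(W_2)=H(X_0,W_1)-1$, and the joint entropy is computed either via the determinant $PQ(1-\rho^2)$ (paper) or via the shear $(X_0,\tilde X)\mapsto(X_0,W_1)$ (you) — the same thing. For $H(Y|W_2)$ and $H(Y,W_1|W_2)$, however, the paper identifies the conditional laws given $\{X_1\ge 0\}$ as (bivariate) skew-normal distributions, reads off the skewness parameters $\delta_{Y_1}=\sqrt{T/N}$ and $\delta_{Y_1W_1}$, and then cites a known entropy formula for multivariate skew-elliptical distributions (Arellano-Valle et al., Prop.~3) to obtain the $-\Psi(\delta)$ correction. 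You instead derive the correction from first principles: $H(A|W_2)=H(A)-1+H(W_2|A)$, the posterior $\mathbb{P}(W_2{=}1|A)=\Phi(\mu/\sigma)$ is a probit-biased coin, and the expected binary entropy collapses to $1-\Psi(\beta)$ with $\beta^2=\mathrm{Var}(\E[X_1|A])/\mathrm{Var}(X_1|A)$, computed by Schur complements. I checked your algebra: $Tc_2-c^2=s^2P'(1-\alpha)^2$, $\sigma^2=N^2s^2/[(P'+N)(T+N)]$, and $(T-\sigma^2)/\sigma^2=\bigl(Ts^2N+P'(T+N)^2\bigr)/(s^2N^2)$ all hold, and they recover exactly the paper's skewness parameters (indeed your $\beta$ coincides with the skew-normal $\delta$). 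Your approach buys self-containedness — it avoids the external skew-normal entropy result and makes transparent that $\Psi$ measures the entropy deficit of the sign bit given the observation — at the cost of slightly more hands-on computation; the paper's route is shorter but leans on the cited proposition.
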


The proof of Lemma \ref{lemma:Entropy_W2_discrete} is stated in App.~\ref{sec:proof:lemma:Entropy_W2_discrete}. According to Lemma \ref{lemma:Entropy_W2_discrete}, the information constraint writes
\begin{align}
&I(W_1;Y,W_2) - I(W_1;X_0,W_2) - I(X_0;W_2) \\
=& H(X_0,W_1|W_2) - H(Y,W_1|W_2)+H(Y|W_2)  - H(X_0)\\
=&  \frac12 \log_2\Bigg(1+\frac{P(1-\rho^2)}{N}\Bigg) -1  -  \Psi\Bigg(\sqrt{\frac{T}{N}}\Bigg)+ \Psi\Bigg( \sqrt{\frac{T(\sqrt{Q}+\rho\sqrt{P})^2N+P(1-\rho^2)(T+N)^2}{(\sqrt{Q}+\rho\sqrt{P})^2N^2}}\Bigg).
\end{align}

The random variable $X_1\sim \mc{N}(0,T)$ is Gaussian centred, thus $\mathbb{P}\{X_1\geq0\}=\frac12$. The probability density function of the skew Gaussian distributions writes
\begin{align}
f(y_1|X_1\geq0) =& \frac{2}{\sqrt{T+N}} \cdot \Phi\bigg( y_1 \cdot \sqrt{\frac{ T}{N( T+N)}} \bigg) \cdot \phi\bigg( \frac{y_1}{\sqrt{T+N}}\bigg),\qquad \forall y_1\in \R,\\
f(x_1|y_1,X_1\geq0) =&\frac{1}{\sqrt{ \frac{T N}{T+N}}}\frac{\phi\Big( \frac{x_1-y_1 \frac{T}{T+N} }{\sqrt{ \frac{T N}{T+N}}} \Big)}{\Phi\Big(y_1 \cdot \sqrt{\frac{ T}{N( T+N)}}\Big)},\qquad\forall x_1\geq0, \; \forall y_1\in \R.
 \end{align}
The conditional variance of a skew Gaussian distribution writes
\begin{align}
&  \E\Big[\Big(X_1 - \E[X_1|Y_1=y_1,X_1\geq0]\Big)^2\Big|Y_1=y_1,X_1\geq0\Big]\\
=& \frac{T N}{T + N} \cdot \Bigg(1 - \frac{ y_1 \cdot \sqrt{\frac{ T}{N( T+N)}}\phi\Big( y_1 \cdot \sqrt{\frac{ T}{N( T+N)}}\Big) }{\Phi\Big( y_1 \cdot \sqrt{\frac{ T}{N( T+N)}}\Big)} - \bigg(\frac{\phi\Big( y_1 \cdot \sqrt{\frac{ T}{N( T+N)}}\Big) }{  \Phi\Big(y_1 \cdot \sqrt{\frac{ T}{N( T+N)}}\Big)}\bigg)^2\Bigg).
\end{align}
By symmetry we have
\begin{align}
&\E\Big[\Big(X_1 - \E[X_1|W_2,Y_1]\Big)^2\Big]\\
=& \mathbb{P}\{W_2=1\}\cdot \E\Big[\Big(X_1 - \E[X_1|W_2,Y_1=1]\Big)^2\Big|W_2=1\Big]\\ 
&+ \mathbb{P}\{W_2=-1\}\cdot\E\Big[\Big(X_1 - \E[X_1|W_2,Y_1=-1]\Big)^2\Big|W_2=-1\Big] \\
=& \E\Big[\Big(X_1 - \E[X_1|Y_1,X_1\geq0]\Big)^2\Big|X_1\geq0\Big]\\ 
=&\int \E\Big[\Big(X_1 - \E[X_1|Y_1=y_1,X_1\geq0]\Big)^2\Big|Y_1=y_1,X_1\geq0\Big]f(y_1|X_1\geq0) dy_1\\ 
=& \int \frac{T N}{T + N} \cdot \Bigg(1 - \frac{ y_1 \cdot \sqrt{\frac{ T}{N( T+N)}}\phi\Big( y_1 \cdot \sqrt{\frac{ T}{N( T+N)}}\Big) }{\Phi\Big( y_1 \cdot \sqrt{\frac{ T}{N( T+N)}}\Big)} - \bigg(\frac{\phi\Big( y_1 \cdot \sqrt{\frac{ T}{N( T+N)}}\Big) }{  \Phi\Big(y_1 \cdot \sqrt{\frac{ T}{N( T+N)}}\Big)}\bigg)^2\Bigg) \\
&\times \frac{2}{\sqrt{T+N}} \cdot \Phi\bigg(y_1 \cdot \sqrt{\frac{ T}{N( T+N)}}\bigg) \cdot \phi\bigg( \frac{y_1}{\sqrt{T+N}}\bigg)  dy_1\\
=& \frac{T N}{T + N} \cdot \Bigg(1 -  \frac{2}{\sqrt{T+N}} \cdot \frac{1}{2\pi} \int\frac{  \phi\Big( y_1 \cdot \sqrt{\frac{2 T+N}{N( T+N)}}\Big) }{  \Phi\Big(y_1 \cdot \sqrt{\frac{T}{N( T+N)}}\Big)}  dy_1  \Bigg),
\end{align}
where the first integral is equal to zero. This concludes the proof of Proposition \ref{prop:IC_W2sign}.


\subsection{Proof of Lemma \ref{lemma:Entropy_W2_discrete}}\label{sec:proof:lemma:Entropy_W2_discrete}

\subsubsection{Evaluation of $H(X_0,W_1|W_2)$}\label{sec:EvaluationEntropy_W2_discreteHX0W1W2}

The interim state $X_1$ is a linear combination of $W_1$ and $X_0$,
\begin{align}
X_1 =& U_1+ X_0 = \tilde{X} + \frac{(\sqrt{Q}+\rho \sqrt{P})}{\sqrt{Q}} \cdot X_0 = W_1 + \frac{N}{(P(1-\rho^2)+N)}\frac{(\sqrt{Q}+\rho \sqrt{P})}{\sqrt{Q}} \cdot X_0.
\end{align}
Since $W_2=\sign(X_1)$, we have $H(W_2|X_0,W_1)=0$ and $H(W_2)=1$.
\begin{align}
H(X_0,W_1|W_2) 
=& H(X_0,W_1) + H(W_2|X_0,W_1) - H(W_2)\\
=& H(X_0,W_1)-1\\
=& \frac12 \log_2\Big((2\pi e)^2\cdot PQ(1-\rho^2)\Big) -1.
\end{align}
Indeed, the determinant of the covariance matrix  of $(X_0,W_1)\sim\mathcal{N}(0,K_{X_0W_1})$ satisfies 
\begin{align}
\det(K_{X_0W_1})
=
\begin{vmatrix}
Q & \frac{P(1-\rho^2)}{P(1-\rho^2)+N}\frac{(\sqrt{Q}+\rho \sqrt{P})}{\sqrt{Q}} Q\\
\frac{P(1-\rho^2)}{P(1-\rho^2)+N}\frac{(\sqrt{Q}+\rho \sqrt{P})}{\sqrt{Q}} Q &
P(1-\rho^2)+\bigg(\frac{P(1-\rho^2)}{P(1-\rho^2)+N}\frac{(\sqrt{Q}+\rho \sqrt{P})}{\sqrt{Q}}\bigg)^2 Q\\
\end{vmatrix} = PQ(1-\rho^2).\label{eq:CovarianceMatrixKX0W1}
\end{align}

\subsubsection{Evaluation of $H(Y_1,W_1|W_2)$ and $H(Y_1|W_2)$}\label{sec:EvaluationEntropy_W2_discreteHY1W1W2_HY1W2}

By using the change of variable $T = P + Q+ 2\rho \sqrt{PQ}$, the covariance matrix $K_{X_1Y_1W_1}$ of the random variables $(X_1,Y_1,W_1)\sim\mathcal{N}(0,K_{X_1Y_1W_1})$ is given by
\begin{align}
K_{X_1Y_1W_1} = 
\begin{pmatrix}
T & T &  \frac{P(1-\rho^2)(T+N)}{P(1-\rho^2)+N}\\
T & T+N & \frac{P(1-\rho^2)(T+N)}{P(1-\rho^2)+N} \\
 \frac{P(1-\rho^2)(T+N)}{P(1-\rho^2)+N} &  \frac{P(1-\rho^2)(T+N)}{P(1-\rho^2)+N}&P(1-\rho^2)+ \frac{(P(1-\rho^2))^2(\sqrt{Q}+\rho \sqrt{P})^2}{(P(1-\rho^2)+N)^2}\\
\end{pmatrix}.\label{eq:CovarianceMatrixX1Y1W1}
\end{align}

Knowing that $W_2=1$ $\Leftrightarrow$ $X_1\geq0$, the random variables $(Y_1,W_1)\sim \mathcal{SN}(0,K_{Y_1W_1},\delta_{Y_1W_1})$ are bi-variate skew Gaussian random variables with skewness value and determinant of covariance matrix $K_{Y_1W_1}$ of $(Y_1,W_1)\sim\mathcal{N}(0,K_{Y_1W_1})$ given by
\begin{align}
\delta_{Y_1}=& \sqrt{\frac{T}{N}},\\
\delta_{Y_1W_1}=&  \sqrt{\frac{T (\sqrt{Q}+\rho \sqrt{P})^2N + P(1-\rho^2)(T +N)^2}{(\sqrt{Q}+\rho \sqrt{P})^2N^2}},\\
\det(K_{Y_1W_1}) =& \frac{(T+N)NP(1-\rho^2)}{P(1-\rho^2)+N}.
\end{align}
According to \cite[Prop. 3, pp 49]{ACG13semi}, the conditional entropy writes 
\begin{align}
H(Y_1,W_1|W_2=1) =& \frac12 \log_2\Bigg((2\pi e)^2\cdot \frac{(T+N)NP(1-\rho^2)}{P(1-\rho^2)+N}\Bigg) \nonumber\\
&- \Psi\Bigg( \sqrt{\frac{T(\sqrt{Q}+\rho\sqrt{P})^2N+P(1-\rho^2)(T+N)^2}{(\sqrt{Q}+\rho\sqrt{P})^2N^2}}\Bigg),\\
H(Y_1|W_2=1) =& \frac12 \log_2\Bigg((2\pi e)\cdot (T+N)\Bigg) - \Psi\Bigg(\sqrt{\frac{T}{N}}\Bigg).
\end{align}
 By symmetry, we obtain the same expressions for $H(Y_1,W_1|W_2=-1)$ and $H(Y_1|W_2=-1)$, thus also for $H(Y_1,W_1|W_2)$ and $H(Y_1|W_2)$.

\section{Proof of Prop. \ref{prop:TwoPointStrategy}}\label{sec:ProofPropTwoPoint}

\subsection{Power Cost}
Depending on the parameter $a\geq0$, the power cost is given by
\begin{align}
P_{\mathsf{two}}(a) = &\E \big[ U_1^2\big] = \E \Big[ \Big(a \cdot \textrm{sign}\big(X_0\big) - X_0 \Big)^2\Big]\\
=& a^2 + Q - 2 a \E \Big[|X_0|\Big] \\
=&Q + a\Big(a-2\sqrt{\frac{2Q}{\pi}}\Big).
\end{align}

\subsection{MMSE Cost}

The random variable $X_1\in \{-a,a\}$ is uniformly distributed and the channel is $Y_1=X_1+Z$ where $X_1\perp Z\sim\mc{N}(0,N)$. Therefore, the joint PDF of $(X_1,Y_1)$ writes for all $(x_1,y_1)\in \R^2$
\begin{align}
f(x_1,y_1) =& \frac12 \frac{1}{\sqrt{N}} \phi\bigg(\frac{y_1 - x_1}{\sqrt{N}}\bigg).
\end{align}
Moreover,
\begin{align}
f(x_1|y) =&\frac{f(x_1,y)}{f(a,y)+f(-a,y)}=  \frac{\phi\big(\frac{y - x_1}{\sqrt{N}}\big)}{\phi\big(\frac{y - a}{\sqrt{N}}\big)+ \phi\big(\frac{y +a}{\sqrt{N}}\big)},\\
f(y_1) =&  \frac12 \frac{1}{\sqrt{N}} \phi\bigg(\frac{y - a}{\sqrt{N}}\bigg)+  \frac12 \frac{1}{\sqrt{N}} \phi\bigg(\frac{y +a}{\sqrt{N}}\bigg)\\
=&  \frac{1}{\sqrt{N}}  \frac{1}{\sqrt{2 \pi}}\exp\bigg(-\frac{y_1^2+a^2}{2N}\bigg)\Bigg( \frac12\exp\bigg(\frac{ay_1}{N} \bigg)+ \frac12\exp\bigg(-\frac{ay_1}{N} \bigg)\Bigg)\\
=&  \sqrt{\frac{2\pi}{N}}  \phi\bigg(\frac{a}{\sqrt{N}}\bigg)\phi\bigg(\frac{y_1}{\sqrt{N}}\bigg) \cosh\bigg(\frac{ay_1}{N} \bigg).
\end{align}
and
\begin{align}
\E[X_1|Y_1=y_1] =& a \frac{\phi\big(\frac{y_1 - a}{\sqrt{N}}\big)- \phi\big(\frac{y_1 +a}{\sqrt{N}}\big)}{\phi\big(\frac{y_1 - a}{\sqrt{N}}\big)+ \phi\big(\frac{y_1 +a}{\sqrt{N}}\big)}\\
=& a \frac{\exp\Big(\frac{a y_1 }{N}\Big)- \exp\Big(-\frac{a y_1 }{N}\Big)}{\exp\Big(\frac{a y_1 }{N}\Big)+ \exp\Big(-\frac{a y_1 }{N}\Big)}\\
=&a \tanh\bigg(\frac{a y_1 }{N}\bigg).
\end{align}
Therefore, 
\begin{align}
&\E\Big[\big(X_1-\E[X_1|Y_1]\big)^2\Big] \\
=& \int f(x_1,y_1)\bigg(x_1 - a \tanh\bigg(\frac{a y_1 }{N}\bigg)\bigg)^2 dy_1dx_1\\
=& a^2+ a^2\int f(y_1)\bigg(\tanh\Big(\frac{a y_1 }{N}\Big)\bigg)^2dy_1
- 2a \int f(y_1)  \tanh\bigg(\frac{a y_1 }{N}\bigg)\E[X_1|Y_1=y_1] dy_1\\
=& a^2- a^2\int f(y_1) \bigg(\tanh\Big(\frac{a y_1 }{N}\Big)\bigg)^2dy_1\\
=&a^2\int f(y_1) \bigg( \frac{1}{\cosh\big(\frac{a y_1 }{N}\big)}\bigg)^2dy_1\\
=&a^2\sqrt{\frac{2\pi}{N}}  \phi\bigg(\frac{a}{\sqrt{N}}\bigg)\int  \frac{\phi\big(\frac{y_1}{\sqrt{N}}\big)}{\cosh\big(\frac{a y_1 }{N}\big)}dy_1.
\end{align}

\section{Proof of Prop. \ref{prop:DPC}}\label{sec:ProofPropDPC}

We define the function $g(\alpha)$ and we compute its derivative $g'(\alpha)$.
\begin{align}
g(\alpha) = &\frac{NPQ(1-\alpha)^2 }{PQ(1-\alpha)^2 + N(P+\alpha^2 Q)},\\
g'(\alpha) =&\frac{N^2 PQ 2(\alpha -1) (P + \alpha Q) }{\big(PQ(1-\alpha)^2 + N(P+\alpha^2 Q)\big)^2} .
\end{align}
Therefore, the function $g(\alpha)$ is increasing over the interval $]-\infty,-\frac{P}{Q}]$, decreasing over the interval $[-\frac{P}{Q},1]$ where it is equal to zero, and increasing over the interval $[1,+\infty[$. Therefore, the solution to the optimization problem \eqref{eq:Optimization_DPC} is the largest $\alpha^{\star}\in [0,1]$ such that 
\begin{align}
&P(P+Q+N)-  PQ(1-\alpha)^2 - N(P+\alpha^2 Q) \geq 0 \label{eq:trinome_alpha}\\
\Longleftrightarrow &P^2 + 2 \alpha PQ  - \alpha^2 Q (P+N ) \geq 0.
\end{align}
This equation has two solutions,
\begin{align}
\alpha_1 &= \frac{-2PQ-2P\sqrt{Q(P+Q+N)}}{-2Q(P+N)}= \frac{P(\sqrt{Q}+\sqrt{P+Q+N})}{\sqrt{Q}(P+N)}>0,\\
\alpha_2 &= \frac{-2PQ+2P\sqrt{Q(P+Q+N)}}{-2Q(P+N)}= \frac{P(\sqrt{Q}-\sqrt{P+Q+N})}{\sqrt{Q}(P+N)}<0.
\end{align}
The optimal solution to the problem \eqref{eq:Optimization_DPC} is 
\begin{align}
\alpha^{\star} = \min\Bigg(1,\frac{P(\sqrt{Q}+\sqrt{P+Q+N})}{\sqrt{Q}(P+N)}\Bigg).
\end{align}
Note that 
\begin{align}
\frac{P(\sqrt{Q}+\sqrt{P+Q+N})}{\sqrt{Q}(P+N)} \geq 1 
\Longleftrightarrow  P^2(P+Q+N)  \geq Q N^2.\label{eq:function_P}
\end{align}
We denote by $P^{\star}$ the unique positive solution of $P^2(P+Q+N) - Q N^2=0$. Note that if $P> P^{\star}$, then $\alpha^{\star}=1$ and the receiver retrieves $W = U_1 + X_0 = X_1$, which implies $\mathsf{MMSE}_{\mathsf{dpc}}(P) =0$. Suppose that $P\leq P^{\star}$, then
\begin{align}
\mathsf{MMSE}_{\mathsf{dpc}}(P)  = & g(\alpha_1) = \frac{NPQ(1-\alpha_1)^2 }{PQ(1-\alpha_1)^2 + N(P+\alpha_1^2 Q)}\\
=& \frac{N\big( N\sqrt{Q} - P\sqrt{P+Q+N}\big)^2}{(P+N)^2(P+Q+N)}.
\end{align}

\begin{remark}
In \cite[App.~D.7, Eq. (72)]{GS10wcaa}, the optimization problem of \eqref{eq:Optimization_DPC} is performed in function of parameter $P\geq0$, for each $\alpha\in \R$. In particular, for all $\alpha\in \R$, the parameter $P\geq0$  that satisfies  $P(P+Q+N)= PQ(1-\alpha)^2 + N(P+\alpha^2 Q)$, is given by
\begin{align}
P^{\circ} = &\frac{Q\alpha(2-\alpha)}{2}\bigg(\sqrt{1+\frac{4N}{Q(2-\alpha)^2}}-1\bigg).
\end{align}
We believe there is a typo in \cite[Eq. (71)]{GS10wcaa}, where 
\begin{align}
P = \frac{\sqrt{Q\alpha(2-\alpha)}}{2}\bigg(\sqrt{1+\frac{4N}{Q(2-\alpha)^2}}-1\bigg).
\end{align}
\end{remark}



\begin{thebibliography}{1}






  \bibitem{MO18ocdv}
M.~Le~Treust and T.~J.~Oechtering, ``{Optimal Control Designs for Vector-valued Witsenhausen Counterexample Setups},''
  \emph{IEEE 56th Annual Allerton Conf. on Commun., Control, and Comp.}, Sept. 2018.


  
    \bibitem{OM19ccwc}
 T.~J.~Oechtering and M.~Le~Treust, ``{Coordination Coding with Causal Decoder for Vector-valued Witsenhausen Counterexample Setups},''
  \emph{IEEE Information Theory Workshop}, Aug. 2019.

  \bibitem{MO21crve}
M.~Le~Treust and T.~J.~Oechtering, ``{Continuous Random Variable Estimation is not Optimal for the Witsenhausen Counterexample},''
  \emph{IEEE International Symposium on Information Theory}, July. 2021.
  

\bibitem{W68acis}
H.~Witsenhausen, ``A counterexample in stochastic optimum control,'' \emph{SIAM Journal on Control}, vol.~6, no.~1, pp. 131--147, 1968.

\bibitem{YB2013sncs}
S.~Yuksel and T.~Basar, \emph{{Stochastic Networked Control Systems: Stabilization and Optimization under Information Constraints}}, ser. Systems
  \& Control Foundations \& Applications.\hskip 1em plus 0.5em minus
  0.4em\relax New York, NY: Springer, 2013.

\bibitem{GYBL15oteo}
A.~Gupta, S.~Y\"{u}ksel, T.~Ba\c{s}ar and C.~Langbort, ``On the existence of optimal policies for a class of static and sequential dynamic teams,'' \emph{SIAM Journal on Control and Optimization,} vol.~53, no.~3, pp.~1681--1712,
2015. 

\bibitem{TT17alsa}
S.-H.~Tseng and A.~Tang, ``A Local Search Algorithm for the Witsenhausen’s Counterexample,'' in \emph{Proc. IEEE CDC}, 2017.

\bibitem{KGOS11iscc}
J.~Karlsson, A.~Gattami, T.~J. Oechtering, and M.~Skoglund, ``Iterative
  source-channel coding approach to Witsenhausen's counterexample,'' in
  \emph{Proceedings of the 2011 American Control Conference}, June 2011, pp.
  5348--5353.

\bibitem{SYL17fmaa}
N.~Saldi, S.~Y\"{u}ksel and T.~Linder, ``Finite model approximations and asymptotic optimality of quantized policies in decentralized stochastic control,'' in \emph{Proc. IEEE CDC,} 2016.

\bibitem{WV11wcav} 
Y.~Wu and S.~Verd\'{u}, ``Witsenhausen’s counterexample: A view from optimal transport theory,'' in \emph{Proc. IEEE CDC,} 2011.

\bibitem{MH15ofam}
W.~M.~McEneaney and S.~H.~Han, ``Optimization formulation and monotonic solution method for the Witsenhausen problem,'' \emph{Automatica,} 2015.

\bibitem{KC15aoat}
A.~A.~Kulkarni and T.~P.~Coleman, ``An Optimizer's Approach to Stochastic Control Problems With Nonclassical Information Structures,'' \emph{IEEE Transactions on Automatic Control,} vol. 60, no. 4, pp. 937--949, April 2015.

\bibitem{BB87stwn}
R.~Bansal and T.~Basar, ``Stochastic teams with nonclassical information revisited: When is affine law optimal,'' \emph{IEEE Transactions on Automatic Control,} vol. 32, no. 6, pp. 554--559, June 1987.


\bibitem{GS10wcaa}
P.~Grover  and  A.~Sahai, ``Witsenhausen's  counterexample  as  assisted interference  suppression,'' in \emph{International  Journal  of Systems,  Control  and Communications},  vol.  2, nos.  1-3,  pp.  197--237,  2010.



\bibitem{GPS13aost}
P.~Grover, S.~Y.~Park and A.~Sahai, ``Approximately optimal solutions to the
  finite-dimensional Witsenhausen counterexample,'' \emph{IEEE Transactions on
  Automatic Control}, vol.~58, no.~9, pp. 2189--2204, Sept 2013.

\bibitem{GWS15ieat}
P.~Grover, A.~B.~Wagner and A.~Sahai, ``Information embedding and the triple role of control,'' \emph{IEEE Trans. on Inform. Theory}, vol.~61, no.~4, pp. 1539--1549, April 2015.

\bibitem{GP80cfcw}
S.~I.~Gel'fand  and  M.~S.~Pinsker,  ``Coding  for  channel  with  random parameters,'' \emph{Problems  of Control  and Information  Theory,}  vol.  9,  no.  1,  pp.  19--31, 1980.

\bibitem{KSC08sa}
Y.-H.~Kim, A.~Sutivong and T.~M.~Cover, ``State amplification,'' \emph{IEEE Trans. on Inform. Theory}, vol.~54, no.~5, pp. 1850--1859, May 2008.

\bibitem{SS09iewr}
O.~Sumszyk  and  Y.~Steinberg,  ``Information  embedding  with reversible  stegotext,'' in \emph{Proc.  IEEE  Int.  Symp.  Inf.  Theory},  Seoul,  Korea, Jun./Jul.  2009,  pp.  2728--2732.

\bibitem{SCCK05ccas}
A.~Sutivong, M.~Chiang, T.~M.~Cover and Y.-H.~Kim, ``Channel capacity and state estimation for state-dependent Gaussian channels,'' \emph{IEEE Trans. on Inform. Theory}, vol.~51, no.~4, pp. 1486--1495, April 2005.


\bibitem{EK11nit}
A.~El~Gamal and Y.-H.~Kim, \emph{Network Information Theory}.\hskip 1em plus 0.5em minus 0.4em\relax Cambridge
  University Press, 2011.

\bibitem{MBS17ipbf}
S.~Molavipour, G.~Bassi and M.~Skoglund, ``Improved performance bounds for the
  infinite-dimensional witsenhausen problem,'' in \emph{2017 American Control
  Conference (ACC)}, May 2017, pp. 1067--1072.

\bibitem{CM12owct}
C.~Choudhuri and U.~Mitra, ``On Witsenhausen's counterexample: the asymptotic vector case,'' \emph{2012 IEEE Information Theory Workshop (ITW)}, pp. 162--166, July 2012.



\bibitem{Cwodp}
M.~H.~M.~Costa, ``Writing on dirty paper,'' \emph{IEEE Transactions on Information Theory}, vol.~29, no.~3, pp. 439--441, 1983.

\bibitem{CKM13csc}
C.~Choudhuri, Y.-H.~Kim and U.~Mitra, ``Causal state communication,'' \emph{IEEE Transactions on Information Theory}, vol.~59, no.~6, pp. 3709--3719, June 2013.

\bibitem{CSW13ewah}
Y.~K.~Chia, R.~Soundararajan and T.~Weissman, ``Estimation With a Helper Who Knows the Interference,'' \emph{IEEE Transactions on Information Theory}, vol.~59, no.~11, pp. 7097--7117, Nov 2013.

\bibitem{CZ2011cuic}
P.~Cuff and L.~Zhao, ``Coordination using implicit communication,'' in
  \emph{2011 IEEE Information Theory Workshop}, Oct 2011, pp. 467--471.

\bibitem{R13epts}
M.~Raginsky, ``Empirical processes, typical sequences, and coordinated actions
  in standard borel spaces,'' \emph{IEEE Transactions on Information Theory},
  vol.~59, no.~3, pp. 1288--1301, March 2013.

\bibitem{GHN06ouoc}
O.~Gossner, P.~HernÃ¡ndez, and A.~Neyman, ``Optimal use of communication
  resources,'' \emph{Econometrica}, vol.~74, no.~6, pp. 1603--1636, 2006.

\bibitem{CPC10cc}
P.~W.~Cuff, H.~H.~Permuter and T.~M.~Cover, ``Coordination capacity,''
  \emph{IEEE Transactions on Information Theory}, vol.~56, no.~9, pp.
  4181--4206, Sept 2010.

\bibitem{CS2011hcnf}
P.~Cuff and C.~Schieler, ``Hybrid codes needed for coordination over the point-to-point channel,'' in
  \emph{IEEE 49th Annual Allerton Conference on Communication, Control, and Computing}, Sept. 2011, pp. 235--239.
  

\bibitem{MT14cbcs}
M.~Le~Treust, ``Correlation between channel state and information source with
  empirical coordination constraint,'' in \emph{IEEE Information Theory
  Workshop (ITW)}, 2014, pp. 272--276,.

\bibitem{MT15ecwt}
M.~Le~Treust, ``Empirical coordination with two-sided state information and
  correlated source and state,'' in \emph{IEEE International Symposium on
  Information Theory}, 2015, pp. 466--470,.
  
  \bibitem{MT15ctec}
M.~Le~Treust, ``{Coding theorems for empirical coordination},''
  \emph{Technical report}, DOI: 10.25327/etis.2018.001, [on-line] 
  https://hal.archives-ouvertes.fr/hal-01865569/document, 
  Apr. 2015.

  




\bibitem{LLW2015cisd}
B.~Larrousse, S.~Lasaulce and M.~Wigger, ``Coordination in state-dependent distributed networks: The two-agent case,'' in \emph{2015 IEEE International Symposium on Information Theory (ISIT)}, June 2015, pp. 979--983.


\bibitem{LLW2015cpia}
B.~Larrousse, S.~Lasaulce and M.~Wigger, ``Coordinating partially-informed agents over state-dependent networks,'' in \emph{2015 IEEE Information Theory Workshop (ITW)}, April 2015, pp. 1--5.

\bibitem{MT17jeco}
M.~Le~Treust, ``{Joint Empirical Coordination of Source and Channel},''
  \emph{IEEE Transactions on Information Theory}, vol.~63, no.~8, pp. 5087--5114, Aug. 2017.

\bibitem{LAL2014icit}
B.~Larrousse, A.~Agrawal and S.~Lasaulce, ``Implicit coordination in two-agent
  team problems; application to distributed power allocation,'' in \emph{WNC3
  2014: Int. Workshop on Wireless Networks: Communication, Cooperation and
  Competition}, 2014, pp. 579--584.

\bibitem{LLB18cidn}
B.~Larrousse, S.~Lasaulce and M.~R. Bloch, ``Coordination in distributed
  networks via coded actions with application to power control,'' \emph{IEEE Transactions on Information Theory}, vol.~64, no.~5, pp. 3633--3654, May
  2018.

\bibitem{ADLL15icit}
A.~Agrawal, F.~Danard, B.~Larrousse and S.~Lasaulce, ``Implicit coordination
  in two-agent team problems with continuous action sets. application to the
  witsenhausen cost function,'' in \emph{2015 European Control Conference
  (ECC)}, July 2015, pp. 1854--1859.

\bibitem{VOS18hiwp}
M.~T. Vu, T.~J. Oechtering, and M.~Skoglund, ``{Hierarchical Identification with Pre-processing},'' 
\emph{IEEE Trans. on Inform. Theory}, vol.~66, no.~1, pp. 82--113, 2020.

\bibitem{W78trdf}
A.~D.~Wyner, ``{The rate-distortion function for source coding with side information at the decoder-ii: General sources},'' \emph{Inform. \& control,}  1978.


\bibitem{Pinsker}
M. S. Pinsker, \emph{Information and information stability of random variables and processes}. Holden Day, San Francisco, 1964. Translated by A. Feinstein from
the Russian edition published in 1960 by Izd. Akad. Nauk. SSSR.

 
  \bibitem{CT06eoit}
T.~M. Cover and J.~A. Thomas, \emph{Elements of Information Theory},
  2nd~ed.\hskip 1em plus 0.5em minus 0.4em\relax Wiley \& Sons, 2006.

\bibitem{T78msc}
S. Y. Tung, \emph{Multiterminal source coding}. Ph. D. thesis, School of Electrical Engineering, Cornell University, 1978.


\bibitem{BB11pls}
M.~Bloch and J.~Barros, \emph{Physical-layer Security: From Information Theory
  to Security Engineering}.\hskip 1em plus 0.5em minus 0.4em\relax Cambridge
  University Press, 2011.


\bibitem{ACG13semi}
R.~B.~Arellano-Valle and J.~E.~Contreras-Reyes and M.~G.~Genton, ``{Shannon Entropy and Mutual Information for Multivariate Skew-Elliptical Distributions},'' \emph{Scandinavian Journal of Statistics}, vol.~40, no.~1, pp. 42--62, 2013.




 







\end{thebibliography}
\end{document}